\def\blfootnote{\gdef\@thefnmark{}\@footnotetext}
\begin{document}

\title{Quantitative Group Testing and Pooled Data \\ in the Linear Regime with Sublinear Tests}

\author{Nelvin Tan, Pablo {Pascual Cobo}, and Ramji Venkataramanan}

%\allowdisplaybreaks
\date{}
\maketitle

\begin{abstract}
    In the \textit{pooled data} problem, the goal is to identify the categories associated with a large collection of items via a sequence of pooled tests. Each pooled test reveals the number of items in the pool belonging to each category. A prominent special case is quantitative group testing (QGT), which is the case of pooled data with two categories. We consider these problems in the non-adaptive and linear regime, where the fraction of items in each category is of constant order. We propose a scheme with a \textit{spatially coupled} Bernoulli test matrix and an efficient approximate message passing  (AMP) algorithm  for recovery. We rigorously characterize its asymptotic performance in both the noiseless and noisy settings, and prove that in the noiseless case, the AMP algorithm achieves \textit{almost-exact} recovery with a number of tests sublinear in the total number of items $p$. Although there exist other efficient schemes for noiseless QGT and pooled data that achieve recovery with order-optimal sample complexity ($\Theta(\frac{p}{\log p})$ tests),  there are no guarantees on their performance in the presence of noise, even at low noise-levels. In comparison, our scheme achieves recovery in the  noiseless case with a number of tests sublinear in $p$, and its performance degrades gracefully in the presence of noise.    Numerical simulations illustrate the benefits of the spatially coupled scheme at finite dimensions, showing that it outperforms i.i.d.~test designs as well as other recovery algorithms based on convex programming. 
\end{abstract}

\blfootnote{N.~Tan was supported by a Cambridge Trust Scholarship and the Harding Distinguished Postgraduate Scholars Programme Leverage Scheme. P.~Pascual~Cobo was supported by a Engineering and Physical Sciences Research Council Doctoral Training Award. This work was presented in part at the 2024 International Zurich Seminar on Information and Communication (IZS). 
%The authors are with the Department of Engineering, University of Cambridge, UK. 
Author emails: {\tt tcnt2@cam.ac.uk, pp423@cam.ac.uk, rv285@cam.ac.uk}. }

\section{Introduction}

Consider a large collection of items, each of which is either defective or non-defective. In \textit{group testing} \cite{Ald19}, the goal is to identify the defective set via pooled tests, where groups of items are tested together, with as few tests as possible. In the original Boolean group testing model, which has been studied  extensively \cite{Ald19,Cha14,Geb21,Oli22,Tan23b},  each test returns a positive outcome if it includes at least one defective item and a negative outcome otherwise.     %Pri23,Tan21,Tan20}
Its variant, the \textit{quantitative group testing} (QGT) model \cite{Bsh09}, is useful when tests are more informative: each test reveals the number of defective items in that pool. QGT is of interest in a range of modern applications, including  genomics \cite{Cao14}, multi-access communication \cite{Mar21}, and network traffic monitoring \cite{Wan15b}. 

A more general version of QGT,  where each item belongs to one of $L >2$ categories, is known as the \textit{pooled data} problem \cite{Wan16,Ning17,Sca17, Ala18,Ala19}. The goal is to identify the categories  via a sequence of pooled tests, where each pooled test reveals the number of items of each category within the pool. The pooled data problem is essentially one of inferring categorical information from histogram queries, which has applications in medical testing and learning with privacy constraints \cite{dwork2014algorithmic}.

In this paper, we consider \textit{non-adaptive} QGT and pooled data, where the tests are all designed in advance, making them amenable to being implemented in parallel. We also consider the \textit{linear} regime, which for QGT, means that the number of defective items is proportional to the total number of items. For pooled data, the linear regime implies that the proportion of items in each category is non-vanishing as the number of items increases, a realistic assumption in practical applications.

\subsection{Problem Setup} \label{sec:setup}

\paragraph{Quantitative group testing.} There are $p$ items, whose labels are denoted using the binary vector $\beta\in\{0,1\}^p$, where $1$ represents a defective item and $0$ a non-defective item. Items are allocated to tests using a binary design (or test) matrix $X\in\{0,1\}^{n\times p}$, where $n$ is the number of tests and $p$ is the number of items. The $i$th row $X_{i,:}$ determines the pooling design of the $i$th test, where $X_{ij}=1$ indicates that the $j$th item will be included in the $i$th test, and $X_{ij}=0$ indicates otherwise. Let $d$ be the number of defective items with $d<p$. We consider the linear regime, where the fraction of defective items $d/p$ converges to $\pi\in(0,1)$. Mathematically, we define the QGT model as
\begin{align}
    y_i&=\big(X_{i,:}\big)^\top\beta+\Psi_i
    \quad
    \text{for } i\in\{1,\dots,n\},
    \label{eq:QGT_model}
\end{align}
where $y_i$ is the $i$th element of $y\in\mb{R}^n$, $X_{i,:}$ is the $i$th row of $X\in \{0,1 \}^{n\times p}$ represented as a column vector, and $\Psi_i$ is the $i$th element of the additive noise $\Psi\in\mb{R}^n$. Under the noiseless setting (i.e., all entries of $\Psi$ are zero), the output $y_i$ is the number of defective items in the $i$th test. The goal of QGT is to recover $\beta$ with as few tests as possible.   

We will use the \textit{almost-exact} recovery criterion, which is achieved by an estimator $\tbeta$ if
\begin{align}
    \frac{1}{p}\sum_{j=1}^p\mathds{1}\big\{\tbeta_j\neq\beta_j\big\}\rightarrow0
    \text{ as $p\rightarrow\infty$.}
    \label{eq:almost_exact_recovery}
\end{align}
This is a weaker notion of recovery as compared to the exact recovery criterion \cite{Sca17} where we want the probability of error $\mb{P}\big[\tbeta\neq\beta\big]\rightarrow0$ as $p\rightarrow \infty$. We note that an almost-exact recovery criterion is meaningful in the linear regime, but not in the sublinear regime where the number of defectives $d=o(p)$, since setting $\tbeta$ to be the zero vector would trivially satisfy \eqref{eq:almost_exact_recovery}.

Almost-exact recovery is related to the \emph{approximate} recovery criterion
\cite{Sca17}  which requires that, with high probability, the number of errors is at most $q_{\max} \equiv q_{\max}(p)$. More precisely,  approximate recovery requires that
\begin{align}
    \mb{P}\left[  \sum_{j=1}^p\mathds{1}\big\{\tbeta_j\neq\beta_j\big\}  > q_{\max}\right] \to 0 \ \text{ as }  \ p \to \infty.
    \label{eq:ITlimit_approx}
\end{align} 
Almost-exact recovery is at least as strong a criterion as approximate recovery when $q_{\max}/p$ is either of constant order or is allowed to decay to zero arbitrarily slowly with $p$. Scarlett and Cevher derived information-theoretic lower bounds on the number of tests required for both exact and approximate recovery  \cite[Theorem 3]{Sca17}. These bounds show that for noiseless QGT, approximate recovery  with \emph{any} $q_{\max}$ such that $q_{\max}/p = o(1)$ requires essentially the same  number of tests as exact recovery ($q_{\max}=0$).  In their words, ``recovering all the labels is essentially as easy as recovering all but a vanishing fraction of the labels''.  This provides further justification for using the almost-exact recovery criterion, beyond the fact that it enables a precise asymptotic analysis of our scheme.

\paragraph{Pooled data.} The signal to be estimated is a matrix $B\in\{0,1\}^{p\times L}$, where each row is a one-hot vector. For example, $B_{j,:}=[0,1,0,\dots,0]$ represents the $j$th item belonging to category 2 (the position of one in $B_{j,:}$). We consider the linear regime, where the fraction of items in each category $l$ converges to $\pi_l \in (0,1)$, where $\sum_{l=1}^L \pi_l =1$. The model is
\begin{align}
    Y_{i,:}=B^\top X_{i,:}+\Psi_{i,:} 
    \quad
    \text{for $i\in\{1,\dots,n\}$},
    \label{eq:pooled_data_model}
\end{align}
where $Y_{i,:}$ is the $i$th row of $Y\in\mb{R}^{n\times L}$ represented as a column vector, and $\Psi_{i,:}$ is the $i$th row of the additive noise $\Psi\in\mb{R}^{n\times L}$ represented as a column vector. Under the noiseless setting (i.e., all entries of $\Psi$ are zero), the output of each test $Y_{i,:}$ tells us the number of items from each category present in the test, which can be viewed as a histogram. Similarly to QGT, an estimator $\tB$ achieves almost-exact recovery if
\begin{align*}
    \frac{1}{pL}\sum_{j=1}^p\sum_{l=1}^L\mathds{1}\big\{\tB_{jl}\neq B_{jl}\big\}
    \rightarrow0
    \text{ as }
    p\rightarrow\infty.
\end{align*}
The number of categories $L$ does not grow with $p$.

%\subsection{Information-theoretic Limits vs.~Efficient Algorithms}

\paragraph{Information-theoretic limits.}
For noiseless pooled data in the linear regime, the information-theoretic limit on the number of tests required  was established by Scarlett and Cevher \cite{Sca17}, closing the gap between previously derived upper and lower bounds \cite{Gre00,Wan16, Ala19}.  It was shown in \cite{Sca17} that the minimum number of tests needed for exact recovery is $n^*= \gamma^* \frac{p}{\log p}(1 + o(1))$, where 
\begin{align}
    \gamma^{*}
    &=\max_{r\in\{1,\dots,L-1\}}
    \frac{2[H(\pi)-H(\pi^{(r)})]}{L-r},
    \label{eq:gamma_star}
\end{align}
while $H(\pi)=-\sum_{l=1}^L\pi_l\log\pi_l$ is the Shannon entropy function, and $\pi^{(r)}=(\pi_1^{(r)},\dots,\pi_r^{(r)})$ is a vector whose first entry sums the largest $(L-r+1)$ entries of $\pi$, and whose remaining entries coincide with the remaining $(r-1)$ entries of $\pi$. Setting $L=2$ above gives the information-theoretic limit for noiseless QGT.   

Lower bounds on the number of tests for pooled data with approximate recovery  were also derived in \cite{Sca17}. As mentioned above, these bounds show that even when we allow for a vanishing fraction of errors, the number of tests required is still $\gamma^* \frac{p}{\log p}(1 + o(1))$ (noting that the lower order terms may differ from the exact recovery case). That is, the minimum number of tests required for almost-exact recovery is essentially the same as that for exact recovery.

For noisy pooled data where the entries of the noise matrix are independent with zero mean,  it was shown in \cite{Sca17} that we require $n=\Omega(p\log p)$ tests for exact recovery, in contrast with the sublinear $\Theta\big(\frac{p}{\log p}\big)$ required in the noiseless case. This lower bound  was extended to the approximate recovery in  \cite[Theorem 4]{Sca17} to show that the number of tests must be of order at least $p$ even if we allow a vanishing fraction of mistakes. Pooled data with adversarial noise was studied in \cite{Ning17}, and in this setting the number of tests required can be substantially higher than with random noise.

\paragraph{Efficient Algorithms.} For noiseless pooled data (including the special case of QGT), Wang et al.~\cite{Wan16} proposed a deterministic test design and a polynomial-time algorithm that achieves exact recovery with  $n=\Omega\big(\frac{p}{\log p}\big)$ tests, matching the optimal sample complexity above. However, both the test design and the recovery algorithm (based on Gaussian elimination) are tailored to the noiseless setting, and the guarantees do not extend to the noisy case. In this paper, we focus on \emph{random} test designs, which are more robust with respect to the items included in each test,  and on recovery algorithms whose performance degrades gracefully with the noise level.

For random i.i.d.~designs (where  $X_{ij}\stackrel{\iid}{\sim}\text{Bernoulli}(\alpha)$ for some $\alpha \in (0, 1)$), efficient recovery using approximate message passing (AMP) algorithms was  studied in \cite{Ala19, Tan23d}. Rigorous guarantees on the recovery performance of AMP, established in \cite{Tan23d}, imply that with the i.i.d.~Bernoulli design, the AMP algorithm needs 
$n=\Theta(p)$ tests for almost-exact recovery, even in the noiseless QGT/pooled data setting. The spatially coupled Bernoulli design we propose here improves on the i.i.d. one, enabling an AMP algorithm that requires only $n=o(p)$ tests in the noiseless case.
%Thus, for random designs there is an order-$\log p$ gap between the information-theoretic limit and the best known algorithms.

For the \emph{adaptive} setting, where each test can depend on the outcome of previous test, Bshouty \cite{Bsh09} proposed an efficient noiseless QGT scheme that identifies $d$ defectives out of $p$ items with  $\frac{2 d \log (p/d)}{\log d}(1 + o(1))$ tests. This is just over twice the information-theoretic lower bound of $\frac{d \log (p/d)}{\log d}$ for  any adaptive scheme  \cite{Bsh09}. We focus only on non-adaptive schemes in this paper.

\subsection{Approximate Message Passing and Spatial Coupling}

Approximate message passing (AMP) is a family of iterative algorithms that can be tailored to take advantage of structural information about the signals and the model, e.g., a known prior on the signal vector or on the proportion of observations that come from each signal. AMP algorithms were first proposed for the standard linear model \cite{Kab03,Bay11,Don09,Krz12}, but have since been applied to a range of statistical  problems, including estimation in generalized linear models and their variants \cite{Ran11,Ma19,Mai20,Mon21,Tan23c}, as well as low-rank matrix and tensor estimation \cite{Des14,Fle18,Kab16,Les17,Mont21,LiFanWei_Z2, rossetti2023approximate}. In all these settings, under suitable model assumptions, the performance of AMP in the high-dimensional limit is characterized by a succinct deterministic recursion called \emph{state evolution}.  The literature on AMP is vast, and we refer the interested reader to \cite{Fen21} for a survey.

In this paper, we will use a \emph{spatially coupled} design matrix $X$ and a suitable AMP algorithm for recovery. The spatially coupled matrix has a block-wise structure, with blocks along a band-diagonal having i.i.d.~Bernoulli entries and the remaining blocks being all zeros (see Figure \ref{fig:SC_test_matrix}).  Our scheme is inspired by a line of work on compressed sensing with spatially coupled designs \cite{Kud10,Krz12,Tak15,Don13}. For a noiseless linear model defined via a spatially coupled Gaussian sensing matrix, Donoho et al.~\cite{Don13} showed that AMP recovers the signal  with high probability when the sampling ratio $\delta= n/p$ exceeds the R{\'e}nyi information dimension of the signal prior. The R{\'e}nyi information  dimension is zero for priors supported on a finite set, which implies that AMP can recover the signal with $n=o(p)$ measurements. Recently, it was also shown that using a spatially coupled sensing matrix in a  generalized linear model  allows AMP to achieve the Bayes-optimal error (corresponding to an i.i.d.~Gaussian  matrix) \cite{Cob23}. 

\subsection{Main Contributions}

In Section \ref{sec:SC_QGT}, we describe the  spatially coupled random test design and an  AMP algorithm (SC-AMP) for signal recovery.  In Theorem \ref{thm:SC_AMP_QGT} we give a precise characterization of the performance of SC-AMP in the asymptotic regime where the number of tests $n$ grows proportionally with the number of items $p$ (with $n/p \to \delta$, a constant). Using this characterization, we bound the MSE of the algorithm in the low noise regime (Theorem \ref{thm:achievability}) and show that for noiseless QGT, the SC-AMP algorithm achieves almost-exact recovery with probability one, for \emph{any} constant sampling ratio $\delta >0$ (Corollary \ref{cor:SC_AMP_almost_exact}). This implies that it achieves almost-exact recovery with a sublinear number of tests, i.e., $n =o(p)$. In Section \ref{sec:SCAMP_pooled_data}, we generalize the SC-AMP algorithm to the pooled data setting and again establish almost-exact recovery for any constant $\delta >0$ (Theorem \ref{thm:achievability_pooled_data}).

 To our knowledge, ours is the first efficient scheme for QGT and pooled data in the linear regime that both requires a sublinear number of tests in the noiseless case \emph{and} is provably robust to noise.   Indeed, Theorem \ref{thm:achievability} and Corollary \ref{cor:SC_AMP_almost_exact} give bounds which quantify the noise sensitivity of the asymptotic MSE (and the fraction of errors).  More generally, for any fixed values of noise variance $\sigma^2$ and sampling ratio $\delta$,  our results (Theorem \ref{thm:SC_AMP_QGT} for QGT and  \eqref{eq:matrix_SC_AMP_convergence} for pooled data) give a precise asymptotic characterization of the fraction of errors made by the SC-AMP algorithm. 
 
 Numerical simulations show that the spatially coupled scheme outperforms the i.i.d.~Bernoulli test design with AMP, as well as recovery algorithms based on convex programming.

At the heart of our theoretical guarantees is a rigorous analysis of an AMP algorithm for a  generalized linear model (GLM) with a generic spatially coupled design matrix. (The matrix consists of blocks of independent entries  drawn from an arbitrary zero-mean distribution satisfying certain moment conditions.) GLMs include many important nonlinear estimation problems such as phase retrieval and logistic regression. Theorem \ref{thm:SC_GAMP} shows that the AMP algorithm and its performance characterization originally developed for GLMs with spatially coupled  Gaussian designs \cite{Cob23} remain valid for a much broader class of designs.

\paragraph{Key technical ideas.} Although the QGT model \eqref{eq:QGT_model} is an instance of a linear model, an important constraint is that the test design matrix $X$ can only contain binary entries. Therefore, we cannot apply the analysis from \cite{Don13}, which assumes a spatially coupled Gaussian design matrix. To prove theoretical guarantees for our scheme, we reduce the SC-AMP algorithm to an abstract AMP iteration defined for any generalized white noise matrix. A state evolution result for this abstract AMP iteration was established by Wang et al.~in \cite{Wan22}, using which we obtain a rigorous asymptotic characterization of  the SC-AMP algorithm (Theorem \ref{thm:SC_AMP_QGT}). To establish conditions for almost-exact recovery, we then need to analyze the fixed points of the SC-AMP state evolution. We do this in Theorem \ref{thm:asym_MSE} via the potential function method \cite{Yed14},  a powerful tool for characterizing the fixed points of coupled recursions. This characterization then yields the noise robustness and exact-recovery results  (Theorem \ref{thm:achievability} and Corollary \ref{cor:SC_AMP_almost_exact}).

\subsection{Other Related Work}

\paragraph{Spatial coupling.} Spatial coupling was introduced in coding theory as a means to construct LDPC codes that achieve capacity with an efficient belief propagation decoder \cite{felstrom1999time,kudekar2013spatially}. Spatial coupling has since been applied in many estimation problems to  improve on the performance of `regular' (or i.i.d.) designs. For Boolean group testing in the sublinear regime (the number of defectives is $p^\theta$ for $\theta \in (0,1)$), spatially coupled test designs enable efficient recovery with the asymptotically optimal number of tests, in both the noiseless \cite{Coj20}  and noisy settings \cite{Coj24}. 
%(In noiseless Boolean group testing, each test is positive if there is at least one defective item in the test, and zero otherwise.) 
For QGT,  Mashauri et al.~\cite{Mas23,Mas24} investigated efficient schemes based on spatially coupled LDPC codes, and showed that they outperform previous constructions based on generalized LDPC codes \cite{Kar19a, Kar19b}.

\paragraph{Sublinear category regime.} A few recent works have studied pooled data in  the sublinear category regime, where one category is  dominant with $p-o(p)$ items, and the remaining $(L-1)$ categories have $d=o(p)$ items. (In contrast, we consider the linear category regime, where the proportion of items in each category is $\Theta(1)$, i.e., $\pi_l=\Theta(1)$ for $l\in[L]$.) 
For the sublinear category regime, the information-theoretic lower bound for exact recovery is $n=\Omega(d)$ tests \cite{Hah22a,Geb23}. An efficient algorithm proposed in \cite{Hah22a} achieves the lower bound when $d=\Theta(p^\kappa)$,  for a constant $\kappa\in(0,1)$. A lower complexity algorithm for the special case of QGT with $d=\Theta(p^\kappa)$ was recently proposed in \cite{Sol24a}. For QGT in the sublinear regime, a number of recent works have proposed algorithms based on ideas from coding theory \cite{Mas23} and thresholding \cite{Hah22b}, which require $\Omega(d\log p)$ tests for exact recovery.  Noisy versions of QGT were recently studied in \cite{Li21} and \cite{Hah23}, and QGT  in the adaptive setting has been studied in \cite{Bsh09,Sol24b}.

\section{Preliminaries} \label{sec:prelim}

\paragraph{Notation.} We let $[n]:=\{1,\dots,n\}$ and $[n:m]:= \{n,n+1,\dots,m\}$, for $n<m$.  All vectors (including those corresponding to rows of matrices) are assumed to be column vectors unless otherwise stated. For  $a,b\in\mb{R}^n$, $\langle a,b\rangle=a^\top b\in\mb{R}$ is the inner product, $a\odot b\in\mb{R}^n$ is the entry-wise product, and  $\langle a\rangle=\frac{1}{n}\sum_{i=1}^n a_i$ denotes the empirical average of the entries of $a$. 

Matrices are denoted by upper case letters, and given a matrix $A$, we write $A_{i,:}$ for its $i$th row and $A_{:,j}$ for its $j$th column.
The operator norm is denoted by  $\|A\|_{\text{op}}$.
For $r\in[1,\infty)$ and a vector $a=(a_1,\dots,a_n)\in\mb{R}^n$, we write $\|a\|_r$ for the $\ell_r$-norm, so that $\|a\|_r=\big(\sum_{i=1}^n|a_i|^r\big)^{1/r}$.  We use $1_p$ to denote the vector of $p$ ones, $0_p$ for the vector of $p$ zeros, and $I_p$ for the $p\times p$ identity matrix. We use $\mathds{1}\{ \cdot \}$ to denote the indicator function, and $\E[\cdot ]$ for expectation. Given random variables $U,V$, we write $U \stackrel{d}{=} V$ to denote equality in distribution. We write $\partial_if(\cdot)$ to denote the partial derivative of $f$ with respect to (w.r.t.) the $i$th argument. Throughout, the function $\log(\cdot)$ has base $e$, and we use Bachmann-Landau asymptotic notation (i.e., $O$, $o$, $\Omega$, $\omega$, $\Theta$).

\paragraph{Almost-sure and Wasserstein convergence.} Let $\{A^n\}$ be a sequence of random elements taking values in a Euclidean space $E$. We say that $A^n$ converges almost surely to a deterministic limit $a\in E$, and write $A^n\stackrel{a.s.}{\rightarrow}a$, if $\mb{P}[\lim_{n\rightarrow\infty}A^n=a]=1$.
 
For a vector $a\in\mb{R}^n$ and a random variable $A\in\mb{R}$, we write $a\stackrel{W_r}{\rightarrow}A$ as $n\rightarrow\infty$, for the Wasserstein-$r$ convergence of the empirical distribution of the entries of $a$ to the law of $A$. More generally, for vectors $a^1,\dots,a^k\in\mb{R}^n$ and a random vector $(A^1,\dots,A^k)\in\mb{R}^k$, we write
\begin{align*}
    a^1,\dots,a^k\stackrel{W_r}{\rightarrow}(A^1,\dots,A^k)
    \text{ as $n\rightarrow\infty$},
\end{align*}
for the Wasserstein-$r$ convergence of the empirical distribution of rows of $(a^1,\dots,a^k)\in\mb{R}^{n\times k}$ to the joint law of $(A^1,\dots,A^k)$. This means that, for any continuous function $\phi:\mb{R}^k\rightarrow\mb{R}$ and input vector $(a_i^1,\dots,a_i^k)\in\mb{R}^k$ satisfying the \textit{polynomial growth} condition \cite{Wan22}
\begin{align}
    |\phi(a_i^1,\dots,a_i^k)|
    \leq C\big(1+\|(a_i^1,\dots,a_i^k)\|_2^r\big),
    \label{eq:poly_growth_cond}
\end{align}
for a constant $C>0$, we have
\begin{align}
    \frac{1}{n}\sum_{i=1}^n\phi(a^1_i,\dots,a^k_i)
    \rightarrow\E\big[\phi(A^1,\dots,A^k)\big]
    \text{ as $n\rightarrow\infty$}.
    \label{eq:sum_to_exp}
\end{align}
We write
\begin{align*}
    a\stackrel{W}{\rightarrow} A,\quad
    (a_1,\dots,a_k)\stackrel{W}{\rightarrow}(A_1,\dots,A_k)
    \text{ as $n\rightarrow\infty$},
\end{align*}
to mean that the above Wasserstein-$r$ convergences hold for every order $r\geq 1$.

\paragraph{Model assumptions for QGT.} The signal $\beta\in\{0,1\}^p$ is independent of the design matrix. As $n, p\rightarrow\infty$, we have $n/p\rightarrow\delta>0$ (for a constant $\delta$), and the empirical distribution of the entries of the signal  converges in Wasserstein distance to well-defined limits. More precisely,  \begin{align}
    \beta\stackrel{W}{\rightarrow}\bar{\beta} \quad  \text{ where } \bar{\beta}\sim\text{Bernoulli}(\pi).
    \label{eq:bar_beta}
\end{align}  
We note that the entries of $\beta$ are not assumed to be independent or identically distributed.

\section{Spatially Coupled Design for Quantitative Group Testing} \label{sec:SC_QGT}

\begin{figure}[t]
    \centering
    \includegraphics[width=0.45\textwidth]{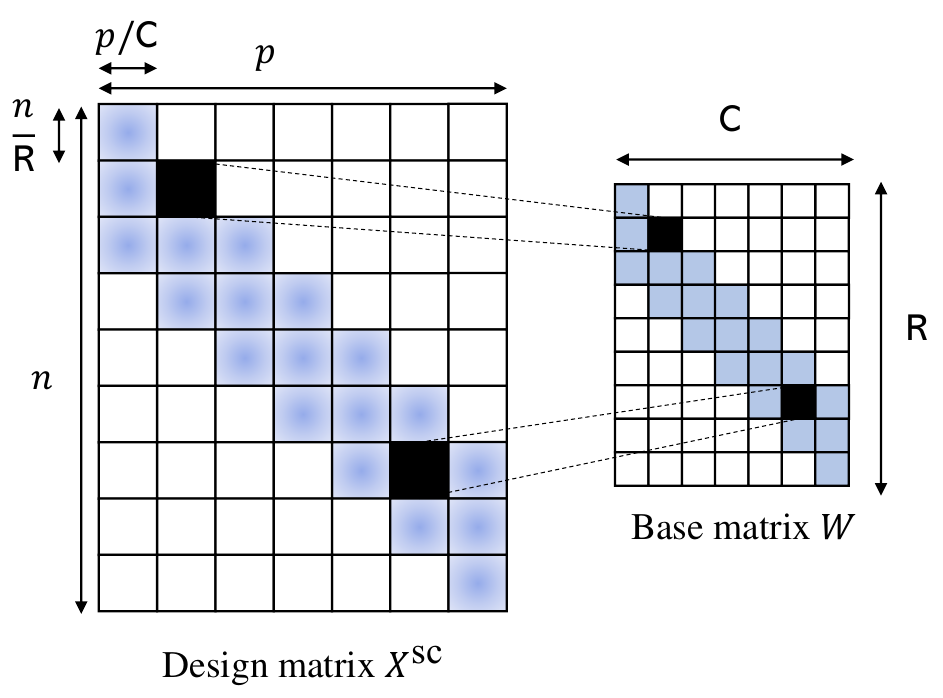}
    \caption{The entries of $\Xsc$ are independent with $X_{ij}^\text{sc}\sim\text{Bernoulli}(\alpha W_{\sfr(i),\sfc(j)})$. Here $W$ is an $(\omega,\Lambda)$ base matrix with $\omega=3$ and $\Lambda=7$ (see Definition \ref{def:omega_Lambda_base_matrix}). The white parts of $X^\text{sc}$ and $W$ correspond to zeros.}
    \label{fig:SC_test_matrix}
\end{figure}

The spatially coupled (SC) design matrix consists of independent Bernoulli entries whose parameters are specified by a \emph{base matrix} $W$ of dimension $\sfR\times\sfC$.
The SC design matrix is obtained by replacing each entry of the base matrix $W_{\sfr\sfc}$ by an $\frac{n}{\sfR}\times\frac{p}{\sfC}$ matrix with entries drawn independently from $\text{Bernoulli}(\alpha W_{\sfr\sfc})$, where $\alpha, \alpha W_{\sfr\sfc} \, \in(0,1)$. 
An example of a SC design matrix is shown in Figure \ref{fig:SC_test_matrix}.  
In this paper, we will use  the following base matrix.

\begin{definition} \label{def:omega_Lambda_base_matrix}
An $(\omega,\Lambda)$ base matrix $W$ is described by two parameters: the coupling width $\omega\geq 1$ and the coupling length $\Lambda\geq2\omega-1$. The matrix has $\sfR=\Lambda+\omega-1$ rows and $\sfC=\Lambda$ columns, with each entry indexed by $(\sfr,\sfc)$, for $\sfr\in[\sfR]$ and $\sfc\in[\sfC]$. For $\alpha \leq 0.5$, the entries are given by
\begin{align}
    W_{\sfr\sfc}
    &=\begin{cases}
        \frac{1}{2\alpha}\left(1-\sqrt{1-\frac{4\alpha(1-\alpha)}{\omega}}\right) & \text{ if  } \sfc\leq \sfr\leq \sfc+\omega-1, \\
        0 &\text{ otherwise}.
    \end{cases}
    \label{eq:W_rc_def}
\end{align}
For $\alpha> 0.5$, the non-zero entries (when $\sfc\leq \sfr\leq \sfc+\omega-1$) are given by $ \frac{1}{2\alpha}\left(1+\sqrt{1-\frac{4\alpha(1-\alpha)}{\omega}}\right)$.
\end{definition}

Figure \ref{fig:SC_test_matrix} shows an $(\omega,\Lambda)$ base matrix with $\omega=3$ and $\Lambda=7$.  The \textit{spatially coupled (SC)} design matrix, denoted by $\Xsc$, has independent entries generated as follows:
    \begin{align}
        \Xsc_{ij}
        \stackrel{\text{indep.}}{\sim}
        \text{Bernoulli}\big(\alpha W_{\sfr(i)\sfc(j)}\big), \quad i\in[n], \, j \in [p].
        \label{eq:SC_matrix}
    \end{align}
    for some fixed constant $\alpha\in(0,1)$.  Here the operators $\sfr(\cdot):[n]\rightarrow [\sfR ]$ and $\sfc(\cdot):[p]\rightarrow [\sfC ]$ map a particular row or column index to its corresponding \emph{row block} or \emph{column block} index in $W$. The band-diagonal structure of the $(\omega,\Lambda)$ base matrix here is similar to the ones used for SC sparse regression codes \cite{Rush21} and for SC generalized linear models \cite{Cob23}, but the values of the non-zero entries are different. Here the base matrix specifies the Bernoulli parameters  for each block of the design, whereas in \cite{Rush21,Cob23} it specifies the  variances for the Gaussian entries in each block.
    
The \textit{i.i.d.}~design matrix, denoted by $X^{\text{iid}}$, has each entry sampled i.i.d.~$\sim\text{Bernoulli}(\alpha)$, for some fixed constant $\alpha\in(0,1)$. 
Note that the i.i.d.~matrix is a special case of the SC matrix, with $\sfR=\sfC=1$ and $W=1$. A key difference between the i.i.d.~design and the SC design (with an $(\omega, \Lambda
)$ base matrix) is that the latter includes many fewer items in each test. Indeed, with the i.i.d.~design each item is included in a test with probability $\alpha$, whereas in the SC design, each test includes items from at most $\omega$ adjacent column blocks, each with $p/\sfC$ items (see Figure \ref{fig:SC_test_matrix}). In the SC design, since the tests corresponding to the first and last row blocks involve the fewest items, the corresponding entries of $\beta$ are easier to recover than the others. A good estimate for these entries helps the algorithm recover the entries in the adjacent blocks, creating a decoding wave that propagates from the ends towards the center.

With the i.i.d.~design, the number of defectives per test has expected value $\alpha \pi p$ and standard deviation $\sqrt{\alpha \pi (1 - \alpha \pi) p} $. Similarly, for the spatially coupled design, it can be verified that the number of defective items per test has mean that is linear in $p$ and standard deviation of order $\sqrt{p}$. Since the fluctuation around the mean contains the useful information in each test, and because AMP requires a design matrix with zero-mean entries, we recenter and rescale the data before applying the AMP algorithm. We now describe this preprocessing of the data, which was also done in \cite{Tan23d} for i.i.d.~designs.

The \textit{rescaled i.i.d.}~matrix, denoted by $\tX^{\text{iid}}$, is defined as
    \begin{align}
        \tX^{\text{iid}}
        &=\frac{X^{\text{iid}}-\alpha1_n1_p^\top}{\sqrt{n\alpha(1-\alpha)}}.
        \label{eq:rescaled_QGT_iid_matrix}
    \end{align}
   We note that $\tX^{\text{iid}}$ has independent entries with   $\E[\tX^{\text{iid}}_{ij}]=0$ and $\Var[\tX^{\text{iid}}_{ij}]=1/n$. 
   
   The \textit{rescaled spatially coupled (SC)} matrix $\tXsc$  is defined as follows.  For $i \in [n], j \in [p]$, using the shorthand $\sfr \equiv \sfr(i), \, \sfc \equiv \sfc(j)$, its entries are given by 
    \begin{align}
        \tX_{ij}^{\text{sc}}
        &=\frac{X_{ij}^{\text{sc}}-\alpha W_{\sfr\sfc}}{\sqrt{n\alpha(1-\alpha)/\sfR}}
        =\begin{cases}
            \frac{1-\alpha W_{\sfr\sfc}}{\sqrt{n\alpha(1-\alpha)/\sfR}}
            &\text{ with probability $\alpha W_{\sfr\sfc}$,} \\
            \frac{-\alpha W_{\sfr\sfc}}{\sqrt{n\alpha(1-\alpha)/\sfR}}
            &\text{ with probability $1-\alpha W_{\sfr\sfc}$.}
        \end{cases} 
        \label{eq:rescaled_QGT_SC_matrix}
    \end{align}
    It is straightforward to verify that $\E[\tXsc_{ij}]=0$, $\Var[\tXsc_{ij}]=\frac{\sfR W_{\sfr\sfc}(1-\alpha W_{\sfr\sfc})}{n(1-\alpha)}$. In particular, for the $(\omega, \Lambda)$ base matrix in Definition \ref{def:omega_Lambda_base_matrix}, we have \begin{align}
    \Var[\tXsc_{ij}]
    =\begin{cases}
        \frac{\sfR}{n\omega}
        &\text{ if $\sfc\leq\sfr\leq\sfc+\omega-1$,} \\
        0 &\text{ otherwise.}
    \end{cases}
    \label{eq:omega_Lambda_var}
\end{align}

\paragraph{Rewriting QGT.} 
The AMP algorithm and its analysis require the design matrix to have independent zero-mean entries, so we recenter and rescale the QGT model in \eqref{eq:QGT_model} to express it in terms of the rescaled design  ($\tXsc$ or $\tX^{\text{iid}}$). For the SC design, using \eqref{eq:rescaled_QGT_SC_matrix}, we have for $i\in[n]$:
\begin{align*}
    &y_i
    =\sum_{j=1}^pX_{ij}^{\text{sc}}\beta_j+\Psi_i
    =\sum_{j=1}^p\left(\alpha W_{\sfr(i)\sfc(j)}+\sqrt{\frac{n\alpha(1-\alpha)}{\sfR}}\tXsc_{ij}\right)\beta_j+\Psi_i \\
    \implies\,
    &y_i-\sum_{j=1}^p\alpha W_{\sfr(i)\sfc(j)}\beta_j
    =\sqrt{\frac{n\alpha(1-\alpha)}{\sfR}}\sum_{j=1}^p\tXsc_{ij}\beta_j+\Psi_i \\
    \implies\,
    &\frac{1}{\sqrt{n\alpha(1-\alpha)/\sfR}}\left(y_i-\alpha\left(W_{\sfr(i)1}\sum_{j\in\mathcal{J}_1}\beta_j+\dots+W_{\sfr(i)\sfC}\sum_{j\in\mathcal{J}_\sfC}\beta_j\right)\right)\\
    &=(\tXsc_{i,:})^\top\beta+\frac{\Psi_i}{\sqrt{n\alpha(1-\alpha)/\sfR}},
\end{align*}
where $\mathcal{J}_\sfc=\big[(\sfc-1)p/\sfC+1:\sfc p/\sfC\big]$ for $\sfc\in[\sfC]$. Denoting the left-hand side  above by
\begin{align}
    \ty_i
    &:=\frac{1}{\sqrt{n\alpha(1-\alpha)/\sfR}}\left(y_i-\alpha\left(W_{\sfr(i)1}\sum_{j\in\mathcal{J}_1}\beta_j+\dots+W_{\sfr(i)\sfC}\sum_{j\in\mathcal{J}_\sfC}\beta_j\right)\right),
    \label{eq:y_tilde}
\end{align}
gives us the rescaled QGT model:
\begin{align}
    \ty_i
    =(\tXsc_{i,:})^\top\beta+\tPsi_i,    \quad \text{ with } \quad  \tPsi_i := \frac{\Psi_i}{\sqrt{n\alpha(1-\alpha)/\sfR}} \, , \quad i \in [n].
    \label{eq:rescaled_SC_QGT}
\end{align}

The term $\sum_{j\in\mathcal{J}_\sfc}\beta_j$ in \eqref{eq:y_tilde} is the number of defective items in the sub-vector of $\beta$ indexed by $\mathcal{J}_{\sfc}$, for $\sfc  \in [\sfC ]$.
In the noiseless setting, the terms $\sum_{j\in\mathcal{J}_c}\beta_j$ can be obtained with an extra $\sfC$ tests, where we only include items from $\mathcal{J}_c$ in the $\sfc$th test, for $\sfc\in[\sfC]$. The extra $\sfC=O(1)$ tests do not affect  our results since the limiting sampling ratio $\lim_{n\rightarrow\infty} n/p = \delta$ remains the same.
%change our main result regarding the SC-AMP algorithm requiring $o(p)$ tests for almost-exact recovery under the noiseless setting. 
Since $\frac{1}{p}\sum_{j\in\mathcal{J}_c}\beta_j\stackrel{a.s.}{\rightarrow}\frac{\pi}{\sfC}$ for $p\rightarrow\infty$ via the strong law of large numbers, we can also estimate $\sum_{j\in\mathcal{J}_\sfc}\beta_j$ using $\frac{p}{\sfC}\pi$. (However the error in this estimate would be of order $\sqrt{p/\sfC}$.)

For an i.i.d.~design $X^{\text{iid}}$, we  similarly recenter and rescale the QGT model to express it in terms of the rescaled i.i.d.~matrix $\tX^{\text{iid}}$ in \eqref{eq:rescaled_QGT_iid_matrix}. For $i \in [n]$, we have
\begin{align}
    y_i=\big(X_{i,:}^{\text{iid}}\big)^\top\beta+\Psi_i 
    \implies 
    & \ty_i
    =\big(\tX_{i,:}^{\text{iid}}\big)^\top\beta + \tPsi_i, \nonumber  \\ 
    & \text{ where $\ty_i:=\frac{y_i-\alpha d}{\sqrt{n\alpha(1-\alpha)}}, \quad \tPsi_i= \frac{\Psi_i}{\sqrt{n\alpha(1-\alpha)}}$,}
    \label{eq:rescaled_iid_QGT}
\end{align}
where $d$ is the number of defective items.

\paragraph{Choice of $\alpha$.}\label{sec:items-test}  For  a spatially coupled design $\Xsc$ constructed from an $(\omega, \Lambda)$ base matrix, recall from  \eqref{eq:rescaled_QGT_SC_matrix} that the rescaled matrix $\tXsc$ has independent zero-mean  Bernoulli entries with variances given by \eqref{eq:omega_Lambda_var}. 
Notice that the distribution of $\tXsc$ does not depend on $\alpha$. Hence, in the rescaled model, $\alpha$ only affects the variance of the noise $\tPsi$, which is minimized when $\alpha=0.5$. We therefore use $\alpha =0.5$ for all our experiments.

Taking $\alpha=0.5$, it is useful to compare the SC design with the i.i.d.~Bernoulli($0.5$) design. (For the i.i.d.~design, taking the Bernoulli parameter to be $0.5$ is optimal with respect to both the information-theoretic limits \cite{Sca17} and efficient  recovery via AMP \cite{Tan23d}.)  Taking $\alpha=0.5$ in Definition \ref{def:omega_Lambda_base_matrix}, we have that the non-zero blocks of the SC design are drawn independently from a Bernoulli distribution with parameter $\frac{1}{2}\big( 1- \sqrt{1 -1/\omega} \big)$. From the structure of $\Xsc$ (see Figure \ref{fig:SC_test_matrix}), it follows that the expected number of items included in  each test is at most $ \frac{p}{\sfC} \frac{\omega}{2}\big( 1- \sqrt{1 -1/\omega} \big)$. For large $\omega$, this is approximately $\frac{p}{4\sfC}$ items per test. In contrast, for the Bernoulli($0.5$) design, the expected number of items per test is $\frac{p}{2}$. Thus, for large $\sfC$, the expected number of items per test is much smaller for the SC design  than the i.i.d.~design.

\paragraph{Noise scaling assumption.}  \label{par:noise_scaling}
In the rescaled QGT models  \eqref{eq:rescaled_SC_QGT} and \eqref{eq:rescaled_iid_QGT}, it can be verified  that the terms  $\big(\tX_{i,:}^{\text{sc}}\big)^\top\beta$  and  $\big(\tX_{i,:}^{\text{iid}}\big)^\top\beta$ have zero mean and variance of constant order, for each $i \in [n]$. Therefore, for the rescaled model to be meaningful,  the noise $\tPsi_i$ should also have a mean and variance of constant order. This is guaranteed by the following assumption.
The empirical distribution of  rescaled noise vector $\tPsi$ in \eqref{eq:rescaled_SC_QGT} converges to a well defined limit. More precisely, there exists, $\bar{\Psi}\sim P_{\bar{\Psi}}$ with $\E\big[\bar{\Psi}^2\big]=:\sigma^2<\infty$, such that 
$\tPsi\stackrel{W}{\rightarrow}\bar{\Psi}$ 
as $n \to \infty$. We emphasize that the base matrix parameter $\sfR$ is fixed as $n \to  \infty$. A similar distributional assumption holds for the rescaled noise vector with the i.i.d.~design in \eqref{eq:rescaled_iid_QGT}.

\subsection{SC-AMP Algorithm} \label{sec:SC-AMP_algo}

Consider the rescaled SC model in \eqref{eq:rescaled_SC_QGT}. 
Given $(\tXsc, \ty)$, the SC-AMP algorithm  iteratively produces signal estimates,  denoted by $\hbeta^k \in \reals^p$, for $k\geq1$. For iteration $k\geq 0$, the algorithm computes:
\begin{align}
\begin{split}
    \tTheta^k&=\ty \, - \, \tXsc \hbeta^k \, + \, b^k\odot Q^{k-1}\odot\tTheta^{k-1}
    \,  \in\mb{R}^n, \\
    \beta^{k+1}&=(\tXsc)^\top(Q^k\odot\tTheta^k) \, - \, c^k\odot\hbeta^k
    \,  \in\mb{R}^p, 
    \qquad
    \hbeta^{k+1}=f_{k+1}(\beta^{k+1},\mathcal{C})
    \in \, \mb{R}^p,
\end{split}
\label{eq:SC_AMP}
\end{align}
where $\odot$ denotes the Hadamard (entry-wise) product, the function $f_k: \reals \times [\sfC] \to \reals$ acts row-wise on the inputs $(\beta^k,\mathcal{C})$, and 
\begin{align}
\label{eq:mC_def}
    \mathcal{C}
:=(\underbrace{1,\dots,1}_{\text{$p/\sfC$ entries}},\underbrace{2,\dots,2}_{{\text{$p/\sfC$ entries}}},\dots,\underbrace{\sfC,\dots,\sfC}_{{\text{$p/\sfC$ entries}}})^\top
\in\mb{R}^p.
\end{align} 
 The algorithm is initialized with $\hbeta^0=\E[\bar{\beta}]1_p$ and $\tTheta^0=\ty-\tXsc\hbeta^0$, where from \eqref{eq:bar_beta} we recall that $\bar{\beta} \sim \text{Bernoulli}(\pi)$.
 To define $b^k \in \reals$, we use the  partitions $[p]=\bigcup_{\sfc=1}^\sfC\mathcal{J}_\sfc$ and $[n]=\bigcup_{\sfr=1}^\sfR\mathcal{I}_\sfr$, where
\begin{align}
\begin{split}
    \mathcal{J}_\sfc
    &=\left\{(\sfc-1)\frac{p}{\sfC}+1,\dots,\sfc\frac{p}{\sfC}\right\}
    \text{ for $\sfc\in[\sfC]$}, 
    \qquad
    \mathcal{I}_\sfr
    =\left\{(\sfr-1)\frac{n}{\sfR}+1,\dots,\sfr\frac{n}{\sfR}\right\}
    \text{ for $\sfr\in[\sfR]$}.
\end{split}
\label{eq:J_c_I_r_def}
\end{align}
Then, letting
\begin{align}
    \tW_{\sfr\sfc}:=\frac{W_{\sfr\sfc}(1-\alpha W_{\sfr\sfc})}{1-\alpha} = \begin{cases}
        1/\omega &\text{ if $\sfc\leq\sfr\leq\sfc+\omega-1$,} \\
        0 &\text{otherwise},
    \end{cases}
    \label{eq:W_tilde}
\end{align}
the entries of $b^k\in\mb{R}^n$ are 
\begin{align*}
    b_i^{k}
    =\sum_{\sfc=1}^\sfC \frac{\tW_{\sfr\sfc}}{n/\sfR}\sum_{j\in\mathcal{J}_c} f'_{k}(\beta_j^{k},\sfc), \quad i \in [n],
\end{align*}
where $f'_k$ is the derivative with respect to the first argument.
The second equality in \eqref{eq:W_tilde} follows from Definition \ref{def:omega_Lambda_base_matrix}.

Next, we define the function $f_k:\reals \times [\sfC]$ and the vectors $Q^k \in \reals^n, c^k \in \reals^p$ in \eqref{eq:SC_AMP}. These are defined in terms of block-wise scalar state evolution parameters, denoted by $\chi^k_\sfc$, for $\sfc \in [\sfC]$, which are recursively computed, as described in \eqref{eq:SC_SE_param1}-\eqref{eq:SEchi_init} below. We let
\begin{equation}
\label{eq:optimal_fk}
     f_{k}(s, \sfc)
    =\E\big[\bar{\beta}\,\big|\,(\chi_{\sfc}^{k})^2\bar{\beta}+\chi_{\sfc}^{k}G=s\big], \quad \text{ for } \, \sfc \in [\sfC].
\end{equation}
 The vectors $Q^k \in \reals^n$ and $c^k \in \reals^p$   have a block-wise structure and are defined as follows. For $i \in [n]$, $j \in [p]$, recalling that 
 $\sfr(i)$ and $\sfc(j)$ denote the respective row-block and column-block indices, we have:
\begin{align}
\begin{split}
    Q_i^k= \left(
    \sigma^2+\frac{1}{\deltain}\sum_{\sfc=1}^\sfC\tW_{\sfr(i) \sfc} \, \E\Big[\big(\bar{\beta}-f_k( \, (\chi_{\sfc}^{k})^2\bar{\beta}+\chi_{\sfc}^{k}G, \, \sfc \, ) \big)^2\Big]
    \right)^{-1},
    \qquad
    c_j^k=-(\chi_{\sfc(j)}^{k+1})^2. 
\end{split}
\label{eq:Qc_def}
\end{align}
 We note that the time complexity of each iteration in \eqref{eq:SC_AMP} is $O(np)$.

%where $\mathcal{J}_\sfc$ is defined as  follows. 

We now give some high-level intuition about the SC-AMP algorithm and its state evolution characterization. For this, we need some additional notation to handle the block-wise structure of the iterates. For $\sfc \in [\sfC]$ and $\sfr \in [\sfR]$, we define $\beta_\sfc:=\beta_{\mathcal{J}_\sfc}\in\mb{R}^{p/\sfC}$ and $\hbeta_\sfc^k:=\hbeta_{\mathcal{J}_\sfc}^k\in\mb{R}^{p/\sfC}$ to be the $\sfc$th blocks of $\beta\in\mb{R}^p$ and $\hbeta^k\in\mb{R}^p$ respectively, and $\Theta_\sfr:=\Theta_{\mathcal{I}_\sfr}\in\mb{R}^{n/\sfR}$ and $\tTheta_\sfr^k:=\tTheta_{\mathcal{I}_\sfr}^k\in\mb{R}^{n/\sfR}$ to be the $\sfr$th blocks of $\Theta:=\tXsc\beta\in\mb{R}^n$ and $\tTheta^k\in\mb{R}^n$ respectively. Similar notation simplifications will be used for other vectors where $\sfc$ and $\sfr$ will replace $\mathcal{J}_\sfc$ and $\mathcal{I}_\sfr$ in the subscripts of the vectors.

\paragraph{State evolution.} The `memory' terms $b^k\odot Q^{k-1}\odot \tTheta^{k-1}$ and $-c^k\odot \hbeta^k$ in \eqref{eq:SC_AMP} debias the iterates $\tTheta^k$ and $\beta^{k+1}$, ensuring that their empirical distributions are accurately captured by state evolution in the high-dimensional limit. These iterates have a block-wise distributional structure.   Recall from the model assumptions that the empirical distribution of the signal $\beta$ converges to the law of $\bar{\beta} \sim \text{Bernoulli}(\pi)$.  Theorem \ref{thm:SC_AMP_QGT} below shows that, for each $k\geq 1$ and $\sfc \in [\sfC]$, the empirical distribution of  $\beta_\sfc^k$ converges to the distribution of $(\chi_\sfc^k)^2\bar{\beta}+\chi_\sfc^kG$, where $G\sim\normal(0,1)$ is a standard Gaussian independent of $\bar{\beta}$, and the deterministic parameter $\chi_\sfc^k\in\mb{R}$ is defined below via the state evolution recursion.  Under this distributional assumption for $\beta^k$, the denoising function $f_k$ in \eqref{eq:optimal_fk} produces a Bayes-optimal (MMSE)  estimate of $\beta$ from $\beta^k$. 
%Indeed, we choose $f_k$ in this manner, and for $k \ge 1$, define:

 State evolution iteratively computes the  parameter  $\chi_\sfc^k \in\mb{R}$ as follows, for $k \ge 1$.  Letting 
\begin{align}
\label{eq:deltain_def}
    \deltain
    =\lim_{n \to \infty} \, \frac{n/\sfR}{p/\sfC}
    =\frac{\sfC}{\sfR}\delta  \, ,
\end{align}
given $\chi_\sfc^k$ for $\sfc \in [\sfC]$, we compute $\chi_\sfc^{k+1}$ as:
\begin{align}
     (\chi^{k+1}_{\sfc})^2
    =\sum_{\sfr=1}^\sfR\tW_{\sfr\sfc}
    \left(
    \sigma^2+\frac{1}{\deltain}\sum_{\sfc'=1}^\sfC\tW_{\sfr\sfc'}\E\bigg[\big(\bar{\beta}-f_k( \, (\chi_{\sfc'}^{k})^2\bar{\beta}+\chi_{\sfc'}^{k}G, \, \sfc' \, ) \big)^2\bigg]
    \right)^{-1}, 
        \label{eq:SC_SE_param1}
\end{align}
where $G \sim \normal(0,1)$ is independent of $\bar{\beta}$, and $\sigma^2$ is the variance defined in the noise scaling assumption on p.\pageref{par:noise_scaling}.
The recursion is initialized with 
\begin{equation}
    (\chi_{\sfc}^{1})^2=\sum_{\sfr=1}^\sfR\tW_{\sfr\sfc}\Big(  \sigma^2+\frac{1}{\deltain}\sum_{\sfc'=1}^\sfC\tW_{\sfr\sfc'} \text{Var}(\bar{\beta}) \Big)^{-1}, \quad \sfc \in [\sfC]. \label{eq:SEchi_init}
\end{equation}
For the reader's convenience, the important notation for the rest of this section is summarized in Table \ref{tab:notation}.   

%% Table summarizing notation
\begin{table}
\begin{center}
 \caption{Summary of key notation for SC-AMP}
 \label{tab:notation}
\begin{tabular}{ | l | l |} 
 \hline
$\beta  \in  \{0,1\}^p$ & Signal vector  \\ 
 $\beta^k  \in \reals^p$ & SC-AMP estimate before denoising \\ 
 $\hbeta^k  \in \reals^p$ & SC-AMP estimate after denoising  \\ 
 $\tbeta^k  \in \{0,1\}^p$ & Quantized version of $\hbeta^k$  (see \eqref{eq:quantized_est_def})  \\ 
 $\bar{\beta}   \in \{0,1 \}$ & Bernoulli($\pi$) random variable \\ 
 $\chi^k_\sfc  \in \reals^+$ & State evolution SNR parameter \\ 
  $ \delta = \lim_{n \to \infty} \frac{n}{p} $   & Overall sampling ratio \\
 $ \deltain = \frac{\sfC}{\sfR} \delta $ & Inner sampling ratio \\
  \hline
\end{tabular}
 \end{center}
\end{table}
The SC-AMP algorithm in \eqref{eq:SC_AMP} and its state evolution are equivalent to those proposed for a spatially coupled \emph{Gaussian} design \cite{Don13}. The key difference is that our algorithm uses a rescaled spatially coupled Bernoulli design. The theorem below shows that the state evolution guarantees remain valid for this setting. For an i.i.d.~design (where $\sfR=\sfC=1$), SC-AMP reduces to the standard AMP algorithm \cite{Bay11} for an  i.i.d.~Gaussian design.

\begin{theorem}[State evolution result for SC-AMP] \label{thm:SC_AMP_QGT}
Consider the QGT model \eqref{eq:QGT_model} with a spatially coupled design defined via the $(\omega, \Lambda)$ base matrix in Definition \ref{def:omega_Lambda_base_matrix}. 
Let the model assumptions in Section \ref{sec:prelim} and the noise scaling assumption (p.~\pageref{par:noise_scaling}) be satisfied. Then, for the SC-AMP algorithm in \eqref{eq:SC_AMP}, run on the recentered and rescaled QGT model \eqref{eq:rescaled_SC_QGT} with the denoising functions $f_k$ in \eqref{eq:optimal_fk}, we have  the following convergence guarantee. For each $k\geq0$ and $\sfc\in[\sfC]$: 
\begin{align}
\begin{split}
    \big(\beta_\sfc, \, \beta_\sfc^{k+1}\big)
    &\stackrel{W_2}{\rightarrow}
    \big(\bar{\beta}, \, (\chi_{\sfc}^{k+1})^2\bar{\beta}+\chi_{\sfc}^{k+1}G\big)
\end{split}
\label{eq:SC_AMP_asym_dist_QGT}
\end{align}
almost surely as $n,p\rightarrow\infty$ with $n/p\rightarrow\delta$. Here $\bar{\beta} \sim \text{Bernoulli}(\pi)$ and $G \sim \normal(0,1)$ are independent, and the parameter $\chi^{k+1}_{\sfc}$ is defined in \eqref{eq:SC_SE_param1}.
\end{theorem}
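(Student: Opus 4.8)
The plan is to reduce the SC-AMP iteration \eqref{eq:SC_AMP}, which runs on the rescaled spatially coupled Bernoulli design $\tXsc$, to an abstract AMP iteration for which a state evolution result is already available, and then to verify that the abstract state evolution specializes to the recursion \eqref{eq:SC_SE_param1}–\eqref{eq:SEchi_init}. Concretely, observe from \eqref{eq:rescaled_QGT_SC_matrix} that the entries of $\tXsc$ are independent, zero-mean, and have variances $\Var[\tXsc_{ij}] = \tW_{\sfr(i)\sfc(j)}/(n/\sfR)$ organized in a block-constant (``variance profile'') pattern dictated by the $(\omega,\Lambda)$ base matrix. Such matrices are \emph{generalized white noise matrices} in the sense of Wang et al.~\cite{Wan22}, provided one checks the moment conditions: the entries are bounded (hence all moments exist), the variance profile $\tW_{\sfr\sfc}$ converges (it is in fact fixed, since $\sfR,\sfC$ do not grow), and the normalization is the standard $1/n$-type scaling. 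So the first step is to carefully cast \eqref{eq:SC_AMP} as an instance of the abstract AMP iteration of \cite{Wan22}, matching the Onsager/memory terms $b^k \odot Q^{k-1} \odot \tTheta^{k-1}$ and $-c^k \odot \hbeta^k$ against the debiasing terms prescribed there, and checking that the denoising functions $f_k$ of \eqref{eq:optimal_fk} and the linear maps $\tTheta \mapsto Q^k \odot \tTheta$ are Lipschitz (or pseudo-Lipschitz) row-wise, so that the hypotheses of the \cite{Wan22} theorem are met.

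The second step is bookkeeping: the general state evolution of \cite{Wan22} produces, for a matrix with a block variance profile, a \emph{system} of scalar recursions indexed by the row-blocks and column-blocks, built from the profile $\tW_{\sfr\sfc}$, the limiting signal law $\bar\beta \sim \mathrm{Bernoulli}(\pi)$, the limiting noise law $\bar\Psi$ with variance $\sigma^2$, and the aspect ratios $\deltain = (\sfC/\sfR)\delta$. I would verify that this system collapses exactly to \eqref{eq:SC_SE_param1} with initialization \eqref{eq:SEchi_init}, and that the single free parameter per column-block can be taken to be $\chi_\sfc^k$ (i.e.\ that the joint Gaussian limit of $(\beta_\sfc,\beta_\sfc^{k+1})$ has the claimed ``matched-filter'' form $(\chi_\sfc^{k+1})^2\bar\beta + \chi_\sfc^{k+1} G$ rather than a more general two-parameter form). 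This is where the specific choices of $Q^k$, $c^k$, $b^k$ in \eqref{eq:Qc_def} and the Bayes-optimal $f_k$ in \eqref{eq:optimal_fk} pay off: with the Bayes denoiser, the MMSE identity $\E[(\bar\beta - f_k)^2] = \E[\bar\beta\, f_k] - (\E f_k)^2$-type simplifications force the covariance structure to be exactly one-dimensional per block, exactly as in the spatially coupled Gaussian analysis of \cite{Don13} and the GLM version of \cite{Cob23}. The $W_2$ (rather than merely $W_1$) convergence in \eqref{eq:SC_AMP_asym_dist_QGT} follows because the \cite{Wan22} result gives convergence of all empirical moments up to the pseudo-Lipschitz order, and the signal entries are bounded.

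The main obstacle is the first step — precisely matching the SC-AMP iteration to the abstract framework of \cite{Wan22}. Two points need care. First, the AMP literature's abstract iterations typically assume a matrix with i.i.d.\ or independent entries and a \emph{known} variance profile, whereas here the profile enters both the iteration (through $b^k$, $Q^k$, $c^k$, which are built from $\tW_{\sfr\sfc}$) and the matrix, so one must confirm that the ``correct'' Onsager coefficients in \cite{Wan22} are exactly the $b^k$, $c^k$ written in \eqref{eq:SC_AMP}; this is a direct but slightly intricate computation involving the block-wise averages $\frac{1}{n/\sfR}\sum_{j \in \mathcal J_\sfc}\partial_1 f_k$. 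Second, the signal $\beta$ is \emph{not} assumed i.i.d.\ — only $\beta \stackrel{W}{\to}\bar\beta$ — so one must ensure the abstract theorem is invoked in its form that allows a deterministic (Wasserstein-convergent) signal sequence rather than an i.i.d.\ prior; this is standard but must be stated. A subsidiary, purely cosmetic point is the extra $\sfC = O(1)$ tests used to form $\ty_i$ in \eqref{eq:y_tilde}: since $\sfC$ is fixed, these do not perturb $\delta$, and one notes this in passing. Once these identifications are in place, the theorem follows by directly quoting \cite{Wan22} and then doing the algebraic reduction of the state evolution.
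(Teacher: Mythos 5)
Your overall strategy — reduce SC-AMP to the universal AMP of \cite{Wan22} and then collapse the resulting block state evolution to \eqref{eq:SC_SE_param1} — is the same as the paper's, and your remarks about the Bayes-optimal denoiser forcing a one-parameter-per-block Gaussian limit and about handling a merely Wasserstein-convergent (non-i.i.d.) signal are on target. However, there is a genuine gap at the foundation of your first step: you assert that $\tXsc$ is a generalized white noise matrix in the sense of \cite{Wan22}, and it is not. The variance profile of $\tXsc$ is $S_{ij}=\sfR\tW_{\sfr(i)\sfc(j)}$, whose row averages equal $\sfR/\sfC=(\Lambda+\omega-1)/\Lambda$ in the interior (and are even smaller near the boundary row blocks), so the normalization condition $\max_i\big|\frac{1}{p}\sum_j S_{ij}-1\big|\to 0$ in Definition \ref{def:gen_white_noise_matrix} fails. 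The paper states this explicitly and the entire architecture of its proof is built to get around it: one defines a genuinely white matrix $\tX$ by setting $\tX_{ij}=\tXsc_{ij}/\sqrt{\sfR\tW_{\sfr(i)\sfc(j)}}$ on the nonzero blocks and filling the zero blocks with fresh independent variance-$1/n$ entries (see \eqref{eq:SC_LM_matrix}), and then reconstructs the action of $\tXsc$ by making the AMP denoisers vector-valued, with components carrying the factors $\sqrt{\sfR\tW_{1\sfc}},\dots,\sqrt{\sfR\tW_{\sfR\sfc}}$. This is the key missing idea in your proposal; without it the hypotheses of the \cite{Wan22} theorem are simply not met, and no amount of Onsager-coefficient bookkeeping repairs that.

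Two further points you would need to address to complete the argument. First, the iteration of \cite{Wan22} does not directly allow denoisers that depend on the row-block or column-block index; the paper accommodates $f_k(\cdot,\sfc)$ and $g_k(\cdot,\cdot,\sfr)$ by appending the block-index vectors $\mathcal{C}$ and $\mathcal{R}$ of \eqref{eq:S_c_and_S_r_def} as side-information inputs and conditioning on them in the state evolution (this is the content of Corollary \ref{cor:U-AMP}). Second, because the variance factors are absorbed into the denoisers, the reduction naturally passes through a matrix-valued intermediate iteration (the abstract matrix-AMP of \eqref{eq:abs_matrix_AMP}, with $l_E=2\sfR$ and $l_H=\sfC$), which is then unrolled into scalar U-AMP steps; your one-shot reduction would have to reproduce this embedding. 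These are not cosmetic issues: they are exactly where the ``slightly intricate computation'' you anticipate actually lives.
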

    The theorem is proved in Section \ref{sec:SC_AMP_QGT_proof}, where the SC-AMP algorithm is shown to be a special case of an AMP algorithm  for  a \emph{generalized} linear model with a generic spatially coupled design. We  prove Theorem \ref{thm:SC_AMP_QGT} by establishing a state evolution result for this general AMP algorithm (Theorem \ref{thm:SC_GAMP}).

\paragraph{Performance measures.}  Theorem \ref{thm:SC_AMP_QGT} allows us to compute the limiting values of performance measures such as the mean-squared error (MSE) and the normalized squared correlation, via the convergence property in \eqref{eq:sum_to_exp}. The MSE of the AMP estimate $\hbeta^k$ satisfies the following almost surely, for $k \ge 1$: 
\begin{align}
    \lim_{p\rightarrow\infty}
    \frac{1}{p}\|\beta-\hbeta^k\|_2^2
    &=
    \frac{1}{\sfC}\sum_{\sfc=1}^\sfC\E\left[\big(\bar{\beta}-f_k((\chi_{\sfc}^k)^2\bar{\beta}+\chi_{\sfc}^kG,\sfc)\big)^2\right],
    \label{eq:mean_sq_error}
\end{align}
while the normalized squared correlation of the AMP estimate $\hbeta^k$ satisfies:
\begin{align}
    \lim_{p\rightarrow\infty}
    \frac{\langle\hbeta^k,\beta\rangle}{\|\hbeta^k\|_2^2\cdot\|\beta\|_2^2}
    &=
    \frac{(\frac{1}{\sfC}\sum_{\sfc=1}^\sfC\E[f_k((\chi_{\sfc}^k)^2\bar{\beta}+\chi_{\sfc}^kG,\sfc)\cdot\bar{\beta}])^2}{(\frac{1}{\sfC}\sum_{\sfc=1}^\sfC\E[f_k((\chi_{\sfc}^k)^2\bar{\beta}+\chi_{\sfc}^kG,\sfc)^2])\cdot(\E[\bar{\beta}^2])}, \quad k\geq1.
    \label{eq:norm_sq_corr}
\end{align}

%\subsection{False Positive Rate and False Negative Rate} 

We can also obtain formulas for the limiting values of the false positive rate (FPR) and false negative rate (FNR).
The choice of $f_k$ in \eqref{eq:optimal_fk} outputs a vector in $\mb{R}^p$, but we can obtain an estimate in $\{0,1\}^p$ by thresholding the AMP iterate $\hbeta^K$ in the final iteration $K$ to output a hard decision. For some chosen constant $\zeta$, let us define the hard decision to be
\begin{align}
    \mathds{1}\left\{\hbeta_j^K>\zeta\right\}
    =\mathds{1}\left\{f_K\big(\beta_j^K,\sfc \big)>\zeta\right\},
    \quad \text{for } j \in \mathcal{J}_\sfc,\, \sfc \in [\sfC],
    \label{eq:s_c_k_def}
\end{align}
where the indicator function is applied component-wise to $\hbeta_\sfc^k$. That is, we declare large entries of $\hbeta^K$ to be one (i.e., defective) and small entries of $\hbeta^K$ to be zero (i.e., non-defective). Based on the above function, let us denote the estimated defective set as $\widehat{\mathcal{S}}=\left\{j:\hbeta_j^K>\zeta\right\}$.

The false positive rate (FPR) and the false negative rate (FNR) are defined as:
\begin{align}
    \text{FPR}&=
    \frac{\sum_{j=1}^p\mathds{1}\{\beta_j=0\cap j\in\widehat{\mathcal{S}}\}}{p-\sum_{j=1}^p\beta_j} \quad 
    \text{and} \quad 
    \text{FNR}=
    \frac{\sum_{j=1}^p\mathds{1}\{\beta_j=1\cap j\notin\widehat{\mathcal{S}}\}}{\sum_{j=1}^p\beta_j}.
    \label{eq:FPR_and_FNR}
\end{align}

\begin{corollary} \label{cor:FPR_FNR}
Under the same assumptions as for Theorem \ref{thm:SC_AMP_QGT}, with a threshold $\zeta\in[0,1]$ for the final iteration $K$, as $p\rightarrow\infty$, we have
\begin{align}
    \textup{FPR}\stackrel{a.s.}{\rightarrow}
    \frac{1}{\sfC}\sum_{\sfc=1}^\sfC\mb{P}\big[f_K\big(\chi_{\sfc}^KG,\sfc\big)>\zeta\big] \ 
    \text{ and } \
    \textup{FNR}\stackrel{a.s.}{\rightarrow}
    \frac{1}{\sfC}\sum_{\sfc=1}^{\sfC}\mb{P}\big[f_K\big(\big(\chi_{\sfc}^K\big)^2+\chi_{\sfc}^KG,\sfc\big)\leq\zeta\big].
\end{align}
\end{corollary}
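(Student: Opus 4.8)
The plan is to derive Corollary~\ref{cor:FPR_FNR} directly from the state evolution characterization in Theorem~\ref{thm:SC_AMP_QGT}, by expressing the FPR and FNR in \eqref{eq:FPR_and_FNR} as empirical averages of suitable test functions applied to the pair $(\beta_\sfc, \beta_\sfc^K)$ and then passing to the limit. First I would fix a column block $\sfc \in [\sfC]$ and look at the restriction of the numerator and denominator of the FPR to indices $j \in \mathcal{J}_\sfc$. The denominator $p - \sum_j \beta_j$ is $p$ times $\frac{1}{p}\sum_j (1-\beta_j)$, which by the model assumption $\beta \stackrel{W}{\to} \bar\beta \sim \mathrm{Bernoulli}(\pi)$ converges to $1-\pi = \mb{P}[\bar\beta = 0]$. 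For the numerator, $\sum_{j \in \mathcal{J}_\sfc} \mathds{1}\{\beta_j = 0\} \mathds{1}\{\hbeta_j^K > \zeta\} = \sum_{j \in \mathcal{J}_\sfc} \mathds{1}\{\beta_j = 0\} \mathds{1}\{f_K(\beta_j^K, \sfc) > \zeta\}$, so this is $\frac{p}{\sfC}$ times an empirical average over block $\sfc$ of the function $\phi(\beta_j, \beta_j^K) := \mathds{1}\{\beta_j = 0\}\mathds{1}\{f_K(\beta_j^K, \sfc) > \zeta\}$.

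Next I would apply Theorem~\ref{thm:SC_AMP_QGT}: since $(\beta_\sfc, \beta_\sfc^K) \stackrel{W_2}{\to} (\bar\beta, (\chi_\sfc^K)^2 \bar\beta + \chi_\sfc^K G)$ almost surely, the empirical average of any test function satisfying the polynomial growth condition \eqref{eq:poly_growth_cond} of order $2$ converges to the corresponding expectation. The summand $\phi$ is bounded, hence trivially satisfies polynomial growth, but it is not continuous, so I would handle the discontinuity in the usual way: approximate $\mathds{1}\{f_K(\cdot,\sfc) > \zeta\}$ from above and below by continuous functions that differ only on a vanishing-probability neighborhood of the threshold, using that the limiting law of $\beta_\sfc^K$ has no atom at the relevant level set (because $G$ is Gaussian and $\chi_\sfc^K > 0$, so $(\chi_\sfc^K)^2 \bar\beta + \chi_\sfc^K G$ has a density conditional on each value of $\bar\beta$, provided $f_K(\cdot,\sfc)$ is monotone so that $\{f_K > \zeta\}$ is an interval — which holds since $f_K$ is a Bayes posterior mean of a Bernoulli variable and is nondecreasing). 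Conditioning on $\bar\beta = 0$ via the joint convergence, the limit of the block-$\sfc$ empirical average of $\phi$ is $\mb{P}[\bar\beta = 0] \cdot \mb{P}[f_K(\chi_\sfc^K G, \sfc) > \zeta]$, since when $\bar\beta = 0$ the argument of $f_K$ reduces to $\chi_\sfc^K G$.

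Summing over $\sfc \in [\sfC]$ and dividing by the denominator, the factors of $\mb{P}[\bar\beta=0] = 1-\pi$ cancel (each block contributes $\frac{p}{\sfC}$ indices, and $\sum_\sfc \frac{1}{\sfC} = 1$), leaving $\mathrm{FPR} \stackrel{a.s.}{\to} \frac{1}{\sfC}\sum_{\sfc=1}^\sfC \mb{P}[f_K(\chi_\sfc^K G, \sfc) > \zeta]$. The FNR is entirely analogous: the denominator $\sum_j \beta_j$ converges to $p \cdot \pi$, and conditioning on $\bar\beta = 1$ the argument of $f_K$ becomes $(\chi_\sfc^K)^2 + \chi_\sfc^K G$, so the numerator restricted to block $\sfc$ contributes $\frac{p}{\sfC} \mb{P}[\bar\beta = 1] \cdot \mb{P}[f_K((\chi_\sfc^K)^2 + \chi_\sfc^K G, \sfc) \le \zeta]$, and after dividing by $p\pi$ and summing we obtain the claimed expression.

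The main obstacle is the continuity issue: Theorem~\ref{thm:SC_AMP_QGT} gives Wasserstein-$2$ convergence, which only transfers limits for \emph{continuous} test functions of polynomial growth, whereas the FPR/FNR involve the discontinuous indicator $\mathds{1}\{f_K(\cdot,\sfc) > \zeta\}$. The resolution is to verify that the limiting distribution assigns zero mass to the boundary $\{s : f_K(s,\sfc) = \zeta\}$; because $f_K(\cdot,\sfc)$ is the posterior mean of $\bar\beta$ given a Gaussian-noise observation it is continuous and strictly increasing on the region where it is non-constant, so its level set is a single point (or empty), and the absolutely continuous law of $(\chi_\sfc^K)^2\bar\beta + \chi_\sfc^K G$ (conditional on $\bar\beta$) gives it zero mass — after which a standard sandwiching argument with continuous approximants closes the proof. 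One should also note the edge case $\zeta = 0$ or $\zeta = 1$, where the relevant probabilities are still well-defined by the same argument.
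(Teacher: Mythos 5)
Your proof is correct and takes exactly the route the paper indicates: write the FPR and FNR as block-wise empirical averages of indicator test functions of $(\beta_\sfc,\beta_\sfc^K)$, invoke the Wasserstein convergence of Theorem \ref{thm:SC_AMP_QGT}, and handle the discontinuity of the indicators by a sandwiching argument, which is justified since $f_K(\cdot,\sfc)$ is strictly monotone and the limiting law of $\beta_\sfc^K$ has a density (and since $\mathds{1}\{\beta_j=0\}$ can be replaced by the continuous function $1-\beta_j$ on the binary support). The paper omits these details, deferring to the analogous i.i.d.\ result in \cite{Tan23d}, so your write-up fills in precisely the intended argument.
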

The result follows from Theorem \ref{thm:SC_AMP_QGT} by applying the convergence property to suitable indicator functions. The proof uses the same steps as the analogous result for i.i.d.~Bernoulli designs \cite[Corollary 5.2]{Tan23d}, and is omitted.
% \begin{proof}
%     The proof is presented in Appendix \ref{sec:FPR_FNR_proof}.
% \end{proof}

\subsection{Almost-Exact Recovery}

Given $\hbeta^k$, the SC-AMP estimate after $k$ iterations, let us define the quantized estimate to be
\begin{align}
    \tbeta^k_j
    =
    \begin{cases}
        1 &\text{ if } \hbeta^k_j>0.5, \\
        0 &\text{ otherwise.}
    \end{cases}
    \label{eq:quantized_est_def}
\end{align}
 Then, recalling the almost-exact recovery criterion in \eqref{eq:almost_exact_recovery}, the SC-AMP algorithm 
 achieves almost-exact recovery if 
%\begin{align}
$\lim_{k\rightarrow\infty}\lim_{p\rightarrow\infty}\frac{1}{p}\sum_{j=1}^p
    \mathds{1}\big\{\tbeta_j^k\neq\beta_j\big\}=0$.
   % \label{eq:almost-exact-recovery-SCAMP}
%\end{align}

In this section, we show  that the SC-AMP algorithm can attain almost-exact recovery with $n=o(p)$ tests, by proving that it 
attains almost-exact recovery 
for any $\delta>0$ (recall that $\delta=\lim_{n\rightarrow\infty}\frac{n}{p}$). To this end, we introduce the \textit{potential function} to analyze the asymptotic MSE of SC-AMP as $k \to \infty$. Potential functions are widely used  to characterize the limiting  MMSE and mutual information in high-dimensional estimation problems (see, e.g., \cite{reeves2019thereplica,barbier2020mutual}).   Here we will use it only to characterize the asymptotic MSE of the AMP algorithm, both with and without spatial coupling (see Theorem \ref{thm:asym_MSE}).

\begin{definition}
\label{def:potential function}
For $b\in[0,\Var(\bar{\beta})]$, $\delta > 0$, the scalar potential function for the rescaled QGT model is defined as
\begin{align}
    U(b;\delta)
    :=-\delta\Big(1-\frac{\sigma^2}{(b/\delta)+\sigma^2}\Big)+\delta\log\Big(1+\frac{b}{\delta\sigma^2}\Big)+2I\bigg(\bar{\beta};\sqrt{\frac{1}{(b/\delta)+\sigma^2}}\bar{\beta}+G\bigg).
    \label{eq:U_function}
\end{align}
Here the mutual information $I(\cdot;\cdot)$ is computed with $\bar{\beta}\sim P_{\bar{\beta}}$ independent of $G\sim\normal(0,1)$, and  $\sigma^2= \E\big[\bar{\Psi}^2\big]$ is the second moment of the rescaled noise (see 
p.~\pageref{par:noise_scaling}).
\end{definition}

Figure \ref{fig:potential_fn} plots the potential function for various values of $\delta$. The next theorem characterizes the limiting MSE of the AMP algorithm via the minimizers and stationary points of the potential function. 
For clarity, we refer to the AMP algorithm under the rescaled i.i.d.~QGT model in \eqref{eq:rescaled_iid_QGT} as the iid-AMP algorithm, and the AMP algorithm under the rescaled spatially coupled QGT model in \eqref{eq:rescaled_SC_QGT} as the SC-AMP algorithm. The SC-AMP algorithm in \eqref{eq:SC_AMP} reduces to  iid-AMP with the trivial base matrix ($\sfR=\sfC=1$). 
\begin{theorem}[MSE of SC-AMP and iid-AMP] \label{thm:asym_MSE}
Consider the QGT model \eqref{eq:QGT_model}, and let the model assumptions in Section \ref{sec:prelim} and the noise scaling assumption (p.~\pageref{par:noise_scaling}) be satisfied.
    \begin{enumerate}
        \item Consider a spatially coupled design defined via an $(\omega,\Lambda)$ base matrix (Definition \ref{def:omega_Lambda_base_matrix}). For any $\gamma>0$, there exist $\omega_0<\infty$ and $k_0<\infty$ such that for all $\omega>\omega_0$ and $k>k_0$, the asymptotic MSE of the SC-AMP algorithm almost surely satisfies:
        \begin{align}
            \lim_{p\rightarrow\infty}\frac{1}{p}\|\beta-\hbeta^k\|_2^2
            <  \left(\max\Bigg\{\argmin_{b\in[0,\Var(\bar{\beta})]}U(b;\deltain)\Bigg\}+\gamma\right)\frac{\Lambda+\omega}{\Lambda}.
            \label{eq:asym_MSE_SC_AMP}
        \end{align}
        \item With an i.i.d.~design (i.e., $1\times1$ base matrix with $W_{11}=1$), the asymptotic MSE of the iid-AMP algorithm almost surely satisfies:
        \begin{align}
            \lim_{k\rightarrow\infty}\lim_{p\rightarrow\infty}
            \frac{1}{p}\|\beta-\hbeta^k\|_2^2
            =\max\Big\{b\in[0,\Var(\bar{\beta})]:\partial_1U(b;\delta)=0{}\Big\},
            \label{eq:asym_MSE_AMP}
        \end{align}
        where $\partial_1$ denotes the partial derivative w.r.t.~the first argument.
    \end{enumerate}
\end{theorem}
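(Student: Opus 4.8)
The plan is to translate the state evolution recursion \eqref{eq:SC_SE_param1}--\eqref{eq:SEchi_init} into a form amenable to the potential-function machinery of \cite{Yed14}, and then invoke that machinery off the shelf. First I would reparametrize the SE iterates. The natural variable is not $\chi_\sfc^k$ directly but the per-block MSE $\phi_\sfc^k := \E[(\bar\beta - f_k((\chi_\sfc^k)^2\bar\beta + \chi_\sfc^k G,\sfc))^2]$. Using the fact (stated around \eqref{eq:optimal_fk}) that $f_k$ is the Bayes-optimal denoiser for the scalar channel $s = (\chi_\sfc^k)^2\bar\beta + \chi_\sfc^k G$, one has the MMSE identity that $\phi_\sfc^k$ equals $\mmse(\bar\beta; \chi_\sfc^k)$, the MMSE of estimating $\bar\beta$ from a Gaussian observation at signal-to-noise parameter $\chi_\sfc^k$. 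Substituting $Q_i^k$ from \eqref{eq:Qc_def} and the definition \eqref{eq:SC_SE_param1} then expresses $(\chi_\sfc^{k+1})^2$ as a weighted (via $\tW_{\sfr\sfc}$) harmonic-type average of $\sigma^2 + \deltain^{-1}\sum_{\sfc'}\tW_{\sfr\sfc'}\phi_{\sfc'}^k$. This is exactly the structure of a spatially coupled recursion whose single-system (uncoupled) version is governed by the scalar potential $U(\,\cdot\,;\delta)$ of Definition \ref{def:potential function}; the relation between $U$ and the scalar recursion comes from differentiating $U$ and checking that $\partial_1 U(b;\delta) = 0$ picks out precisely the fixed points of the uncoupled SE map (the three terms in \eqref{eq:U_function} are, respectively, the scalar MSE term, a $\log$-determinant term, and the twice-mutual-information term whose derivative is $-\mmse/2$ by the I-MMSE relation, so $\partial_1 U$ vanishes at fixed points of $b \mapsto \mmse(\bar\beta; 1/((b/\delta)+\sigma^2))$).

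For part 2 (the i.i.d.\ design), the argument is the standard one: the iid-AMP state evolution is a one-dimensional recursion $b^{k+1} = \mmse(\bar\beta; 1/((b^k/\delta)+\sigma^2))$, which is monotone, so it converges to a fixed point; one shows it converges to the \emph{largest} fixed point in $[0,\Var(\bar\beta)]$ because the recursion is initialized at $b^1$ corresponding to $\phi = \Var(\bar\beta)$ (see \eqref{eq:SEchi_init}), i.e., from the "uninformative" end, and the map is monotone non-decreasing in $b$, so the iterates decrease monotonically to the largest fixed point below the starting value. Since fixed points of this map coincide with stationary points $\partial_1 U(b;\delta)=0$, \eqref{eq:asym_MSE_AMP} follows, and \eqref{eq:mean_sq_error} of Theorem \ref{thm:SC_AMP_QGT} converts the SE limit into the AMP MSE limit.

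For part 1 (the spatially coupled design), the key input is the potential-function / "threshold saturation" analysis of \cite{Yed14}: for a coupled recursion built from a scalar system with potential $U$, using an $(\omega,\Lambda)$ band structure, the fixed points of the coupled SE — in the regime $\omega$ large, $k$ large — have per-block MSE that is bounded (uniformly, up to the $\frac{\Lambda+\omega}{\Lambda}$ boundary-correction factor accounting for the extra $\omega-1$ row blocks) by the \emph{minimizer} of $U(\,\cdot\,;\deltain)$ rather than merely a stationary point. I would invoke the relevant proposition from \cite{Yed14} after verifying its hypotheses for our $U$: smoothness in $b$, the correct monotonicity/degradation structure of the coupled map (which follows from the MMSE being non-decreasing in the noise level, i.e., a "degraded" family of channels), and that $\deltain = \frac{\sfC}{\sfR}\delta \to \delta\cdot\frac{\Lambda}{\Lambda+\omega-1}$, matching the boundary factor. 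Then \eqref{eq:mean_sq_error} again converts the SE statement into the AMP-MSE statement \eqref{eq:asym_MSE_SC_AMP}, where the averaging over blocks $\frac1\sfC\sum_\sfc$ combined with the uniform per-block bound and the boundary correction yields the stated inequality with the extra slack $\gamma$ absorbing the $\omega\to\infty$, $k\to\infty$ limits.

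The main obstacle I anticipate is \textbf{verifying that our specific SE recursion fits the abstract template of \cite{Yed14} exactly}, in particular (i) rewriting \eqref{eq:SC_SE_param1} so that the $\tW_{\sfr\sfc}$-weighting is a genuine averaging operator with the normalization assumed in \cite{Yed14} (one must track how the asymmetry $\sfR = \Lambda+\omega-1 \neq \Lambda = \sfC$ and the reciprocal/harmonic-mean form of the update map onto their "forward/backward" operators), and (ii) establishing the monotonicity (channel degradation) structure of the coupled map, which for our problem reduces to the monotonicity of $\mmse(\bar\beta;\cdot)$ in the SNR — true but needing a clean statement — together with the monotonicity of $b\mapsto 1/((b/\delta)+\sigma^2)$. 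Once the template is matched, the boundary term $\frac{\Lambda+\omega}{\Lambda}$ and the convergence as $\omega,k\to\infty$ are routine consequences of the cited results; the connection $\partial_1 U(b;\delta)=0 \Leftrightarrow$ (uncoupled fixed point) is a short I-MMSE computation.
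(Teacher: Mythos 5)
Your proposal is correct and follows essentially the same route as the paper: rewrite the state evolution in terms of the MMSE function, match it to the coupled scalar recursion template of \cite{Yed14} (whose potential is exactly $U(\cdot;\deltain)$), invoke \cite[Theorem 1]{Yed14} for the spatially coupled bound, and use monotone convergence to the largest fixed point (equivalently, largest stationary point of $U$) for the i.i.d.\ case. The only point worth tightening is the origin of the $\frac{\Lambda+\omega}{\Lambda}$ factor: \cite{Yed14} bounds the $\omega$-windowed averages $x_\sfr^\infty=\sum_\sfc\tW_{\sfr\sfc}\psi_\sfc^\infty$, not each $\psi_\sfc^\infty$ individually, and the paper converts this into a bound on $\frac{1}{\sfC}\sum_\sfc\psi_\sfc^\infty$ by partitioning the blocks into $\lceil\Lambda/\omega\rceil$ disjoint windows, which is where that factor arises.
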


\begin{figure}[t]
    \centering
    \input{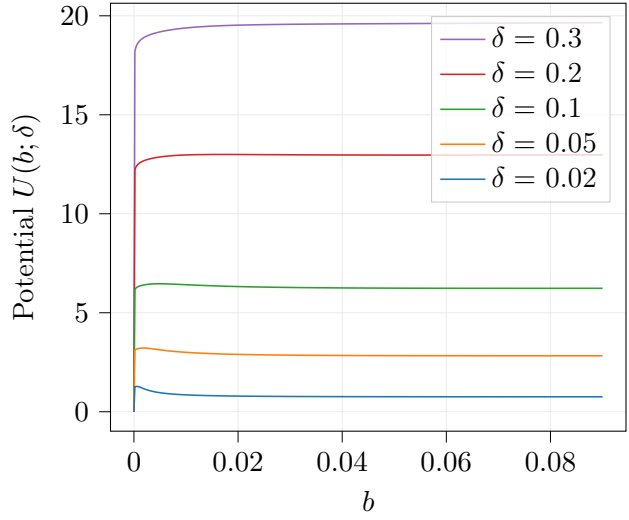}
    \caption{$U(b;\delta)$ vs.~$b$, for different $\delta$.  $\pi=0.1$ and $\sigma = 1\times 10^{-30}$.}
    \label{fig:potential_fn}
\end{figure}

The proof is given in Section \ref{sec:proof:thm:asym_MSE}. Part 1 of the theorem 
says that, for sufficiently large base matrix parameters (with $\omega \ll \Lambda$), the MSE of the SC-AMP algorithm is bounded by the largest minimizer of the potential function. 
(The $\max\{ \cdot \}$ indicates that if there are multiple minimizers, the largest one is chosen.) Part 2 of the theorem says that the MSE of iid-AMP algorithm is given by the largest stationary point of the potential function. In Figure \ref{fig:potential_fn}, we observe that for $\delta=0.05$ and $\delta=0.02$, the  unique  minimizer is $b=0$,  but the largest stationary point is strictly larger than zero. This implies that the limiting MSE of SC-AMP algorithm is 0, but that of iid-AMP algorithm is strictly larger than 0. The next lemma quantifies this observation, showing that for any $\delta >0$, the largest minimizer of the potential function tends to zero as the noise variance $\sigma \to 0$.

\begin{lemma} \label{lem:QGT_pot_fun_vanish}
    Consider the scalar potential function $U(b;\delta)$  in \eqref{eq:U_function} with $\delta>0$. For any $\Delta\in(0,\delta)$, there exists $\sigma_0(\Delta)>0$ such that for all $\sigma<\sigma_0(\Delta)$, we have the rate
    \begin{align}
        \max\bigg\{\argmin_{b\in(0,\Var(\bar{\beta})]}U(b;\delta)\bigg\}
         < \frac{7}{2} \delta  \left(\sigma^{2-2\Delta/\delta}\right).
        \label{eq:QGT_pot_fun_rate}
    \end{align}
\end{lemma}

The proof is given in Appendix \ref{proof:lem:QGT_pot_fun_vanish}. Using Lemma \ref{lem:QGT_pot_fun_vanish} in Part 1 of Theorem \ref{thm:asym_MSE} yields the following bound on the MSE of the SC-AMP algorithm in the low noise regime.
\begin{theorem}[MSE of SC-AMP in the low-noise regime] \label{thm:achievability}
    Consider the setup of part 1 of Theorem \ref{thm:asym_MSE}, for any $\delta >0$. Then for any $\epsilon >0$ and $\Delta \in (0, \delta/(1+\epsilon))$, there exists $\sigma_0(\Delta) >0$ such that the following holds for any noise variance $\sigma < \sigma_0(\Delta)$.  There exist finite $\omega_0$ and $k_0$ such that for all $\omega>\omega_0$, $k>k_0$ and $\Lambda$ sufficiently large, the asymptotic MSE of the SC-AMP algorithm almost surely satisfies:
    \begin{align*}
        \lim_{p\rightarrow\infty}\frac{1}{p}\|\beta-\hbeta^k\|_2^2
        <  4 \delta  \sigma^{2-2\Delta(1+\epsilon)/\delta} + \epsilon.
    \end{align*}
\end{theorem}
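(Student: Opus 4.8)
\emph{The plan} is to substitute the potential-function estimate of Lemma~\ref{lem:QGT_pot_fun_vanish} into the MSE bound of Part~1 of Theorem~\ref{thm:asym_MSE}, applied with the coupled sampling ratio $\deltain=\tfrac{\sfC}{\sfR}\delta=\tfrac{\Lambda}{\Lambda+\omega-1}\delta$ playing the role of $\delta$; the analytic work has already been done in those two results, and what remains is to chase the quantifiers. Given $\epsilon>0$ and $\Delta\in(0,\delta/(1+\epsilon))$, I would first invoke Part~1 of Theorem~\ref{thm:asym_MSE} with the choice $\gamma=\epsilon/4$ to obtain finite $\omega_0,k_0$ such that, for every $\omega>\omega_0$, $k>k_0$ and $\Lambda\ge 2\omega-1$,
\[
\lim_{p\to\infty}\tfrac1p\|\beta-\hbeta^k\|_2^2
\;<\;
\Big(\max\big\{\argmin_{b\in[0,\Var(\bar{\beta})]}U(b;\deltain)\big\}+\tfrac{\epsilon}{4}\Big)\,\tfrac{\Lambda+\omega}{\Lambda}.
\]
Then, with $\omega>\omega_0$ fixed, I would take $\Lambda$ large enough that $\Lambda\ge(\omega-1)/\epsilon$, which by a one-line computation forces $\deltain\ge\delta/(1+\epsilon)$ (and hence $\deltain>\Delta$, so that Lemma~\ref{lem:QGT_pot_fun_vanish} is applicable with second argument $\deltain$), and large enough that $\tfrac{\Lambda+\omega}{\Lambda}$ is as close to $1$ as the final step requires.

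\emph{Combining the two bounds.} Since $U(\,\cdot\,;\deltain)$ is continuous on the compact interval $[0,\Var(\bar{\beta})]$, every minimiser of $U(\,\cdot\,;\deltain)$ over $[0,\Var(\bar{\beta})]$ is either $0$ or also a minimiser over $(0,\Var(\bar{\beta})]$, so $\max\{\argmin_{b\in[0,\Var(\bar{\beta})]}U(b;\deltain)\}\le\max\{\argmin_{b\in(0,\Var(\bar{\beta})]}U(b;\deltain)\}$ (and if the right-hand $\argmin$ is empty, the left-hand set is $\{0\}$ and the target bound is immediate for $\Lambda$ large). Lemma~\ref{lem:QGT_pot_fun_vanish} with $\deltain$ in place of $\delta$ then gives, for all $\sigma<\sigma_0(\Delta)$,
\[
\max\big\{\argmin_{b\in(0,\Var(\bar{\beta})]}U(b;\deltain)\big\}\;<\;\tfrac{7}{2}\,\deltain\,\sigma^{\,2-2\Delta/\deltain}.
\]
Now I would use $\deltain\le\delta$ to bound the prefactor and $\deltain\ge\delta/(1+\epsilon)$ to control the exponent: the latter yields $2-2\Delta/\deltain\ge 2-2\Delta(1+\epsilon)/\delta>0$, so that (having taken $\sigma_0(\Delta)\le1$) $\sigma^{\,2-2\Delta/\deltain}\le\sigma^{\,2-2\Delta(1+\epsilon)/\delta}$. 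Writing $M:=\tfrac{7}{2}\delta\,\sigma^{\,2-2\Delta(1+\epsilon)/\delta}$, the chain above gives $\lim_{p\to\infty}\tfrac1p\|\beta-\hbeta^k\|_2^2<(M+\tfrac{\epsilon}{4})\,\tfrac{\Lambda+\omega}{\Lambda}$. Finally, since $4\delta\,\sigma^{\,2-2\Delta(1+\epsilon)/\delta}+\epsilon=\tfrac{8}{7}M+\epsilon>M+\tfrac{\epsilon}{4}$ and $\tfrac{\Lambda+\omega}{\Lambda}\to1$ as $\Lambda\to\infty$ with $\omega$ fixed, enlarging $\Lambda$ once more gives $(M+\tfrac{\epsilon}{4})\tfrac{\Lambda+\omega}{\Lambda}<4\delta\,\sigma^{\,2-2\Delta(1+\epsilon)/\delta}+\epsilon$, which is the claimed inequality.

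\emph{The one subtlety} I would flag concerns the order of quantifiers: $\sigma_0(\Delta)$ must be fixed before $\omega,k,\Lambda$, hence before $\deltain$ is determined, whereas Lemma~\ref{lem:QGT_pot_fun_vanish} furnishes a threshold that depends on its second argument. This is harmless because, in the construction above, $\deltain$ is always confined to the fixed interval $[\delta/(1+\epsilon),\delta)$, which is bounded away from $\Delta$; one therefore sets $\sigma_0(\Delta):=\min\{1,\ \inf_{\delta'\in[\delta/(1+\epsilon),\delta]}\sigma_0'(\Delta;\delta')\}$, where $\sigma_0'(\Delta;\delta')$ is the threshold produced by Lemma~\ref{lem:QGT_pot_fun_vanish} for parameters $(\Delta,\delta')$, and one checks from its proof that this infimum is strictly positive (the threshold varies continuously in $\delta'$ on a compact interval on which it stays away from $0$). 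I expect this uniformity check to be the only mildly non-trivial point; everything else is the bookkeeping above, in which the slack factor $(1+\epsilon)$ in the hypothesis $\Delta<\delta/(1+\epsilon)$ is exactly what lets the spatial-coupling rate loss $\deltain<\delta$ be absorbed into the exponent, while the inflation of the constant from $\tfrac72$ to $4$ together with the extra additive $\epsilon$ soaks up both $\gamma=\epsilon/4$ and the vanishing overhead $\tfrac{\Lambda+\omega}{\Lambda}\to1$.
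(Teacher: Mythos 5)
Your proposal is correct and follows essentially the same route as the paper's proof: apply Part 1 of Theorem \ref{thm:asym_MSE} with a small $\gamma$, take $\Lambda/\omega$ large so that $\frac{\Lambda+\omega}{\Lambda}\le 1+\epsilon$ and $\deltain$ is close to $\delta$, and then insert Lemma \ref{lem:QGT_pot_fun_vanish} evaluated at $\deltain$, absorbing the constant $\tfrac72\to4$ and the additive slack into $\epsilon$. Your bookkeeping is in fact more careful than the paper's: you correctly identify that the required inequality is $\deltain\ge\delta/(1+\epsilon)$ (the paper's proof writes $\deltain<\delta/(1+\epsilon)$, which is a sign typo, since $\deltain\uparrow\delta$ as $\Lambda/\omega\to\infty$ and the exponent comparison $\sigma^{2-2\Delta/\deltain}\le\sigma^{2-2\Delta(1+\epsilon)/\delta}$ needs the lower bound on $\deltain$), and your uniformity concern about $\sigma_0$ is resolved even more simply than you suggest, because the threshold produced in the proof of Lemma \ref{lem:QGT_pot_fun_vanish} depends only on $\Delta$ (via Lemma \ref{lem:mutual_info_bound}) and not on the second argument of $U$.
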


\begin{proof}
   Let $\epsilon, \gamma >0$.  Recalling that $\deltain  =\delta\frac{\Lambda}{\Lambda+\omega-1}$, for sufficiently large $\Lambda/\omega$ we  have  $ \deltain < \frac{\delta}{1+\epsilon}$. Then, from  Part 1 of Theorem \ref{thm:asym_MSE}, for $k > k_0$, $\omega > \omega_0$ and sufficiently large $\Lambda/\omega$, we have 
    \begin{align}
        \lim_{p\rightarrow\infty}\frac{1}{p}\|\beta-\hbeta^k\|_2^2
     < \left( 
        \max\left\{\argmin_{b\in[0,\Var(\bar{\beta})]}U(b;\deltain)\right\} + \gamma \right)(1+\epsilon).
        \label{eq:asym_MSE_bound}
    \end{align}
Taking $\gamma$ to be small enough and using Lemma \ref{lem:QGT_pot_fun_vanish} in \eqref{eq:asym_MSE_bound} gives the result.
\end{proof}

\begin{corollary}[SC-AMP achieves almost exact recovery for any $\delta >0$] Consider the setting and assumptions of Theorem \ref{thm:achievability}. Let $\tbeta^k$ be the quantized estimate produced from the SC-AMP iterate $\hbeta^k$, according to \eqref{eq:quantized_est_def}. Then, almost surely we have:
\begin{align}
    \lim_{p\rightarrow\infty} 
    \frac{1}{p}\sum_{j=1}^p\mathds{1}\big\{\tbeta_j^k\neq\beta_j\big\} \le 4 \lim_{p\rightarrow\infty}\frac{1}{p}\|\beta-\hbeta^k\|_2^2 < 4   \left(4\delta \sigma^{2-2\Delta(1+\epsilon)/\delta} + \epsilon\right).
    \label{eq:Hamming_error_vs_MSE}
\end{align}
In particular, for noiseless QGT we have $\lim_{k \to \infty}  \lim_{p\rightarrow\infty} 
    \frac{1}{p}\sum_{j=1}^p\mathds{1}\big\{\tbeta_j^k\neq\beta_j\big\} =0$ almost surely for any $\delta >0$.
    \label{cor:SC_AMP_almost_exact}
\end{corollary}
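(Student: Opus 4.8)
The plan is to reduce the Hamming error to the mean-squared error by an elementary pointwise bound, and then feed this into Theorem~\ref{thm:achievability}. First I would record the inequality
\[
\mathds{1}\big\{\tbeta_j^k\neq\beta_j\big\}\le 4\big(\hbeta_j^k-\beta_j\big)^2, \qquad j\in[p],
\]
which holds because $\beta_j\in\{0,1\}$ and $\tbeta^k_j$ is obtained by thresholding $\hbeta^k_j$ at $0.5$ (see \eqref{eq:quantized_est_def}): if $\tbeta_j^k\neq\beta_j$ then necessarily $|\hbeta_j^k-\beta_j|\ge \tfrac12$, so $(\hbeta_j^k-\beta_j)^2\ge \tfrac14$ and the bound holds; if $\tbeta_j^k=\beta_j$ the left side is $0$ and the bound is trivial. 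Summing over $j$ and dividing by $p$ gives $\frac1p\sum_{j=1}^p\mathds{1}\{\tbeta_j^k\neq\beta_j\}\le \frac4p\|\beta-\hbeta^k\|_2^2$ for every $p$; passing to the limit $p\to\infty$ on the right-hand side (which exists almost surely by \eqref{eq:mean_sq_error}) yields the first inequality in \eqref{eq:Hamming_error_vs_MSE}, and the second inequality is then immediate from Theorem~\ref{thm:achievability}.

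For the noiseless statement I would specialize to $\sigma=0$. Fix $\epsilon>0$ and choose any $\Delta\in(0,\delta/(1+\epsilon))$, so that the exponent $2-2\Delta(1+\epsilon)/\delta$ is strictly positive. Since $\sigma=0<\sigma_0(\Delta)$, Theorem~\ref{thm:achievability} applies: there exist finite $\omega_0,k_0$ such that for all $\omega>\omega_0$, $k>k_0$, and $\Lambda$ sufficiently large, $\lim_{p\to\infty}\frac1p\|\beta-\hbeta^k\|_2^2< 4\delta\cdot 0^{\,2-2\Delta(1+\epsilon)/\delta}+\epsilon=\epsilon$. Combining with the pointwise bound above, $\lim_{p\to\infty}\frac1p\sum_{j=1}^p\mathds{1}\{\tbeta_j^k\neq\beta_j\}< 4\epsilon$ for every $k>k_0$, hence $\lim_{k\to\infty}\lim_{p\to\infty}\frac1p\sum_{j=1}^p\mathds{1}\{\tbeta_j^k\neq\beta_j\}\le 4\epsilon$. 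As $\epsilon>0$ was arbitrary and the quantity is nonnegative, the double limit equals $0$, giving almost-exact recovery for any $\delta>0$.

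I do not expect a substantial obstacle here, since essentially all the work is done by Theorem~\ref{thm:achievability}; the points needing mild care are (i) the inner $p\to\infty$ limit of the Hamming error, where it suffices to pass to the limit on the right-hand side of the pointwise bound rather than to prove a separate convergence statement for the (discontinuous) indicator sum; and (ii) the bookkeeping of quantifiers, where $\omega_0$ and $k_0$ depend on $\epsilon$ through Theorem~\ref{thm:achievability}, but since $\epsilon$ is sent to zero only \emph{after} the $k\to\infty$ limit, the argument is free of circularity. A brief remark would note that the threshold $0.5$ used in \eqref{eq:quantized_est_def} is exactly what makes the constant $4$ in the bound work.
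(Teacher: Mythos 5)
Your proposal is correct and follows essentially the same route as the paper: the same pointwise bound $\mathds{1}\{\tbeta_j^k\neq\beta_j\}\le 4(\hbeta_j^k-\beta_j)^2$ coming from the $0.5$ threshold, the same appeal to Theorem~\ref{thm:achievability} for the second inequality, and the same $\sigma=0$, $\epsilon\to 0$ argument (the paper phrases it as taking a sequence $\epsilon_k\to 0$) for the noiseless conclusion. The only cosmetic difference is that the paper justifies the existence of $\lim_{p\to\infty}\frac1p\sum_j\mathds{1}\{\tbeta_j^k\neq\beta_j\}$ itself via the state evolution result (Theorem~\ref{thm:SC_AMP_QGT}), whereas you work only with the limit of the upper bound; your point (i) correctly flags this as the one spot needing care.
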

\begin{proof}
    From the definition of the quantized estimate in \eqref{eq:quantized_est_def}, it follows that for  $j \in [p]$, we have $ | \beta_j-\hbeta_j^k| \ge (0.5)\mathds{1}\big\{\tbeta_j^k\neq\beta_j\big\}$, which implies $\mathds{1}\big\{\tbeta_j^k\neq\beta_j\big\} \le 4 | \beta_j-\hbeta_j^k|^2 $. This gives the first inequality in \eqref{eq:Hamming_error_vs_MSE}. The second inequality follows from Theorem \ref{thm:achievability}. The result for the noiseless case follows by setting $\sigma=0$, and taking a sequence $(\epsilon_k)$ such that $\epsilon_k \to 0$ as $k \to \infty$.  We note that $\lim_{p\rightarrow\infty} 
    \frac{1}{p}\sum_{j=1}^p\mathds{1}\big\{\tbeta_j^k\neq\beta_j\big\}$ exists for each $k$ by the state evolution result in Theorem \ref{thm:SC_AMP_QGT}.
\end{proof}

The guarantees in Corollary \ref{cor:SC_AMP_almost_exact} are analogous to those in \cite[Theorem 1.7 and Corollary 1.8]{Don13} for a linear model with a spatially coupled \emph{Gaussian} design matrix. Specifically, \cite[Theorem 1.7]{Don13}  shows that  when the  sampling ratio $\delta$ is larger than the R{\'e}nyi information dimension of the signal prior, the MSE of the SC-AMP algorithm satisfies
$\lim_{k \to \infty}\lim_{p\rightarrow\infty}\frac{1}{p}\|\beta-\hbeta^k\|_2^2  < C \sigma^2$, for sufficiently small noise variance $\sigma^2$. Here the constant $C$ depends on  $\delta$ and on the prior. The R{\'e}nyi information dimension for a Bernoulli prior is 0, so Corollary \ref{cor:SC_AMP_almost_exact} is consistent with the result in \cite{Don13}. The key difference is that we use a binary-valued SC design for the QGT model rather than the Gaussian one in \cite{Don13}. Our analysis of the fixed point of the SC state evolution to establish Theorem \ref{thm:asym_MSE}  is also simpler than that in \cite{Don13}, where the authors use a continuum version of the state evolution along with a perturbation argument. In contrast, we use a straightforward  potential function analysis based on the recipe provided in \cite{Yed14} for analyzing coupled recursions.

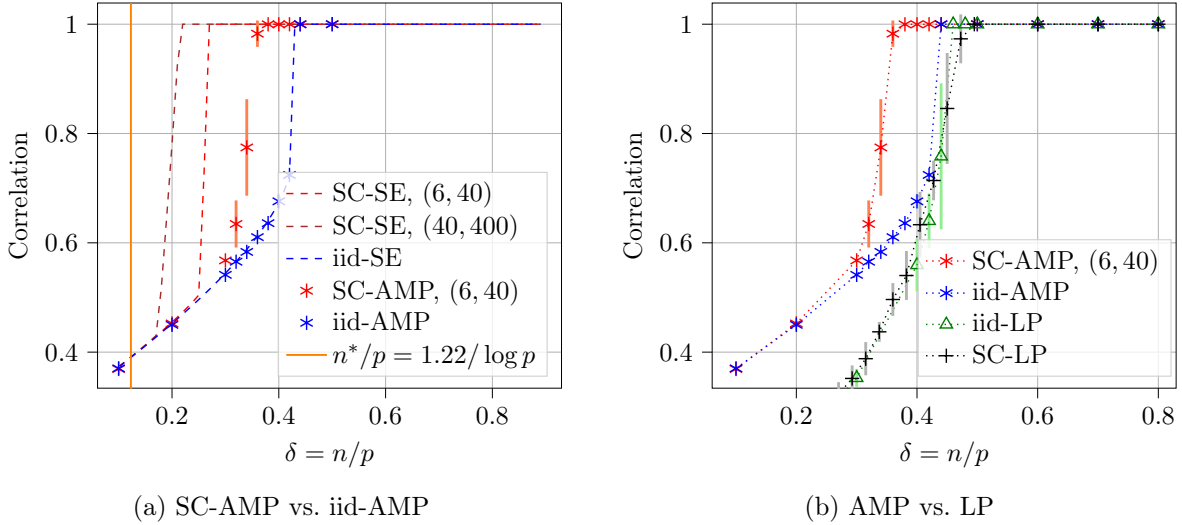
\begin{figure}[t]
\begin{subfigure}[b]{0.49\textwidth}
  \centering
  % This file was created with tikzplotlib v0.10.1.
\begin{tikzpicture}[scale=0.9]

\definecolor{coral}{RGB}{255,127,80}
\definecolor{darkgray176}{RGB}{176,176,176}
\definecolor{lightblue}{RGB}{173,216,230}
\definecolor{lightgray204}{RGB}{204,204,204}
\definecolor{purple}{RGB}{128,0,128}
\definecolor{brown}{RGB}{165,42,42}
\definecolor{lightgreen}{RGB}{144,238,144}
\definecolor{green}{RGB}{0,128,0}
\definecolor{mistyrose}{RGB}{255,228,225}
\definecolor{pink}{RGB}{255,192,203}

\begin{axis}[
legend cell align={left},
legend style={
  fill opacity=0.8,
  draw opacity=1,
  text opacity=1,
  at={(0.97,0.03)},
  anchor=south east,
  draw=lightgray204
},
tick align=outside,
tick pos=left,
x grid style={darkgray176},
xlabel={\(\displaystyle \delta=n/p\)},
xmajorgrids,
xmin=0.0605, xmax=0.9295,
xtick style={color=black},
y grid style={darkgray176},
ylabel={Correlation},
ymajorgrids,
ymin=0.334052228213881, ymax=1.03875983911359,
ytick style={color=black}
]
\addplot[ultra thick, red, dashed]
table {%
0.1 0.37304487583027
0.11 0.380732001906715
0.12 0.388504949934224
0.13 0.39637007550337
0.14 0.404334864459401
0.15 0.412408065134802
0.16 0.420600216309898
0.17 0.428923996015686
0.18 0.43739506317044
0.19 0.446033447098843
0.2 0.454865329569874
0.21 0.463926493758212
0.22 0.473269658178966
0.23 0.482979351193511
0.24 0.493213882101558
0.25 0.504375555649583
0.26 0.728624915866601
0.27 0.999999426696844
0.28 0.999999426696844
0.29 0.999999426696844
0.3 0.999999426696844
0.31 0.999999426696844
0.32 0.999999426696844
0.33 0.999999426696844
0.34 0.999999426696844
0.35 0.999999426696844
0.36 0.999999426696844
0.37 0.999999426696844
0.38 0.999999426696844
0.39 0.999999426696844
0.4 0.999999426696844
0.41 0.999999426696844
0.42 0.999999426696844
0.43 0.999999426696844
0.44 0.999999426696844
0.45 0.999999426696844
0.46 0.999999426696844
0.47 0.999999426696844
0.48 0.999999426696844
0.49 0.999999426696844
0.5 0.999999426696845
0.51 0.999999426696843
0.52 0.999999426696843
0.53 0.999999426696843
0.54 0.999999426696843
0.55 0.999999426696843
0.56 0.999999426696843
0.57 0.999999426696843
0.58 0.999999426696843
0.59 0.999999426696843
0.6 0.999999426696843
0.61 0.999999426696843
0.62 0.999999426696843
0.63 0.999999426696846
0.64 0.999999426696843
0.65 0.999999426696843
0.66 0.999999426696843
0.67 0.999999426696843
0.68 0.999999426696843
0.69 0.999999426696843
0.7 0.999999426696843
0.71 0.999999426696843
0.72 0.999999426696843
0.73 0.999999426696843
0.74 0.999999426696843
0.75 0.999999426696843
0.76 0.999999426696843
0.77 0.999999426696843
0.78 0.999999426696843
0.79 0.999999426696842
0.8 0.999999426696842
0.81 0.999999426696848
0.82 0.999999426696836
0.83 0.999999426696836
0.84 0.999999426696836
0.85 0.999999426696836
0.86 0.999999426696836
0.87 0.999999426696836
0.88 0.999999426696836
0.89 0.999999426696836
};
\addlegendentry{SC-SE, ($6, 40$)}

\addplot [ultra thick, brown, dashdotted]
table {%
0.1 0.373165147353436
0.11 0.380897016828789
0.12 0.388729886858495
0.13 0.396676333014112
0.14 0.404753844185859
0.15 0.412989109933236
0.16 0.421429732888253
0.17 0.430198580493941
0.18 0.517529931825281
0.19 0.647368863481989
0.2 0.780317153095167
0.21 0.924691455386218
0.22 0.999999426696844
0.23 0.999999426696844
0.24 0.999999426696844
0.25 0.999999426696844
0.26 0.999999426696844
0.27 0.999999426696844
0.28 0.999999426696844
0.29 0.999999426696844
0.3 0.999999426696844
0.31 0.999999426696844
0.32 0.999999426696844
0.33 0.999999426696844
0.34 0.999999426696844
0.35 0.999999426696844
0.36 0.999999426696844
0.37 0.999999426696844
0.38 0.999999426696844
0.39 0.999999426696844
0.4 0.999999426696844
0.41 0.999999426696844
0.42 0.999999426696844
0.43 0.999999426696844
0.44 0.999999426696844
0.45 0.999999426696844
0.46 0.999999426696844
0.47 0.999999426696844
0.48 0.999999426696844
0.49 0.999999426696844
0.5 0.999999426696844
0.51 0.999999426696844
0.52 0.999999426696844
0.53 0.999999426696844
0.54 0.999999426696844
0.55 0.999999426696844
0.56 0.999999426696844
0.57 0.999999426696844
0.58 0.999999426696844
0.59 0.999999426696844
0.6 0.999999426696844
0.61 0.999999426696844
0.62 0.999999426696844
0.63 0.999999426696844
0.64 0.999999426696844
0.65 0.999999426696843
0.66 0.999999426696843
0.67 0.999999426696844
0.68 0.999999426696843
0.69 0.999999426696843
0.7 0.999999426696843
0.71 0.999999426696843
0.72 0.999999426696843
0.73 0.999999426696843
0.74 0.999999426696843
0.75 0.999999426696843
0.76 0.999999426696843
0.77 0.999999426696843
0.78 0.999999426696843
0.79 0.99999942669683
0.8 0.999999426696836
0.81 0.999999426696836
0.82 0.999999426696836
0.83 0.999999426696836
0.84 0.999999426696836
0.85 0.999999426696836
0.86 0.999999426696836
0.87 0.999999426696836
0.88 0.999999426696836
0.89 0.999999426696836
};
\addlegendentry{SC-SE, ($40,400$)}

\addplot [ultra thick, cyan!65, dotted]
table {%
0.1 0.372741206678121
0.11 0.380344188045443
0.12 0.388016026875203
0.13 0.395760246077107
0.14 0.403580159056106
0.15 0.41147994258627
0.16 0.419464288090143
0.17 0.427537845896277
0.18 0.435706237135899
0.19 0.4439755571778
0.2 0.452352983951659
0.21 0.460845533762316
0.22 0.469462231621668
0.23 0.478212250847547
0.24 0.487106592395222
0.25 0.496157999057228
0.26 0.505379989633053
0.27 0.514789521217478
0.28 0.524404946590207
0.29 0.534249254954896
0.3 0.544348542827835
0.31 0.554734360284741
0.32 0.565446200432558
0.33 0.5765318681258
0.34 0.588052363509111
0.35 0.600086848670856
0.36 0.612741733421028
0.37 0.626163666266409
0.38 0.640570160134388
0.39 0.65630042694948
0.4 0.673949522902346
0.41 0.694756745770847
0.42 0.722454078914969
0.43 0.999999426553511
0.44 0.999999426550178
0.45 0.999999426546844
0.46 0.999999426543511
0.47 0.999999426540178
0.48 0.999999426536844
0.49 0.999999426533511
0.5 0.999999426530178
0.51 0.999999426526844
0.52 0.999999426523511
0.53 0.999999426520178
0.54 0.999999426516844
0.55 0.999999426513511
0.56 0.999999426510178
0.57 0.999999426506844
0.58 0.999999426503511
0.59 0.999999426500178
0.6 0.999999426496844
0.61 0.999999426493511
0.62 0.999999426490178
0.63 0.999999426486844
0.64 0.999999426483511
0.65 0.999999426480178
0.66 0.999999426476844
0.67 0.999999426473511
0.68 0.999999426470178
0.69 0.999999426466844
0.7 0.999999426463511
0.71 0.999999426460178
0.72 0.999999426456788
0.73 0.999999426453503
0.74 0.999999426450169
0.75 0.999999426446836
0.76 0.999999426443503
0.77 0.999999426440169
0.78 0.999999426436836
0.79 0.999999426433503
0.8 0.999999426430169
0.81 0.999999426426836
0.82 0.999999426423503
0.83 0.999999426420169
0.84 0.999999426416836
0.85 0.999999426413503
0.86 0.999999426410169
0.87 0.999999426406836
0.88 0.999999426403503
0.89 0.999999426400169
};
\addlegendentry{iid-SE}

\path [draw=lightblue, very thick]
(axis cs:0.1,0.366153239339615)
--(axis cs:0.1,0.37447280812562);

\path [draw=lightblue, very thick]
(axis cs:0.2,0.445370157824841)
--(axis cs:0.2,0.454388254493946);

\path [draw=lightblue, very thick]
(axis cs:0.3,0.538298931776124)
--(axis cs:0.3,0.544488617350074);

\path [draw=lightblue, very thick]
(axis cs:0.32,0.563407965412807)
--(axis cs:0.32,0.567401415321631);

\path [draw=lightblue, very thick]
(axis cs:0.34,0.580159084751205)
--(axis cs:0.34,0.586621042020505);

\path [draw=lightblue, very thick]
(axis cs:0.36,0.605596040661169)
--(axis cs:0.36,0.615210783155118);

\path [draw=lightblue, very thick]
(axis cs:0.38,0.633480453526956)
--(axis cs:0.38,0.638498064221972);

\path [draw=lightblue, very thick]
(axis cs:0.4,0.66356722151816)
--(axis cs:0.4,0.688260007355563);

\path [draw=lightblue, very thick]
(axis cs:0.42,0.720238961604502)
--(axis cs:0.42,0.72810596950998);

\path [draw=lightblue, very thick]
(axis cs:0.44,1)
--(axis cs:0.44,1);

\path [draw=lightblue, very thick]
(axis cs:0.5,1)
--(axis cs:0.5,1);

\path [draw=coral, very thick]
(axis cs:0.1,0.366084392345686)
--(axis cs:0.1,0.372201072292165);

\path [draw=coral, very thick]
(axis cs:0.2,0.448035839030667)
--(axis cs:0.2,0.456962000340682);

\path [draw=coral, very thick]
(axis cs:0.3,0.565883140722399)
--(axis cs:0.3,0.569791362621719);

\path [draw=coral, very thick]
(axis cs:0.32,0.591392417103189)
--(axis cs:0.32,0.677592114398055);

\path [draw=coral, very thick]
(axis cs:0.34,0.686218558076211)
--(axis cs:0.34,0.862813213738142);

\path [draw=coral, very thick]
(axis cs:0.36,0.958491812237902)
--(axis cs:0.36,1.00672767498179);

\path [draw=coral, very thick]
(axis cs:0.38,0.999572922609272)
--(axis cs:0.38,0.999985630863944);

\path [draw=coral, very thick]
(axis cs:0.4,0.999500037432323)
--(axis cs:0.4,0.999928943413113);

\path [draw=coral, very thick]
(axis cs:0.42,0.999810479266249)
--(axis cs:0.42,0.999967483444002);

\path [draw=coral, very thick]
(axis cs:0.44,0.999562209806696)
--(axis cs:0.44,0.999879881339557);

\path [draw=coral, very thick]
(axis cs:0.5,1)
--(axis cs:0.5,1);

\addplot [ultra thick, red, mark=star, mark size=3, mark options={solid}, only marks]
table {%
0.1 0.369142732318926
0.2 0.452498919685674
0.3 0.567837251672059
0.32 0.634492265750622
0.34 0.774515885907177
0.36 0.982609743609844
0.38 0.999779276736608
0.4 0.999714490422718
0.42 0.999888981355126
0.44 0.999721045573127
0.5 1
};
\addlegendentry{SC-AMP, ($6, 40$)}
\addplot [ultra thick, cyan!65, mark=square, mark size=2, mark options={solid}, only marks]
table {%
0.1 0.370313023732617
0.2 0.449879206159394
0.3 0.541393774563099
0.32 0.565404690367219
0.34 0.583390063385855
0.36 0.610403411908144
0.38 0.635989258874464
0.4 0.675913614436861
0.42 0.724172465557241
0.44 1
0.5 1
};
\addlegendentry{iid-AMP}

\addplot [ultra thick, orange]
table {%
0.123 0
0.123 0.5
0.123 0.8
0.123 1
0.123 2
};
\addlegendentry{$n^*/p = 1.22/\log p$}

\end{axis}

\end{tikzpicture}
  \vspace{-2\baselineskip}
  \caption{SC-AMP vs.~iid-AMP}
  \label{fig:corr_v_delta_SC_iid}
\end{subfigure}
\begin{subfigure}[b]{0.49\textwidth}
  \centering
  % This file was created with tikzplotlib v0.10.1.
\begin{tikzpicture}[scale=0.9]

\definecolor{coral}{RGB}{255,127,80}
\definecolor{darkgray176}{RGB}{176,176,176}
\definecolor{lightblue}{RGB}{173,216,230}
\definecolor{lightgray204}{RGB}{204,204,204}
\definecolor{purple}{RGB}{128,0,128}
\definecolor{brown}{RGB}{165,42,42}
\definecolor{lightgreen}{RGB}{144,238,144}
\definecolor{green}{RGB}{0,128,0}
\definecolor{black!30!white}{RGB}{255,228,225}
\definecolor{pink}{RGB}{255,192,203}

\begin{axis}[
legend cell align={left},
legend style={
  fill opacity=0.8,
  draw opacity=1,
  text opacity=1,
  at={(0.99,0.03)},
  anchor=south east,
  draw=lightgray204
},
tick align=outside,
tick pos=left,
x grid style={darkgray176},
xlabel={\(\displaystyle \delta=n/p\)},
xmajorgrids,
xmin=0.0605, xmax=0.8295,
xtick style={color=black},
y grid style={darkgray176},
ylabel={Correlation},
ymajorgrids,
ymin=0.334052228213881, ymax=1.03875983911359,
ytick style={color=black}
]

\path [draw=lightblue, very thick]
(axis cs:0.1,0.366153239339615)
--(axis cs:0.1,0.37447280812562);

\path [draw=lightblue, very thick]
(axis cs:0.2,0.445370157824841)
--(axis cs:0.2,0.454388254493946);

\path [draw=lightblue, very thick]
(axis cs:0.3,0.538298931776124)
--(axis cs:0.3,0.544488617350074);

\path [draw=lightblue, very thick]
(axis cs:0.32,0.563407965412807)
--(axis cs:0.32,0.567401415321631);

\path [draw=lightblue, very thick]
(axis cs:0.34,0.580159084751205)
--(axis cs:0.34,0.586621042020505);

\path [draw=lightblue, very thick]
(axis cs:0.36,0.605596040661169)
--(axis cs:0.36,0.615210783155118);

\path [draw=lightblue, very thick]
(axis cs:0.38,0.633480453526956)
--(axis cs:0.38,0.638498064221972);

\path [draw=lightblue, very thick]
(axis cs:0.4,0.66356722151816)
--(axis cs:0.4,0.688260007355563);

\path [draw=lightblue, very thick]
(axis cs:0.42,0.720238961604502)
--(axis cs:0.42,0.72810596950998);

\path [draw=lightblue, very thick]
(axis cs:0.44,1)
--(axis cs:0.44,1);

\path [draw=lightblue, very thick]
(axis cs:0.5,1)
--(axis cs:0.5,1);

\path [draw=coral, very thick]
(axis cs:0.1,0.366084392345686)
--(axis cs:0.1,0.372201072292165);

\path [draw=coral, very thick]
(axis cs:0.2,0.448035839030667)
--(axis cs:0.2,0.456962000340682);

\path [draw=coral, very thick]
(axis cs:0.3,0.565883140722399)
--(axis cs:0.3,0.569791362621719);

\path [draw=coral, very thick]
(axis cs:0.32,0.591392417103189)
--(axis cs:0.32,0.677592114398055);

\path [draw=coral, very thick]
(axis cs:0.34,0.686218558076211)
--(axis cs:0.34,0.862813213738142);

\path [draw=coral, very thick]
(axis cs:0.36,0.958491812237902)
--(axis cs:0.36,1.00672767498179);

\path [draw=coral, very thick]
(axis cs:0.38,0.999572922609272)
--(axis cs:0.38,0.999985630863944);

\path [draw=coral, very thick]
(axis cs:0.4,0.999500037432323)
--(axis cs:0.4,0.999928943413113);

\path [draw=coral, very thick]
(axis cs:0.42,0.999810479266249)
--(axis cs:0.42,0.999967483444002);

\path [draw=coral, very thick]
(axis cs:0.44,0.999562209806696)
--(axis cs:0.44,0.999879881339557);

\path [draw=coral, very thick]
(axis cs:0.5,1)
--(axis cs:0.5,1);

\path [draw=lightgreen, very thick]
(axis cs:0.1,0.12026202339847)
--(axis cs:0.1,0.156008791869709);

\path [draw=lightgreen, very thick]
(axis cs:0.2,0.209977636978268)
--(axis cs:0.2,0.241171983054605);

\path [draw=lightgreen, very thick]
(axis cs:0.3,0.336469588171813)
--(axis cs:0.3,0.369275072804699);

\path [draw=lightgreen, very thick]
(axis cs:0.4,0.511395712541128)
--(axis cs:0.4,0.607469784338639);

\path [draw=lightgreen, very thick]
(axis cs:0.42,0.590203314412434)
--(axis cs:0.42,0.690065359091663);

\path [draw=lightgreen, very thick]
(axis cs:0.44,0.624668539680832)
--(axis cs:0.44,0.891876469691334);

\path [draw=lightgreen, very thick]
(axis cs:0.46,0.999999999999944)
--(axis cs:0.46,1.00000000000003);

\path [draw=lightgreen, very thick]
(axis cs:0.48,0.999999999999999)
--(axis cs:0.48,1);

\path [draw=lightgreen, very thick]
(axis cs:0.5,1)
--(axis cs:0.5,1);

\addplot [very thick, red, loosely dotted, mark=star, mark size=3, mark options={solid}]
table {%
0.1 0.369142732318926
0.2 0.452498919685674
0.3 0.567837251672059
0.32 0.634492265750622
0.34 0.774515885907177
0.36 0.982609743609844
0.38 0.999779276736608
0.4 0.999714490422718
0.42 0.999888981355126
0.44 0.999721045573127
0.5 1
0.6 1
0.7 1
0.8 1
};
\addlegendentry{SC-AMP, ($6, 40$)}
\addplot [very thick, cyan!65, loosely dotted, mark=square, mark size=2, mark options={solid}]
table {%
0.1 0.370313023732617
0.2 0.449879206159394
0.3 0.541393774563099
0.32 0.565404690367219
0.34 0.583390063385855
0.36 0.610403411908144
0.38 0.635989258874464
0.4 0.675913614436861
0.42 0.724172465557241
0.44 1
0.5 1
0.6 1
0.7 1
0.8 1
};
\addlegendentry{iid-AMP}

\addplot [very thick, green, loosely dotted, mark=triangle, mark size=3, mark options={solid}]
table {%
0.1 0.138135407634089
0.2 0.225574810016436
0.3 0.352872330488256
0.4 0.559432748439884
0.42 0.640134336752049
0.44 0.758272504686083
0.46 0.999999999999986
0.48 1
0.5 1
0.6 1
0.7 1
0.8 1
};
\addlegendentry{iid-LP}

\path [draw=black!30!white, very thick]
(axis cs:0.2025,0.229657985166684)
--(axis cs:0.2025,0.270382529061232);

\path [draw=black!30!white, very thick]
(axis cs:0.225,0.252201235347987)
--(axis cs:0.225,0.28544721017185);

\path [draw=black!30!white, very thick]
(axis cs:0.2475,0.26415666144131)
--(axis cs:0.2475,0.306369087958367);

\path [draw=black!30!white, very thick]
(axis cs:0.27,0.301107485810567)
--(axis cs:0.27,0.345115136106449);

\path [draw=black!30!white, very thick]
(axis cs:0.2925,0.328061784894329)
--(axis cs:0.2925,0.375700127781396);

\path [draw=black!30!white, very thick]
(axis cs:0.315,0.357657581421835)
--(axis cs:0.315,0.418349818719617);

\path [draw=black!30!white, very thick]
(axis cs:0.3375,0.419062243167944)
--(axis cs:0.3375,0.455261515760796);

\path [draw=black!30!white, very thick]
(axis cs:0.36,0.466273552436804)
--(axis cs:0.36,0.526342526561903);

\path [draw=black!30!white, very thick]
(axis cs:0.3825,0.495300440868832)
--(axis cs:0.3825,0.585045153942644);

\path [draw=black!30!white, very thick]
(axis cs:0.405,0.573175331989819)
--(axis cs:0.405,0.693168949443228);

\path [draw=black!30!white, very thick]
(axis cs:0.4275,0.677849578054757)
--(axis cs:0.4275,0.75018714015533);

\path [draw=black!30!white, very thick]
(axis cs:0.45,0.744345518860331)
--(axis cs:0.45,0.947264217798048);

\path [draw=black!30!white, very thick]
(axis cs:0.4725,0.928463702540443)
--(axis cs:0.4725,1.01799137831976);

\path [draw=black!30!white, very thick]
(axis cs:0.495,1)
--(axis cs:0.495,1);

\addplot [very thick, black, loosely dotted, mark=+, mark size=3, mark options={solid}]
table {%
0.2025 0.250020257113958
0.225 0.268824222759918
0.2475 0.285262874699838
0.27 0.323111310958508
0.2925 0.351880956337863
0.315 0.388003700070726
0.3375 0.43716187946437
0.36 0.496308039499354
0.3825 0.540172797405738
0.405 0.633172140716523
0.4275 0.714018359105043
0.45 0.84580486832919
0.4725 0.973227540430102
0.495 1
0.6 1
0.7 1
0.8 1
};
\addlegendentry{SC-LP}

\end{axis}

\end{tikzpicture}
  \vspace{-2\baselineskip}
  \caption{AMP vs.~LP}
    \label{fig:corr_v_delta_AMP_LP}
\end{subfigure}
\caption{Normalized squared correlation for Noiseless QGT. $\pi=0.3, p = 20 000$. With  spatial coupling parameters  $\omega=6, \Lambda=40$, inner block size $p/\Lambda=500$. In (b), we use $p=2000$ for both iid-LP and SC-LP due to computational constraints. Error bars indicate one standard deviation.}
\label{fig:corr_v_delta}
\end{figure}

\subsection{Numerical Simulations}
We present simulation results for finite length SC-AMP and compare its performance against alternative algorithms and the information-theoretic limit.
The performance in all the plots is either measured via the normalized squared correlation between the SC-AMP estimate and the signal (see \eqref{eq:norm_sq_corr}) or via the FPR and FNR (see \eqref{eq:FPR_and_FNR}).  In the plots, curves labeled `SC-AMP' show the empirical performance of the SC-AMP algorithm, while the `SC-SE' curves refer to its theoretical performance predicted via state evolution.  The corresponding curves for an i.i.d.~design are labeled `iid-AMP' and `iid-SE'. For empirical performance curves, each point is obtained from 10 independent runs, where in each run, the SC-AMP algorithm is executed for 300 iterations. Other implementation details are described in Appendix \ref{sec:imp_details}. Python code for all the simulations is available at \cite{CobCode24}.

Figure \ref{fig:corr_v_delta_SC_iid} shows how normalized squared correlation varies with the sampling ratio $\delta$ for noiseless QGT, for both spatially coupled and i.i.d.~designs. We observe that SC-AMP outperforms iid-AMP, justifying the use of the SC design. The orange vertical line show the information-theoretic lower bound on the ratio $n/p$ obtained from \eqref{eq:gamma_star}. Specializing \eqref{eq:gamma_star} to the case of $L=2$, we get the information-theoretic lower bound on the number of tests for noiseless QGT: $    n^*=2H(\pi)\frac{p}{\log p}$. We observe that the performance of SC-SE improves and approaches $n^*$ as we increase the size of the spatial coupling parameters $(\omega,\Lambda)$ from (6,40) to (40,400). We did not implement the SC-AMP for $(\omega,\Lambda)=(40,400)$  as it requires a large amount of computational memory.  The difference between the SC-SE plot and the SC-AMP plot for $(6,40)$ is due to finite length effects, since the inner block size $p/\Lambda$ is only 500. 

Figure \ref{fig:corr_v_delta_AMP_LP} shows how the AMP algorithm compares to the linear programming (LP) estimator, defined as  the solution of the following linear program:
\begin{align}
    \text{minimize}\quad & \|\beta\|_1 \\
    \text{subject to}\quad & y=X\beta,
    \text{ and }
    0\leq\beta_j\leq1,\quad j\in[p]. \nonumber
\end{align}
Similar reconstruction algorithms are commonly used for  compressed sensing \cite{Fou13}. LP based estimators have also been used in Boolean in group testing \cite{Ald19}. 
We observe that the AMP algorithm outperforms LP for both i.i.d.~and SC designs, and that  the performance of LP is similar with both designs. This is because the LP algorithm is not equipped to take advantage of the spatially coupled design.     LP is also more computationally intensive than the SC-AMP algorithm and  challenging to implement for large values of $p$. Therefore, we use a smaller $p$ for all our LP experiments. 

Figure \ref{fig:fpr_v_fnr_noiseless} shows the tradeoff between the FPR and the FNR for noiseless QGT with $\delta=0.38$. The tradeoff curve is obtained by thresholding the AMP or LP estimate with different thresholds $\zeta$, as described in \eqref{eq:s_c_k_def}. SC-AMP achieves perfect recovery at this value of $\delta$, so its FPR and FNR are both 0, for all  threshold values. As expected, SC-AMP does significantly better than iid-AMP and LP.

Figure \ref{fig:fpr_v_fnr_noisy} shows the tradeoff between the FPR and the FNR for noisy QGT with $\delta=0.46$ and $\sigma^2=0.0016$.
Following the model in \eqref{eq:QGT_model}, for the i.i.d.~design we consider Gaussian noise with $\Psi_{i} \stackrel{\iid}{\sim} \normal(0,p\sigma^2)$, as previously investigated in \cite{Sca17, Tan23d}. For the SC design, we consider $\Psi_{i} \stackrel{\iid}{\sim} \normal(0,p \sigma^2/(2\sfC) \,  )$. As described in Section~\ref{sec:items-test}, for $\alpha=0.5$, the expected number of items in each test is approximately $\frac{p}{4\sfC}$ for the SC design, compared to $p/2$ for the i.i.d.~design. This choice of noise variance for the SC model ensures that the signal-to-noise ratio $\E\big[\| X \beta \|^2 \big]/\E \big[ \| \Psi \|^2 \big]$ is similar for both designs.

In the noisy setting, the AMP algorithm is compared to the  following convex programming (CVX) estimator:
\begin{align}
    \text{minimize}\quad & \frac{1}{2\text{Var}(\Psi)}\|y-X\beta\|_2^2 \, +  \, \|\beta\|_1\log\frac{1-\pi}{\pi} \\
    \text{subject to}\quad & 0\leq\beta_j\leq1,\quad j\in[p]. \nonumber
\end{align}
This estimator is obtained via a convex relaxation of the MAP estimator  for QGT. Figure \ref{fig:fpr_v_fnr_noisy} shows that in the presence of a small amount of noise, SC-AMP continues to achieve perfect recovery, outperforming both CVX and iid-AMP. Surprisingly, in the presence of noise, the performance of CVX is worse with the SC design than with the i.i.d.~one,  possibly because it does not take advantage of the band-diagonal structure in the SC design matrix.

\begin{figure}[t]
\begin{subfigure}[b]{0.49\textwidth}
  \centering
  % This file was created with tikzplotlib v0.10.1.
\begin{tikzpicture}[scale=0.9]

\definecolor{darkgray176}{RGB}{176,176,176}
\definecolor{green}{RGB}{0,128,0}
\definecolor{lightgray204}{RGB}{204,204,204}
\definecolor{pink}{RGB}{255,192,203}
\definecolor{purple}{RGB}{128,0,128}

\begin{axis}[
legend cell align={left},
legend style={fill opacity=0.8, draw opacity=1, text opacity=1, draw=lightgray204},
tick align=outside,
tick pos=left,
x grid style={darkgray176},
xlabel={FNR},
xmajorgrids,
xmin=-0.037288784691927, xmax=0.783064478530468,
xtick style={color=black},
y grid style={darkgray176},
ylabel={FPR},
ymajorgrids,
ymin=-0.0203987629231372, ymax=0.428374021385881,
ytick style={color=black}
]
\addplot [ultra thick, red, dashed, mark=o, mark size=3, mark options={solid}]
table {%
0 0
0 0
0 0
0 0
0 0
0 0
0 0
0 0
0 0
};
\addlegendentry{SC-SE, (6, 40)}
\addplot [ultra thick, cyan!65, dotted, mark=diamond, mark size=3, mark options={solid}]
table {%
0.048606994665086 0.407975258462743
0.109495614181147 0.253653635628271
0.172239132027467 0.172048865679406
0.237496919603312 0.119164401455452
0.308222827571049 0.0817062399780482
0.387460620874766 0.0538148443506919
0.479649400904471 0.0325675919018342
0.592478503826352 0.0164209948636157
0.745775693838541 0.00520927121652561
};
\addlegendentry{iid-SE}
\addplot [ultra thick, red, mark=star, mark size=3, mark options={solid}, only marks]
table {%
0.000200132628318542 4.29232484798508e-05
0.000200132628318542 4.29232484798508e-05
0.000200132628318542 4.29232484798508e-05
0.000200132628318542 4.29232484798508e-05
0.000200132628318542 4.29232484798508e-05
0.000200132628318542 4.29232484798508e-05
0.000200132628318542 4.29232484798508e-05
0.000200132628318542 4.29232484798508e-05
0.000200132628318542 4.29232484798508e-05
};
\addlegendentry{SC-AMP, (6, 40)}
\addplot [ultra thick, cyan!65, mark=square, mark size=2, mark options={solid}, only marks]
table {%
0.0488379579546189 0.405284324937064
0.110521206847751 0.252575476061984
0.171723713949312 0.170983418004003
0.237389285913635 0.118476736062038
0.306261631169358 0.0816589727565602
0.385414287499207 0.0540100678323032
0.475100233922364 0.0329887843594975
0.588004574402556 0.0167056454036335
0.741179095361817 0.0054912375703692
};
\addlegendentry{iid-AMP}
\addplot [ultra thick, black, mark=+, mark size=3, mark options={solid}, only marks]
table {%
0.1500630623816 0.261287152660988
0.187097566879083 0.214255390227051
0.224665319660118 0.173415842569413
0.268123362268961 0.143977159219892
0.316223526677457 0.117971465496343
0.369059081504289 0.0993931967405601
0.421359052745265 0.0822160691155132
0.47856597490258 0.0703049513427243
0.544342978419698 0.0577060699581273
};
\addlegendentry{SC-LP}
\addplot [ultra thick, green, mark=triangle, mark size=3, mark options={solid}, only marks]
table {%
0.177555315698011 0.310134486206304
0.219007686994193 0.259352396816911
0.258745314178127 0.214867765999933
0.305534322061262 0.172167116573358
0.360210710495085 0.137158706263726
0.417615656184292 0.106765357768374
0.475033946089539 0.081000018452686
0.537914827750051 0.059505327887403
0.59449724071094 0.0443609311372143
};
\addlegendentry{iid-LP}

\end{axis}

\end{tikzpicture}
  \vspace{-2\baselineskip}
  \caption{Noiseless QGT.  $\delta=0.38$}
  \label{fig:fpr_v_fnr_noiseless}
\end{subfigure}
\begin{subfigure}[b]{0.49\textwidth}
  \centering
  % This file was created with tikzplotlib v0.10.1.
\begin{tikzpicture}[scale=0.9]

\definecolor{darkgray176}{RGB}{176,176,176}
\definecolor{lightgray204}{RGB}{204,204,204}
\definecolor{purple}{RGB}{128,0,128}
\definecolor{green}{RGB}{0,128,0}
\definecolor{pink}{RGB}{255,192,203}

\begin{axis}[
legend cell align={left},
legend style={fill opacity=0.8, draw opacity=1, text opacity=1, draw=lightgray204},
tick align=outside,
tick pos=left,
x grid style={darkgray176},
xlabel={FNR},
xmajorgrids,
xmin=-0.027464785852524, xmax=0.627103432542412,
xtick style={color=black},
y grid style={darkgray176},
ylabel={FPR},
ymajorgrids,
ymin=-0.0137847228313352, ymax=0.339622947868769,
ytick style={color=black}
]
\addplot [ultra thick, red, dashed, mark=o, mark size=3, mark options={solid}]
table {%
1.55877108821394e-05 6.84203365523507e-05
2.52570930336269e-05 4.53516844579777e-05
3.44263347290031e-05 3.36745184597425e-05
4.30954359682678e-05 2.71038807115001e-05
5.37650990319781e-05 2.10331827919283e-05
6.45181188383738e-05 1.69265341992767e-05
7.90221920656051e-05 1.34983753741068e-05
0.000102278723274786 9.96308658565023e-06
0.000150792347517595 6.53492776048026e-06
};
\addlegendentry{SC-SE, (6, 40)}
\addplot [ultra thick, cyan!65, dotted, mark=diamond, mark size=3, mark options={solid}]
table {%
0.0405808128844081 0.275831690109674
0.0809613914947155 0.173161743183061
0.121061642121282 0.121022173167455
0.163444563472596 0.0871448744986442
0.209231467483623 0.062490094252739
0.26106544657249 0.0439622417901881
0.325584266925636 0.0292339127094842
0.412719548538457 0.0167915802219173
0.549623058979005 0.00673947777614368
};
\addlegendentry{iid-SE}
\addplot [ultra thick, red, mark=star, mark size=3, mark options={solid}, only marks]
table {%

0 0
0 0
0 0
0 0
0 0
0 0
0 0
0 0
0 0
};
\addlegendentry{SC-AMP, (6, 40)}

\addplot [ultra thick, cyan!65, mark=square, mark size=2, mark options={solid}, only marks]
table {%
0.0444477565768826 0.273256838631946
0.0873094915982304 0.172217873305232
0.126056753825066 0.119793930887086
0.166904401590695 0.0867633059343443
0.212318086560219 0.0628978394908813
0.264021149666342 0.0447981725347235
0.328382564224 0.0302466040827123
0.412954356254446 0.0178833871827914
0.548614013326529 0.00738168912306697
};
\addlegendentry{iid-AMP}
\addplot [ultra thick, black, mark=+, mark size=3, mark options={solid}, only marks]
table {%
0.0853489194380504 0.287625799257458
0.113550082309137 0.215854542949953
0.145977515737426 0.156488475170563
0.179052334584256 0.115077530063792
0.230059761329547 0.0836915675070814
0.292135761862204 0.0572752864672797
0.372887868453887 0.0401910113494349
0.45424301042245 0.0284491348707463
0.53860300580294 0.0198637042475831
};
\addlegendentry{SC-CVX}

\addplot [ultra thick, green, mark=triangle, mark size=3, mark options={solid}, only marks]
table {%
0.0444920926638841 0.281843873353821
0.0675750898160707 0.208865881494992
0.100658985429585 0.143339395529369
0.147727096497222 0.0974192388678748
0.201920880707658 0.0621034449585609
0.281281803153978 0.0381060880885044
0.363508708624836 0.0225793505185139
0.463067678665816 0.0119974681536618
0.554683149012466 0.00719018344841792
};
\addlegendentry{iid-CVX}

\end{axis}

\end{tikzpicture}
  \vspace{-2\baselineskip}
  \caption{Noisy QGT. $\delta=0.46$, $\sigma^2=0.0016$}
    \label{fig:fpr_v_fnr_noisy}
\end{subfigure}
\caption{FPR vs.~FNR tradeoff. In both (a) and (b), $\pi=0.3$,  and thresholds $\zeta\in\{0.1, 0.2, \dots, 0.9\}$. We take $p=20000$ for AMP, and $p=2000$ for LP and CVX.}
\label{fig:fpr_v_fnr}
\end{figure}
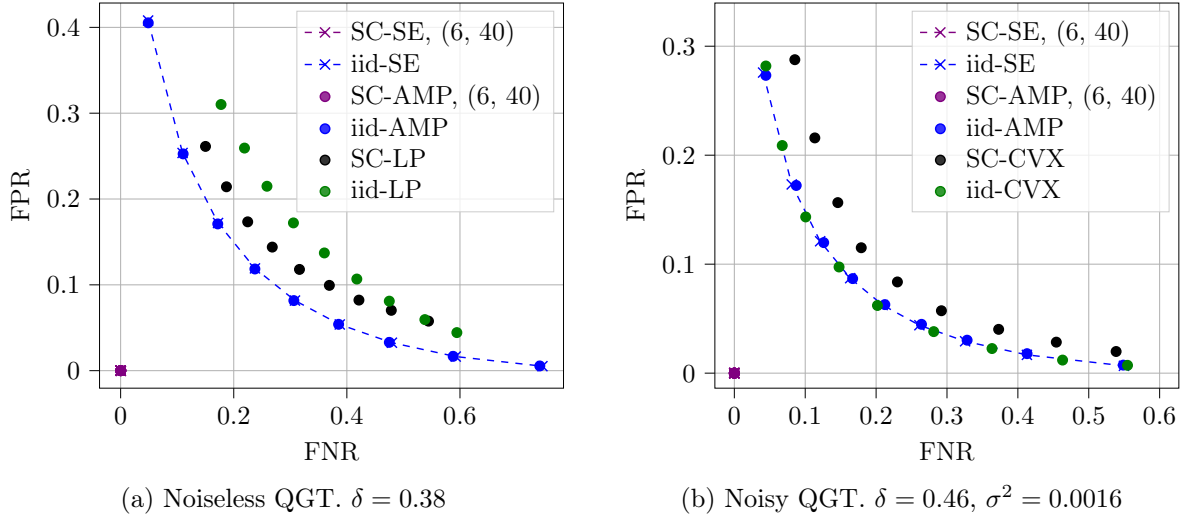

\section{SC-AMP for Pooled Data} \label{sec:SCAMP_pooled_data}

In this section we extend the SC-AMP algorithm to the pooled data model in \eqref{eq:pooled_data_model} with the spatially coupled design $\Xsc$ defined in \eqref{eq:SC_matrix}. We apply SC-AMP to a centered and rescaled version of the pooled model, as we did for QGT in \eqref{eq:rescaled_SC_QGT}. Recalling the decomposition $X_{ij}^{\text{sc}}=\alpha W_{\sfr\sfc}+\sqrt{\frac{n\alpha(1-\alpha)}{\sfR}}\tX_{ij}^{\text{sc}}$, we have 
\begin{align*}
    &Y_{i,:}
    =\sum_{j=1}^pX_{ij}^{\text{sc}}B_{j,:}+\Psi_{i,:}
    =\sum_{j=1}^p\left(\alpha W_{\sfr(i)\sfc(j)}+\sqrt{\frac{n\alpha(1-\alpha)}{\sfR}}\tXsc_{ij}\right)B_{j,:}+\Psi_{i,:} \\ 
    \implies \, 
    &Y_{i,:}-\sum_{j=1}^p\alpha W_{\sfr(i)\sfc(j)}B_{j,:}
    =\sqrt{\frac{n\alpha(1-\alpha)}{\sfR}}\sum_{j=1}^p\tXsc_{ij}B_{j,:}+\Psi_{i,:} \, .
\end{align*}
Defining
\begin{align*}
    \tY_{i,:}
    &:=\frac{1}{\sqrt{n\alpha(1-\alpha)/\sfR}}\left(Y_{i,:}-\alpha\left(W_{\sfr(i)1}\sum_{j\in\mathcal{J}_1}B_{j,:}+\dots+W_{\sfr(i)\sfC}\sum_{j\in\mathcal{J}_\sfC}B_{j,:}\right)\right),
\end{align*}
and $\tPsi_{i,:} :=\frac{1}{\sqrt{n\alpha(1-\alpha)/\sfR}}\Psi_{i,:}$ gives us the rescaled pooled data model:
\begin{align}
    \tY_{i,:}
    &=B^\top\tXsc_{i,:}+\tPsi_{i,:}
    \in\mb{R}^L, \quad 
    \text{ for } i\in[n].
    \label{eq:rescaled_pooled_data}
\end{align}
The sets $ \big(\mathcal{J}_\sfc\big)_{\sfc \in [\sfC]}$ are defined in \eqref{eq:J_c_I_r_def}. In the noiseless setting, the terms $\sum_{j\in\mathcal{J}_c}B_{j,:}$ can be obtained with an extra $\sfC =O(1)$ tests, where the $\sfc$th test 
only includes items from $\mathcal{J}_\sfc$. 

\paragraph{Model and noise scaling assumptions.}  The signal matrix $B \in \mb{R}^{p \times L}$ and the rescaled noise matrix $\tPsi \in \mb{R}^{n \times L_{\Psi}}$ are both independent of the design matrix. As $p\to\infty$, we assume that $n/p \to \delta >0$. As $p,n \to \infty$, the empirical distributions of the  rows of $B$ and  $\tPsi$ each converge  to well-defined limits. More precisely,  $B\stackrel{W}{\rightarrow}\bar{B}$ and $\tPsi\stackrel{W}{\rightarrow}\bar{\Psi}$, for $L$-dimensional random vectors $\bar{B}\sim \text{Categorical}(\pi)$ and $\bar{\Psi}\sim P_{\bar{\Psi}}$. \label{assump:model_noise_pooled_data}

\subsection{Matrix SC-AMP Algorithm}

The goal is to recover $B$ from $\tY$ generated according to the rescaled model \eqref{eq:rescaled_pooled_data}.
The matrix SC-AMP algorithm is initialized with $  \hB_{j,:}^0=\E[\bar{B}]$ for $j\in[p]$, and 
        $\tTheta^0=\tY-\tXsc \hB^0$. For iteration $k \ge 1$, we compute:
    \begin{align}
        \tTheta^k
        &=\tY-\tXsc \hB^k+U^k, \nonumber\\
        B^{k+1}&=V^k+\hB^k, \qquad
        \hB^{k+1}
        =f_{k+1}(B^{k+1},\mathcal{C}),
        \label{eq:matrix_SC_AMP}
    \end{align}
    where $f_k: \reals^{L} \times [\sfC] \to \reals$ acts row-wise on its input, and the vector $\mathcal{C}\in\mb{R}^p$ is defined in \eqref{eq:mC_def}.
Similarly to QGT, the function $f_k$ and the matrices $U^k\in\mb{R}^{n\times L}$, $V^k\in\mb{R}^{p\times L}$ are defined in terms of block-wise state evolution parameters. Here, the key state evolution parameters are two sets of $L \times L$ covariance matrices, denoted by $\phi_\sfr^{k}$ for $\sfr \in [\sfR]$ and $\Tau_{\sfc}^k$ for $\sfc \in [\sfC]$. These matrices are computed recursively as given in   \eqref{eq:matrix_SCAMP_SE} below.

Recalling the partition in \eqref{eq:J_c_I_r_def}, the function 
 $f_k: \reals^{L} \times [\sfC] \to \reals$ is defined as follows, for $j \in \mathcal{J}_\sfc, \sfc \in [\sfC]$:
\begin{align}
\label{eq:optimal_fk_pooled_data}
    f_k(B_{j,:}^k,\sfc)
    &=\E\big[\bar{B}\big|\bar{B}+G_\sfc^k=B_{j,:}^k\big], \quad  
    G_\sfc^k\sim\normal
    \left(
    0,\Tau_\sfc^k
    \right) \text{ independent of } \bar{B}.
\end{align}
The matrices $U^k\in\mb{R}^{n\times L}$, $V^k\in\mb{R}^{p\times L}$ are defined in terms of a matrix $Q^k\in\mb{R}^{L\sfR\times L\sfC}$, whose  sub-matrices $Q_{\sfr,\sfc}^k\in\mb{R}^{L\times L}$ for $\sfr \in [\sfR], \sfc \in [\sfC]$, are given by
\begin{align*}
    Q_{\sfr,\sfc}^k
    &=(\phi_\sfr^k)^{-1}\left(\sum_{\sfr'=1}^\sfR\tW_{\sfr'\sfc}(\phi_{\sfr'}^k)^{-1}\right)^{-1}.
\end{align*} 
The rows of the matrix $U^k \in \reals^{n \times L}$ in \eqref{eq:matrix_SC_AMP} are defined as
\begin{align}
    U_{i,:}^k
    &=\frac{1}{(n/\sfR)}\tTheta_{i,:}^{k-1}\sum_{\sfc=1}^\sfC\tW_{\sfr(i),\sfc} \, Q_{\sfr(i),\sfc}^{k-1} \, 
   \sum_{j\in\mathcal{J}_\sfc}f_k'(B_{j,:}^k,\sfc)^\top,
    \quad \text{for $i\in[n]$}
    \label{eq:U_k}
\end{align}
where $f_k'$ denotes the $L \times L$ Jacobian of $f_k$ with respect to the first argument. The rows of the matrix $V^k \in \reals^{p \times L}$ are given by 
\begin{align*}
    V_{j,:}^k
    =\sum_{i=1}^n\tXsc_{ij}\, \tTheta_{i,:}^k \, Q_{\sfr(i),\sfc(j)}^k,
    \quad \text{for $j\in[p]$}.
\end{align*}

\paragraph{State evolution.}
The memory term $U^k \in \reals^{n \times L}$ in the matrix SC-AMP  \eqref{eq:matrix_SC_AMP} debiases the iterates and ensures that: \emph{i}) for $\sfr \in [\sfR]$,  the row-wise empirical distribution of $\tTheta^k_\sfr$ converges to $\normal(0, \phi^k_\sfr)$, and \emph{ii}) for $\sfc \in [\sfC]$, the row-wise empirical distribution of $B^k_\sfc$ converges to the law of $(\bar{B} +  G_\sfc^k)$, where $G_\sfc^k \sim \normal(0, \Tau^k_\sfc)$.  The $L \times L$ covariance matrices 
$\phi^k_\sfr$ and $\Tau^k_\sfc$ are iteratively computed as follows for $k \ge 0$,  starting from the initialization $\psi_\sfc^0=\Cov(\bar{B})$, for $\sfc \in [\sfC]$:
\begin{align}
\label{eq:matrix_SCAMP_SE}
\begin{split}
    & \phi_\sfr^{k}
    =\Cov[\bar{\Psi}]+\frac{1}{\deltain}\sum_{\sfr=1}^\sfC\tW_{\sfr\sfc}\psi_\sfc^{k} \, , 
    \qquad      \sfr \in [\sfR],
   \\
    & \Tau_{\sfc}^k ={\left[\sum_{\sfr=1}^\sfR \tW_{\sfr\sfc}{[\phi_\sfr^k]}^{ -1}\right]^{-1}}, \qquad \psi_\sfc^{k+1}
    =\E\Big[\Big(f_k\big(\bar{B}+G_\sfc^k,\sfc\big)-\bar{B}\Big)
    \Big(f_k\big(\bar{B}+G_\sfc^k,\sfc\big)-\bar{B}\Big)^\top\Big],  \quad \sfc \in [\sfC].
\end{split}
\end{align}
where $G_\sfc^k \sim \normal(0, \Tau_{\sfc}^k)$ is independent of $\bar{B}$, and we recall that $\tW$ is defined in \eqref{eq:W_tilde}.

The matrix SC-AMP algorithm was proposed and analyzed in \cite{Liu24} for a model with a spatially coupled \emph{Gaussian} design matrix. Similarly to our analysis of SC-AMP for QGT, we could apply a reduction technique along with the universality result of \cite{Wan22} to establish a state evolution characterization for the matrix SC-AMP applied to pooled data.  Such a result would be analogous to Theorem \ref{thm:SC_AMP_QGT}, and show that for each iteration $k \ge 1$, the joint empirical distribution of the rows of $(B, B^k)$  converges as: 
\begin{align}
(B, B^k)_{\mathcal{J}_\sfc,:}
\stackrel{W_2}{\rightarrow}
(\bar{B}, \bar{B}+G_\sfc^k), \quad  \sfc \in [\sfC].
\label{eq:matrix_SC_AMP_convergence}
\end{align}

To analyze the limiting MSE and error rate of the matrix SC-AMP algorithm, we need to characterize the fixed point of the state evolution recursion in \eqref{eq:matrix_SCAMP_SE} (as $k \to \infty$). In QGT, the state evolution fixed point was characterized via the minimizer of a scalar potential function (Theorem \ref{thm:asym_MSE}). Extending this approach to the pooled data setting is challenging as the state evolution parameters are now $L \times L$ matrices rather than scalars.  In the following section, we circumvent this issue by showing that a suboptimal AMP algorithm still achieves almost-exact recovery for any $\delta >0$. The suboptimal algorithm  applies the SC-AMP algorithm column-wise to  $\tY \in \reals^{n \times L}$, ignoring the correlation between the columns of the signal matrix $B$.

\subsection{Almost-Exact Recovery via Column-wise SC-AMP}

Given $\tXsc, \tY$ from the rescaled model \eqref{eq:rescaled_pooled_data}, we run the SC-AMP algorithm \emph{column-wise} on $\tY$. Specifically, for $l \in [L]$, we run the SC-AMP algorithm \eqref{eq:SC_AMP}  with inputs $Y_{:,l}$ and $\tXsc$ to produce the estimate $\hB^k_{:,l}$ after $k \ge 1$ iterations.  For the SC-AMP algorithm applied to column $l \in [L]$, the denoiser $f_k$ in \eqref{eq:optimal_fk} is computed with $\bar{\beta} \sim \text{Bernoulli}(\pi_l)$. (We recall that the rows of the signal follow the prior $\bar{B}\sim \text{Categorical}(\pi)$ where $\pi=(\pi_1, \ldots, \pi_L)$.) 

The column-wise SC-AMP algorithm can be viewed as an instance of the matrix SC-AMP algorithm with a suboptimal denoiser, obtained by replacing the conditional expectation $\E[\bar{B} \mid  \bar{B} +  G^k_\sfc ]$ in \eqref{eq:optimal_fk_pooled_data} with  the \emph{marginal} conditional expectations $\E[\bar{B}_l \mid  \bar{B}_l +  (G^k_\sfc)_l ]$, for $l \in [L]$.
Let us define the quantized estimate after $k$ iterations of the column-wise SC-AMP algorithm to be
\begin{align}
    \tB^k_{jl}
    =
    \begin{cases}
        1 &\text{ if $\hB^k_{jl}>0.5$,} \\
        0 &\text{ otherwise,}
    \end{cases}
    \label{eq:quant_col_SCAMP}
\end{align}
where $\hB^k$ is the estimate obtained from the SC-AMP algorithm. In practice, we can quantize the estimate in a better manner, by setting the largest entry in the row of $\hB$ to one and the remaining entries in the row to zero. We do not use this form of quantization for our almost-exact recovery result since we want to directly apply the SC-AMP results for QGT  to the pooled data setting. 

\begin{theorem} \label{thm:achievability_pooled_data}
Consider the noiseless pooled data problem with the assumptions stated on p.\pageref{assump:model_noise_pooled_data}, for any $\delta >0$. There exist finite $\omega_0$ and $k_0$ such that for all $\omega>\omega_0$, $k>k_0$, and sufficiently large $\Lambda$, the quantized estimate $\tB^k \in \{ 0,1\}^{p \times L}$ produced by the column-wise SC-AMP algorithm almost surely satisfies:
$$ \lim_{k \to \infty}\lim_{p\rightarrow\infty} 
    \frac{1}{p}\sum_{j=1}^p\mathds{1}\big\{\tB^k_{j,:}\neq B_{j,:} \big\} =0. $$
\end{theorem}

\begin{proof}
The model assumptions imply that for each $l \in [L]$, the empirical distribution of column $B_{:,l}$ converges in Wasserstein distance to $\text{Bernoulli}(\pi_l)$.  By Corollary \ref{cor:SC_AMP_almost_exact}, we have that the SC-AMP algorithm applied to $\tY_{:,l}$ satisfies $ \lim_{k \to \infty} \lim_{p\rightarrow\infty} 
    \frac{1}{p}\sum_{j=1}^p\mathds{1}\big\{\tB^k_{j,l}\neq B_{j,l} \big\} =0$ almost surely, for each $l \in [L]$. 
    The result follows by noting that $\frac{1}{p}\sum_{j=1}^p\mathds{1}\big\{\tB^k_{j,:}\neq B_{j,:} \big\} \le \frac{1}{p}\sum_{l=1}^L\sum_{j=1}^p\mathds{1}\big\{\tB^k_{jl}\neq B_{jl} \big\}$.
\end{proof}

 We can also obtain error guarantees in the low-noise regime for the column-wise SC-AMP algorithm, similar to Theorem \ref{thm:achievability} and Corollary \ref{cor:SC_AMP_almost_exact}. We remark that although column-wise SC-AMP is convenient for theoretical analysis, at finite dimensions it is inferior to the matrix SC-AMP algorithm that takes advantage of the correlation in the columns of $B$ via the denoiser in \eqref{eq:optimal_fk_pooled_data}. This is illustrated in the numerical experiments below.

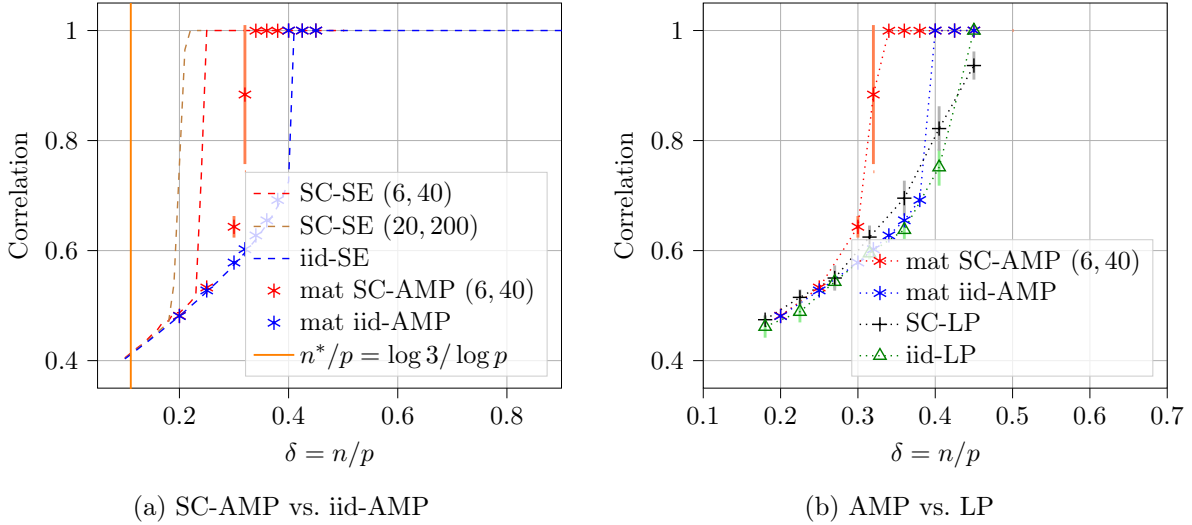
\begin{figure}[t]
\begin{subfigure}[b]{0.49\textwidth}
  \centering
  % This file was created with tikzplotlib v0.10.1.

% This file was created with tikzplotlib v0.10.1.
\begin{tikzpicture}[scale=0.9]

\definecolor{coral}{RGB}{255,127,80}
\definecolor{darkgray176}{RGB}{176,176,176}
\definecolor{lightgray204}{RGB}{204,204,204}
\definecolor{lightblue}{RGB}{173,216,230}
\definecolor{green}{RGB}{0,128,0}
\definecolor{lightgreen}{RGB}{144,238,144}
\definecolor{mistyrose}{RGB}{255,228,225}
\definecolor{pink}{RGB}{255,192,203}

\begin{axis}[
legend cell align={left},
legend style={
  fill opacity=0.8,
  draw opacity=1,
  text opacity=1,
  at={(0.97,0.03)},
  anchor=south east,
  draw=lightgray204
},
tick align=outside,
tick pos=left,
x grid style={darkgray176},
xlabel={\(\displaystyle \delta=n/p\)},
xmajorgrids,
xmin=0.05, xmax=0.9,
xtick style={color=black},
y grid style={darkgray176},
ylabel={Correlation},
ymajorgrids,
ymin=0.35, ymax=1.05,
ytick style={color=black}
]
\path [draw=coral, very thick]
(axis cs:0.2,0.480760992916505)
--(axis cs:0.2,0.480760992916505);

\path [draw=coral, very thick]
(axis cs:0.25,0.530789642716406)
--(axis cs:0.25,0.530789642716406);

\path [draw=coral, very thick]
(axis cs:0.3,0.633440536332498)
--(axis cs:0.3,0.633440536332498);

\path [draw=coral, very thick]
(axis cs:0.32,0.743279752590953)
--(axis cs:0.32,0.743279752590953);

\path [draw=coral, very thick]
(axis cs:0.34,0.999700000000044)
--(axis cs:0.34,0.999700000000044);

\path [draw=coral, very thick]
(axis cs:0.36,0.99965)
--(axis cs:0.36,0.99965);

\path [draw=coral, very thick]
(axis cs:0.38,0.99975)
--(axis cs:0.38,0.99975);

\path [draw=coral, very thick]
(axis cs:0.4,0.9998)
--(axis cs:0.4,0.9998);

\path [draw=coral, very thick]
(axis cs:0.45,0.99995)
--(axis cs:0.45,0.99995);

\path [draw=coral, very thick]
(axis cs:0.5,1)
--(axis cs:0.5,1);

\addplot[ultra thick, red, dashed]
table {%
0.1 0.4049633817718936
0.11 0.4129859349431553
0.12 0.42123632932737776
0.13 0.42842201494731585
0.14 0.437413142051631
0.15 0.44750841888953313
0.16 0.4566121441153994
0.17 0.47049119507144715
0.18 0.4722281476812286
0.19 0.48773125401092976
0.2 0.4868048764502625
0.21 0.4963484718489456
0.22 0.5062344637845697
0.23 0.5182319303314461
0.24 0.7342149414818087
0.25 1
0.26 1
0.27 1
0.28 1
0.29 1
0.3 1
0.31 1
0.32 1
0.33 1
0.34 1
0.35 1
0.36 1
0.37 1
0.38 1
0.39 1
0.4 1
0.41 1
0.42 1
0.43 1
0.44 1
0.45 1
0.46 1
0.47 1
0.48 1
0.49 1
0.5 1
};
\addlegendentry{SC-SE ($6,40$)}

\addplot[ultra thick, brown, dashdotted]
table {%
0.1 0.403856100483508
0.15 0.443153898697603
0.18 0.469367073526288
0.19 0.539720789625208
0.2 0.755826296389276
0.21 0.965055376316642
0.22 1
0.23 1
0.24 1
};
\addlegendentry{SC-SE ($20,200$)}

\addplot[ultra thick, cyan!65, dotted]
table {%
0.1 0.403230617663388
0.11 0.41054217791051
0.12 0.418072646119556
0.13 0.42586601711218
0.14 0.433049684112449
0.15 0.441068112897524
0.16 0.448677486355195
0.17 0.456796115890354
0.18 0.464965565224308
0.19 0.473060411748787
0.2 0.481383952640562
0.21 0.489768081710178
0.22 0.49849278309212
0.23 0.507063825756621
0.24 0.515892837028079
0.25 0.52529028713855
0.26 0.53471322546817
0.27 0.543685117125823
0.28 0.553979541774879
0.29 0.563521208474632
0.3 0.574534881337513
0.31 0.584950335194167
0.32 0.596399280594805
0.33 0.60824276728242
0.34 0.620328597456081
0.35 0.633617102706393
0.36 0.647536846843159
0.37 0.663010132666371
0.38 0.680425387426825
0.39 0.700418484800851
0.4 0.72536240571466
0.41 1
0.42 1
0.43 1
0.44 1
0.45 1
0.46 1
0.47 1
0.48 1
0.49 1
0.5 1
0.51 1
0.52 1
0.53 1
0.54 1
0.55 1
0.56 1
0.57 1
0.58 1
0.59 1
0.6 1
0.61 1
0.62 1
0.63 1
0.64 1
0.65 1
0.66 1
0.67 1
0.68 1
0.69 1
0.7 1
0.71 1
0.72 1
0.73 1
0.74 1
0.75 1
0.76 1
0.77 1
0.78 1
0.79 1
0.8 1
0.81 1
0.82 1
0.83 1
0.84 1
0.85 1
0.86 1
0.87 1
0.88 1
0.89 1
0.9 1
0.91 1
0.92 1
0.93 1
0.94 1
0.95 1
0.96 1
0.97 1
0.98 1
0.99 1
1 1
};
\addlegendentry{iid-SE}

\path [draw=lightblue, very thick]
(axis cs:0.2,0.478691923185454)
--(axis cs:0.2,0.483238842806353);

\path [draw=lightblue, very thick]
(axis cs:0.25,0.5251164181721)
--(axis cs:0.25,0.529335011700271);

\path [draw=lightblue, very thick]
(axis cs:0.3,0.57474270547671)
--(axis cs:0.3,0.581984446964378);

\path [draw=lightblue, very thick]
(axis cs:0.32,0.595904189751598)
--(axis cs:0.32,0.603862431900766);

\path [draw=lightblue, very thick]
(axis cs:0.34,0.625517409861708)
--(axis cs:0.34,0.633116091837664);

\path [draw=lightblue, very thick]
(axis cs:0.36,0.651847269489094)
--(axis cs:0.36,0.661231157242804);

\path [draw=lightblue, very thick]
(axis cs:0.38,0.686192923218183)
--(axis cs:0.38,0.70482575459105);

\path [draw=lightblue, very thick]
(axis cs:0.4,1)
--(axis cs:0.4,1);

\path [draw=lightblue, very thick]
(axis cs:0.425,1)
--(axis cs:0.425,1);

\path [draw=lightblue, very thick]
(axis cs:0.45,1)
--(axis cs:0.45,1);

\path [draw=coral, very thick]
(axis cs:0.2,0.472970970524797)
--(axis cs:0.2,0.478802536971305);

\path [draw=coral, very thick]
(axis cs:0.25,0.515058009130187)
--(axis cs:0.25,0.523030540670444);

\path [draw=coral, very thick]
(axis cs:0.3,0.612944644558127)
--(axis cs:0.3,0.667647921553092);

\path [draw=coral, very thick]
(axis cs:0.32,0.829695927598342)
--(axis cs:0.32,1.04492804393565);

\path [draw=coral, very thick]
(axis cs:0.34,0.99891711607325)
--(axis cs:0.34,0.999423260733696);

\path [draw=coral, very thick]
(axis cs:0.36,0.999231466049773)
--(axis cs:0.36,0.999608720315422);

\path [draw=coral, very thick]
(axis cs:0.38,0.999359055023196)
--(axis cs:0.38,0.999741064539586);

\path [draw=coral, very thick]
(axis cs:0.4,0.999426466143209)
--(axis cs:0.4,0.999733633859042);

\path [draw=coral, very thick]
(axis cs:0.425,0.999614026357751)
--(axis cs:0.425,0.999886026198474);

\path [draw=coral, very thick]
(axis cs:0.45,0.9997519137754)
--(axis cs:0.45,0.9999081037246);

\addplot[ultra thick, red, mark=star, mark size=3, mark options={solid}, only marks]
table {%
0.2 0.475886753748051
0.25 0.519044274900315
0.3 0.640296283055609
0.32 0.937311985766997
0.34 0.999170188403473
0.36 0.999420093182597
0.38 0.999550059781391
0.4 0.999580050001125
0.425 0.999750026278112
0.45 0.99983000875
};
\addlegendentry{mat SC-AMP ($6,40$)}
\addplot[ultra thick, cyan!65, mark=square, mark size=2, mark options={solid}, only marks]
table {%
0.2 0.480965382995904
0.25 0.527225714936186
0.3 0.578363576220544
0.32 0.599883310826182
0.34 0.629316750849686
0.36 0.656539213365949
0.38 0.695509338904617
0.4 1
0.425 1
0.45 1
};
\addlegendentry{mat iid-AMP}

\addplot[ultra thick, orange, solid]
table {%
0.1109 0
0.1109 0.4
0.1109 0.6
0.1109 0.8
0.1109 1
0.1109 1.4
};
\addlegendentry{$n^*/p = \log 3 / \log p$}
\end{axis}

\end{tikzpicture}
  \vspace{-2\baselineskip}
  \caption{SC-AMP vs.~iid-AMP}
  \label{fig:pool_sc_amp}
\end{subfigure}
\begin{subfigure}[b]{0.49\textwidth}
  \centering
  % This file was created with tikzplotlib v0.10.1.

% This file was created with tikzplotlib v0.10.1.
\begin{tikzpicture}[scale=0.9]

\definecolor{coral}{RGB}{255,127,80}
\definecolor{darkgray176}{RGB}{176,176,176}
\definecolor{lightgray204}{RGB}{204,204,204}
\definecolor{lightblue}{RGB}{173,216,230}
\definecolor{green}{RGB}{0,128,0}
\definecolor{lightgreen}{RGB}{144,238,144}
\definecolor{black!30!white}{RGB}{255,228,225}
\definecolor{pink}{RGB}{255,192,203}

\begin{axis}[
legend cell align={left},
legend style={
  fill opacity=0.8,
  draw opacity=1,
  text opacity=1,
  at={(0.97,0.03)},
  anchor=south east,
  draw=lightgray204
},
tick align=outside,
tick pos=left,
x grid style={darkgray176},
xlabel={\(\displaystyle \delta=n/p\)},
xmajorgrids,
xmin=0.1, xmax=0.7,
xtick style={color=black},
y grid style={darkgray176},
ylabel={Correlation},
ymajorgrids,
ymin=0.35, ymax=1.05,
ytick style={color=black}
]
\path [draw=coral, very thick]
(axis cs:0.2,0.480760992916505)
--(axis cs:0.2,0.480760992916505);

\path [draw=coral, very thick]
(axis cs:0.25,0.530789642716406)
--(axis cs:0.25,0.530789642716406);

\path [draw=coral, very thick]
(axis cs:0.3,0.633440536332498)
--(axis cs:0.3,0.633440536332498);

\path [draw=coral, very thick]
(axis cs:0.32,0.743279752590953)
--(axis cs:0.32,0.743279752590953);

\path [draw=coral, very thick]
(axis cs:0.34,0.999700000000044)
--(axis cs:0.34,0.999700000000044);

\path [draw=coral, very thick]
(axis cs:0.36,0.99965)
--(axis cs:0.36,0.99965);

\path [draw=coral, very thick]
(axis cs:0.38,0.99975)
--(axis cs:0.38,0.99975);

\path [draw=coral, very thick]
(axis cs:0.4,0.9998)
--(axis cs:0.4,0.9998);

\path [draw=coral, very thick]
(axis cs:0.45,0.99995)
--(axis cs:0.45,0.99995);

\path [draw=coral, very thick]
(axis cs:0.5,1)
--(axis cs:0.5,1);

\path [draw=lightblue, very thick]
(axis cs:0.2,0.478691923185454)
--(axis cs:0.2,0.483238842806353);

\path [draw=lightblue, very thick]
(axis cs:0.25,0.5251164181721)
--(axis cs:0.25,0.529335011700271);

\path [draw=lightblue, very thick]
(axis cs:0.3,0.57474270547671)
--(axis cs:0.3,0.581984446964378);

\path [draw=lightblue, very thick]
(axis cs:0.32,0.595904189751598)
--(axis cs:0.32,0.603862431900766);

\path [draw=lightblue, very thick]
(axis cs:0.34,0.625517409861708)
--(axis cs:0.34,0.633116091837664);

\path [draw=lightblue, very thick]
(axis cs:0.36,0.651847269489094)
--(axis cs:0.36,0.661231157242804);

\path [draw=lightblue, very thick]
(axis cs:0.38,0.686192923218183)
--(axis cs:0.38,0.70482575459105);

\path [draw=lightblue, very thick]
(axis cs:0.4,1)
--(axis cs:0.4,1);

\path [draw=lightblue, very thick]
(axis cs:0.425,1)
--(axis cs:0.425,1);

\path [draw=lightblue, very thick]
(axis cs:0.45,1)
--(axis cs:0.45,1);

\path [draw=coral, very thick]
(axis cs:0.2,0.472970970524797)
--(axis cs:0.2,0.478802536971305);

\path [draw=coral, very thick]
(axis cs:0.25,0.515058009130187)
--(axis cs:0.25,0.523030540670444);

\path [draw=coral, very thick]
(axis cs:0.3,0.612944644558127)
--(axis cs:0.3,0.667647921553092);

\path [draw=coral, very thick]
(axis cs:0.32,0.829695927598342)
--(axis cs:0.32,1.04492804393565);

\path [draw=coral, very thick]
(axis cs:0.34,0.99891711607325)
--(axis cs:0.34,0.999423260733696);

\path [draw=coral, very thick]
(axis cs:0.36,0.999231466049773)
--(axis cs:0.36,0.999608720315422);

\path [draw=coral, very thick]
(axis cs:0.38,0.999359055023196)
--(axis cs:0.38,0.999741064539586);

\path [draw=coral, very thick]
(axis cs:0.4,0.999426466143209)
--(axis cs:0.4,0.999733633859042);

\path [draw=coral, very thick]
(axis cs:0.425,0.999614026357751)
--(axis cs:0.425,0.999886026198474);

\path [draw=coral, very thick]
(axis cs:0.45,0.9997519137754)
--(axis cs:0.45,0.9999081037246);

\addplot [very thick, red, loosely dotted, mark=star, mark size=3, mark options={solid}]
table {%
0.2 0.475886753748051
0.25 0.519044274900315
0.3 0.640296283055609
0.32 0.937311985766997
0.34 0.999170188403473
0.36 0.999420093182597
0.38 0.999550059781391
0.4 0.999580050001125
0.425 0.999750026278112
0.45 0.99983000875
0.495 1
0.54 1
0.585 1
0.675 1
0.765 1
};
\addlegendentry{mat SC-AMP ($6,40$)}
\addplot [very thick, cyan!65, loosely dotted, mark=square, mark size=2, mark options={solid}]
table {%
0.2 0.480965382995904
0.25 0.527225714936186
0.3 0.578363576220544
0.32 0.599883310826182
0.34 0.629316750849686
0.36 0.656539213365949
0.38 0.695509338904617
0.4 1
0.425 1
0.45 1
0.495 1
0.54 1
0.585 1
0.675 1
0.765 1
};
\addlegendentry{mat iid-AMP}

\path [draw=lightgreen, very thick]
(axis cs:0.18,0.231257711653474)
--(axis cs:0.18,0.261540181734323);

\path [draw=lightgreen, very thick]
(axis cs:0.225,0.274291409375539)
--(axis cs:0.225,0.294382109246665);

\path [draw=lightgreen, very thick]
(axis cs:0.27,0.32822532245415)
--(axis cs:0.27,0.369966733137866);

\path [draw=lightgreen, very thick]
(axis cs:0.315,0.405258204550386)
--(axis cs:0.315,0.486537193316153);

\path [draw=lightgreen, very thick]
(axis cs:0.36,0.527400634315624)
--(axis cs:0.36,0.592625305070334);

\path [draw=lightgreen, very thick]
(axis cs:0.405,0.668524103357221)
--(axis cs:0.405,0.904874862434365);

\path [draw=lightgreen, very thick]
(axis cs:0.45,1)
--(axis cs:0.45,1);

\path [draw=black!30!white, very thick]
(axis cs:0.18,0.2234590073385)
--(axis cs:0.18,0.26037130293777);

\path [draw=black!30!white, very thick]
(axis cs:0.225,0.286985017727758)
--(axis cs:0.225,0.31490682496333);

\path [draw=black!30!white, very thick]
(axis cs:0.27,0.33158180703887)
--(axis cs:0.27,0.38015869258407);

\path [draw=black!30!white, very thick]
(axis cs:0.315,0.404090318504366)
--(axis cs:0.315,0.473093918754971);

\path [draw=black!30!white, very thick]
(axis cs:0.36,0.517596183431265)
--(axis cs:0.36,0.597759418044975);

\path [draw=black!30!white, very thick]
(axis cs:0.405,0.649889453268737)
--(axis cs:0.405,0.750404184964443);

\path [draw=black!30!white, very thick]
(axis cs:0.45,0.86287131812601)
--(axis cs:0.45,0.933111402323307);

\path [draw=black!30!white, very thick]
(axis cs:0.495,0.935442413824821)
--(axis cs:0.495,0.96748259987872);

\path [draw=black!30!white, very thick]
(axis cs:0.54,0.955954143572963)
--(axis cs:0.54,0.980240456427148);

\path [draw=black!30!white, very thick]
(axis cs:0.585,0.969848338218352)
--(axis cs:0.585,0.988012351962036);

\path [draw=black!30!white, very thick]
(axis cs:0.675,0.987659585838038)
--(axis cs:0.675,0.999578814162034);

\path [draw=black!30!white, very thick]
(axis cs:0.765,0.993948688490443)
--(axis cs:0.765,1.0012609115082);

\addplot [very thick, black, loosely dotted, mark=+, mark size=3, mark options={solid}]
table {%
0.18 0.241915155138135
0.225 0.300945921345544
0.27 0.35587024981147
0.315 0.438592118629668
0.36 0.55767780073812
0.405 0.70014681911659
0.45 0.897991360224659
0.495 0.9514625068517703
0.54 0.9680973000000558
0.585 0.978930345090194
0.675 0.993619200000036
0.765 0.997604799999323
};
\addlegendentry{SC-LP}
\addplot [very thick, green, loosely dotted, mark=triangle, mark size=3, mark options={solid}]
table {%
0.18 0.246398946693899
0.225 0.284336759311102
0.27 0.349096027796008
0.315 0.44589769893327
0.36 0.560012969692979
0.405 0.786699482895793
0.45 1
0.495 1
0.54 1
0.585 1
0.675 1
0.765 1
};
\addlegendentry{iid-LP}

% \addplot [semithick, green, dotted, mark=asterisk, mark size=3, mark options={solid}]
% table {%
% 0.1 0.403580348073018
% 0.2 0.48684459934697
% 0.3 0.5838730024653
% 0.4 0.772484619382269
% 0.425 0.9441440887308684
% 0.45 0.9999999971120569
% 0.5 0.999999999249688
% 0.6 0.999999999882812
% 0.7 1.00000000000334
% 0.8 1.00000000003197
% 0.9 0.999999999976076
% 1 0.999999999998866
% };
% \addlegendentry{iid-LP}

% \addplot [semithick, black, dotted, mark=asterisk, mark size=3, mark options={solid}]
% table {%
% 0.1 0.3201
% 0.2 0.34158
% 0.3 0.36752
% 0.4 0.38642
% 0.425 0.3935
% 0.45 0.4
% 0.5 0.41212
% 0.6 0.43654
% 0.7 0.46518
% 0.8 0.48968
% 0.9 0.52242
% 1 0.55392
% };
% \addlegendentry{i.i.d.~IHT}

\end{axis}

\end{tikzpicture}
  \vspace{-2\baselineskip}
  \caption{AMP vs.~LP}
  \label{fig:pool_sc_amp_lp}
\end{subfigure}
\caption{Noiseless Pooled Data with $\pi=[1/3, 1/3, 1/3]$ with $p=20000$ and spatial coupling parameters $\omega=6, \Lambda=40$. In (b), we set $p=1000$ for LP due to the high computational cost. Error bars indicate one standard deviation.}
\label{fig:corr_v_delta_pool}
\end{figure}

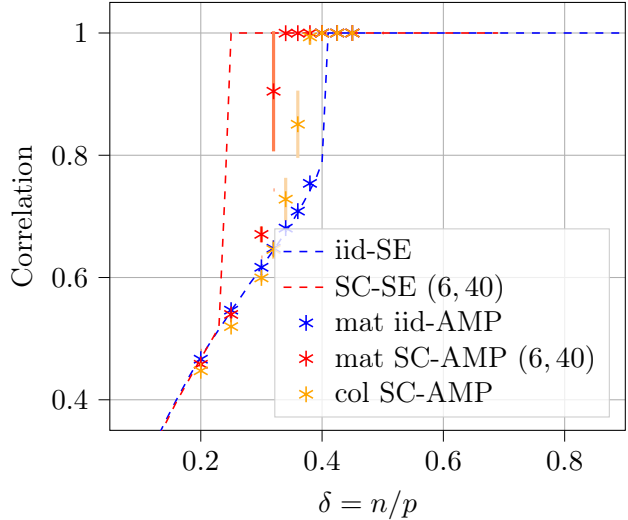
\begin{figure}[t]
    \centering
    % This file was created with tikzplotlib v0.10.1.

% This file was created with tikzplotlib v0.10.1.
\begin{tikzpicture}

\definecolor{coral}{RGB}{255,127,80}
\definecolor{darkgray176}{RGB}{176,176,176}
\definecolor{lightgray204}{RGB}{204,204,204}
\definecolor{lightblue}{RGB}{173,216,230}
\definecolor{green}{RGB}{0,128,0}
\definecolor{lightgreen}{RGB}{144,238,144}
\definecolor{orange}{RGB}{255, 165, 0}
\definecolor{desert}{RGB}{250, 213, 165}

\begin{axis}[
legend cell align={left},
legend style={
  fill opacity=0.8,
  draw opacity=1,
  text opacity=1,
  at={(1.09,0.03)},
  anchor=south east,
  draw=lightgray204
},
tick align=outside,
tick pos=left,
x grid style={darkgray176},
xlabel={\(\displaystyle \delta=n/p\)},
xmajorgrids,
xmin=0.05, xmax=0.9,
xtick style={color=black},
y grid style={darkgray176},
ylabel={Correlation},
ymajorgrids,
ymin=0.1, ymax=1.05,
ytick style={color=black}
]
\addplot[ultra thick, red, dashed]
table {%
% 0.1 0.403796954450652
% 0.11 0.411521598728357
% 0.12 0.41903314890413
% 0.13 0.42677416022468
% 0.14 0.434860835065784
% 0.15 0.442973864489119
% 0.16 0.451301762136113
% 0.17 0.45987962633433
% 0.18 0.468208702408787
% 0.19 0.477222717895803
% 0.2 0.486255051962064
% 0.21 0.496110024472947
% 0.22 0.506549821527592
% 0.23 0.517512278858722
% 0.24 0.734279102868909
% 0.25 1
% 0.26 1
% 0.27 0.999999999999957
% 0.28 1
% 0.29 0.999999999999303
% 0.3 1
% 0.31 1
% 0.32 1
% 0.33 1
% 0.34 1
% 0.35 1
% 0.36 1
% 0.37 1
% 0.38 1
% 0.39 1
% 0.4 1
% 0.41 1
% 0.1 0.40395394139854
% 0.11 0.411362969717796
% 0.12 0.419305727504372
% 0.13 0.427094548580149
% 0.14 0.435096552971405
% 0.15 0.442953508745975
% 0.16 0.451098522095666
% 0.17 0.459643894191328
% 0.18 0.468332561524542
% 0.19 0.477395591553174
% 0.2 0.486345694847277
% 0.21 0.496116682936664
% 0.22 0.506334646179962
% 0.23 0.517650486952499
% 0.24 0.735306461868504
% 0.25 1
% 0.26 1
0.05 0.0661092553963631
0.06 0.0859434983228297
0.07 0.105799310347291
0.08 0.124494376871714
0.09 0.142825659156404
0.1 0.159299714498344
0.11 0.176567244612415
0.12 0.192937418228568
0.13 0.208033842325784
0.14 0.223972085242265
0.15 0.239075780188181
0.16 0.254144144732523
0.17 0.26899704926906
0.18 0.283954701135605
0.19 0.298721522460622
0.2 0.3139861173659
0.21 0.328453723685228
0.22 0.343642633436012
0.23 0.360264381631681
0.24 0.518248434577549
0.25 1
0.26 1
0.27 1
0.28 1
0.29 1
0.3 1
0.31 1
0.32 1
0.33 1
0.34 1
0.35 1
0.36 1
0.37 1
0.38 1
0.39 1
0.4 1
0.41 1
0.42 1
0.43 1
0.44 1
0.45 1
0.46 1
0.47 1
0.48 1
0.49 1
0.6 1
0.8 1
};
\addlegendentry{SC-SE ($6,40$)}

\addplot[ultra thick, cyan!65, dotted]
table {%
0.05 0.0673497881069961
0.06 0.0875443607891912
0.07 0.10786412859505
0.08 0.126671101766083
0.09 0.144211916712801
0.1 0.162802319775954
0.11 0.179071532918781
0.12 0.19430941498553
0.13 0.210246392103372
0.14 0.229055682751317
0.15 0.239864172917209
0.16 0.255817683030707
0.17 0.269566111223976
0.18 0.282833741181952
0.19 0.298308815969174
0.2 0.317978868922099
0.21 0.328031139719319
0.22 0.349178050437286
0.23 0.355954573587355
0.24 0.368044313068871
0.25 0.383282110175724
0.26 0.40276327892846
0.27 0.411563491702293
0.28 0.424012389206652
0.29 0.438028687089313
0.3 0.450483205755408
0.31 0.470428802099874
0.32 0.492626255292435
0.33 0.500347284582075
0.34 0.519811197170258
0.35 0.535699587385178
0.36 0.565497257831324
0.37 0.583030005067494
0.38 0.606403787784619
0.39 0.617885522637798
0.4 0.656413384581079
0.41 1
0.42 1
0.43 1
0.44 1
0.45 1
0.46 1
0.47 1
0.48 1
0.49 1
0.6 1
0.8 1
};
\addlegendentry{iid-SE}

\addplot [very thick, red, mark=star, mark size=3, mark options={solid}, loosely dotted]
table {%
0.2 0.296612668917556
0.25 0.368227709058095
0.3 0.509228474486878
0.32 0.825283012300509
0.34 0.99933012225
0.36 0.9989607635
0.38 0.99945008325
0.4 0.9995600535
0.425 0.99971002675
0.45 0.999880006
};
\addlegendentry{mat SC-AMP ($6,40$)}

\addplot [very thick, cyan!65, mark=square, mark size=2, mark options={solid}, only marks]
table {%
0.2 0.307864323935568
0.25 0.380858079571817
0.3 0.453469862907793
0.32 0.482573797313844
0.34 0.520184179833109
0.36 0.550142063416918
0.38 0.60896341765287
0.4 1
0.425 1
0.45 1
};
\addlegendentry{mat iid-AMP}

\path [draw=lightblue, very thick]
(axis cs:0.2,0.304762901204977)
--(axis cs:0.2,0.310965746666158);

\path [draw=lightblue, very thick]
(axis cs:0.25,0.377199203183082)
--(axis cs:0.25,0.384516955960553);

\path [draw=lightblue, very thick]
(axis cs:0.3,0.448466671967128)
--(axis cs:0.3,0.458473053848458);

\path [draw=lightblue, very thick]
(axis cs:0.32,0.479575685116011)
--(axis cs:0.32,0.485571909511677);

\path [draw=lightblue, very thick]
(axis cs:0.34,0.514426816522676)
--(axis cs:0.34,0.525941543143542);

\path [draw=lightblue, very thick]
(axis cs:0.36,0.545188120817762)
--(axis cs:0.36,0.555096006016074);

\path [draw=lightblue, very thick]
(axis cs:0.38,0.598844880131249)
--(axis cs:0.38,0.619081955174492);

\path [draw=lightblue, very thick]
(axis cs:0.4,1)
--(axis cs:0.4,1);

\path [draw=lightblue, very thick]
(axis cs:0.425,1)
--(axis cs:0.425,1);

\path [draw=lightblue, very thick]
(axis cs:0.45,1)
--(axis cs:0.45,1);

\path [draw=coral, very thick]
(axis cs:0.2,0.294001945119416)
--(axis cs:0.2,0.299223392715696);

\path [draw=coral, very thick]
(axis cs:0.25,0.362050130494585)
--(axis cs:0.25,0.374405287621606);

\path [draw=coral, very thick]
(axis cs:0.3,0.47451812313655)
--(axis cs:0.3,0.543938825837206);

\path [draw=coral, very thick]
(axis cs:0.32,0.63863314844464)
--(axis cs:0.32,1.01193287615638);

\path [draw=coral, very thick]
(axis cs:0.34,0.999129947399586)
--(axis cs:0.34,0.999530297100413);

\path [draw=coral, very thick]
(axis cs:0.36,0.997558324275471)
--(axis cs:0.36,1.00036320272453);

\path [draw=coral, very thick]
(axis cs:0.38,0.999275488784588)
--(axis cs:0.38,0.999624677715412);

\path [draw=coral, very thick]
(axis cs:0.4,0.999417256857702)
--(axis cs:0.4,0.999702850142298);

\path [draw=coral, very thick]
(axis cs:0.425,0.999558719230561)
--(axis cs:0.425,0.999861334269439);

\path [draw=coral, very thick]
(axis cs:0.45,0.999782032840173)
--(axis cs:0.45,0.999977979159827);

\path [draw=desert, very thick]
(axis cs:0.2,0.288505097272814)
--(axis cs:0.2,0.292767525675312);

\path [draw=desert, very thick]
(axis cs:0.25,0.350340953113248)
--(axis cs:0.25,0.36297514797703);

\path [draw=desert, very thick]
(axis cs:0.3,0.426348341521294)
--(axis cs:0.3,0.44641925675236);

\path [draw=desert, very thick]
(axis cs:0.32,0.469645858867019)
--(axis cs:0.32,0.505668111419176);

\path [draw=desert, very thick]
(axis cs:0.34,0.567656672954291)
--(axis cs:0.34,0.639902644207185);

\path [draw=desert, very thick]
(axis cs:0.36,0.742448809721177)
--(axis cs:0.36,0.890416546683657);

\path [draw=desert, very thick]
(axis cs:0.38,0.99958164374339)
--(axis cs:0.38,0.999828399756561);

\path [draw=desert, very thick]
(axis cs:0.4,0.999766522519404)
--(axis cs:0.4,0.999913487480721);

\path [draw=desert, very thick]
(axis cs:0.425,0.999675366823345)
--(axis cs:0.425,0.99989466267558);

\path [draw=desert, very thick]
(axis cs:0.45,0.999855634363275)
--(axis cs:0.45,0.9999743716366);

\addplot [very thick, orange, mark=x, mark size=3, mark options={solid}, loosely dotted]
table {%
0.2 0.290636311474063
0.25 0.356658050545139
0.3 0.436383799136827
0.32 0.487656985143098
0.34 0.603779658580738
0.36 0.816432678202417
0.38 0.999705021749975
0.4 0.999840005000062
0.425 0.999785014749463
0.45 0.999915002999938
};
\addlegendentry{col SC-AMP }
\end{axis}
\end{tikzpicture}
    \caption{Matrix SC-AMP vs column-wise SC-AMP for noiseless pooled data with $\pi=[1/3, 1/3, 1/3]$, $p=20000$ and spatial coupling parameters $\omega=6, \Lambda=40$.}
    \label{fig:pool_iterative-sc_amp}
\end{figure}

\subsection{Numerical Simulations}
We present simulation results for noiseless pooled data using matrix SC-AMP, measuring the performance  via the normalized squared correlation. The normalized squared correlation of the AMP estimate after $k \ge 1$ iterations and its state evolution prediction are given by:
\begin{align}
    \frac{\big(\frac{1}{p}\sum_{j=1}^p\langle \hB^k_{j,:},B_{j,:}\rangle\big)^2}{\big(\frac{1}{p}\sum_{j=1}^p\|\hB^k_{j,:}\|^2\big)\cdot\big(\frac{1}{p}\sum_{j=1}^p \|B_{j,:}\|^2\big)} 
\,     \stackrel{p \to \infty}{\longrightarrow} \, 
    \frac{\left(\frac{1}{\sfC}\sum_{\sfc=1}^\sfC\E\Big[\langle f_k(\bar{B}+G_\sfc^k,\sfc),\Bar{B}\rangle\Big]\right)^2}{\left(\frac{1}{\sfC}\sum_{\sfc=1}^\sfC\E\left[\|f_k(\bar{B}+G_\sfc^k,\sfc)\|^2_2\right]\right) \cdot \E\left[\|\Bar{B}\|_2^2\right]},
\end{align}
where the almost sure convergence to the state evolution prediction on  the right follows from \eqref{eq:matrix_SC_AMP_convergence}.

    % \frac{\Tr\left((\hB^k)^\top B\right)^2}{\|\hB^k\|_F^2\cdot\|B\|_F^2}
    % =\frac{\big(\frac{1}{p}\sum_{j=1}^p\langle \hB^k_{j,:},B_{j,:}\rangle\big)^2}{\big(\frac{1}{p}\sum_{j=1}^p\|\hB^k_{j,:}\|^2\big)\cdot\big(\frac{1}{p}\sum_{j=1}^p \|B_{j,:}\|^2\big)} \nonumber
% \pc{Worth including LHS expression? Updated performance measure and convergence below} after $k$ iterations of the algorithm, where $\Tr(\cdot)$ corresponds to the trace matrix operator. , this normalized squared correlation converges as 
% \begin{align*}
%     \frac{\big(\frac{1}{p}\sum_{j=1}^p\langle \hB^k_{j,:},B_{j,:}\rangle\big)^2}{\big(\frac{1}{p}\sum_{j=1}^p\|\hB^k_{j,:}\|^2\big)\cdot\big(\frac{1}{p}\sum_{j=1}^p \|B_{j,:}\|^2\big)} 
% \,     \stackrel{a.s.}{\rightarrow} \, 
%     \frac{\left(\frac{1}{\sfC}\sum_{\sfc=1}^\sfC\E\Big[\langle f_k(\bar{B}+G_\sfc^k,\sfc),\Bar{B}\rangle\Big]\right)^2}{\left(\frac{1}{\sfC}\sum_{\sfc=1}^\sfC\E\left[\|f_k(\bar{B}+G_\sfc^k,\sfc)\|^2_2\right]\right) \cdot \E\left[\|\Bar{B}\|_2^2\right]}.
% \end{align*}

Each point in the AMP performance curves is obtained from 10 independent runs; in each run, the algorithm is executed for 300 iterations.  Our benchmark will be the linear programming (LP) estimator adapted to the pooled data problem \cite[Section 4.1]{Tan23d}. Recall from \eqref{eq:gamma_star} that the information theoretic lower bound on the number of tests for the noiseless pooled data problem is
\begin{align*}
    n^*
    &=\frac{p}{\log p}
    \left(\max_{r\in\{1,\dots,L-1\}}\frac{2[H(\pi)-H(\pi^{(r)})]}{L-r}\right).
\end{align*}

Figure \ref{fig:corr_v_delta_pool} shows how the normalized squared correlation varies with the sampling ratio $\delta$ for pooled data with $L=3$ equally likely categories. Figure \ref{fig:pool_sc_amp} shows that the state evolution prediction  of the matrix SC-AMP performance (SC-SE curves) improves as the spatial coupling parameters $(\omega,\Lambda)$ increase  from $(6,40)$ to $(20,200)$. As in QGT, the gap between the empirical performance of matrix SC-AMP  and the state evolution prediction  for $(6,40)$ is due to finite length effects. We did not implement the matrix SC-AMP algorithm for $(20,200)$ as it requires a large amount of computational memory.  In Figure \ref{fig:pool_sc_amp_lp}, we observe that iid-AMP performs better than  iid-LP and SC-LP, while SC-AMP outperforms all three, justifying the use of a SC design matrix with matrix SC-AMP for recovery. We also implemented the iterative hard thresholding algorithm \cite[Section 4.1]{Tan23d} but found that it performed significantly worse than AMP and LP, and so omitted it from our comparisons.

Figure \ref{fig:pool_iterative-sc_amp}  compares the performance of matrix SC-AMP with the column-wise SC-AMP algorithm. To make the algorithms comparable, the estimates from each algorithm (and the corresponding SE) were quantized in the same way after their final iteration, using the rule in \eqref{eq:quant_col_SCAMP}.  This leads to a slight difference in the AMP performance curves and the theoretical SE estimates compared to Figure \ref{fig:corr_v_delta_pool}, where no quantization was used.  As expected, the matrix SC-AMP algorithm outperforms the column-wise SC-AMP algorithm since the former takes advantage of the correlation within each row of the matrix signal. Nevertheless, the column-wise SC-AMP algorithm performs slightly better than the matrix AMP algorithm with an i.i.d.~matrix.

\section{Proof of Theorem \ref{thm:SC_AMP_QGT}} \label{sec:SC_AMP_QGT_proof}

We start by defining \emph{generalized white noise  matrices}, which will be used in the proof of the theorem.

\begin{definition} \label{def:gen_white_noise_matrix} \textup{\cite[Definition 2.15]{Wan22}}
A generalized white noise matrix $\tX\in\mb{R}^{n\times p}$ with a (deterministic) variance profile $S\in\mb{R}^{n\times p}$ is one satisfying the following conditions, for $i \in [n], j \in [p]$:
\begin{enumerate}
    \item All entries $\tX_{ij}$ are independent.
    \item Each entry $\tX_{ij}$ has mean 0, variance $n^{-1}S_{ij}$, and higher moments satisfying, for each integer $m\geq3$,
    \begin{align}
        \lim_{n,p\rightarrow\infty} p\cdot\max_{i\in[n]}\max_{j\in[p]}\E\Big[|\tX_{ij}|^m\Big]=0.
        \label{eq:mom_condition}
    \end{align}
    \item For a constant $C>0$,
    \begin{align}
        \max_{i\in[n]}\max_{j\in[p]}S_{ij}\leq C,\quad
\lim_{n,p\rightarrow\infty}\max_{i\in[n]}\Big|\frac{1}{p}\sum_{j=1}^pS_{ij}-1\Big|=0,\quad
\lim_{n,p\rightarrow\infty}\max_{j\in[p]}\Big|\frac{1}{n}\sum_{i=1}^nS_{ij}-1\Big|=0.
\label{eq:var_conditions}
    \end{align}
\end{enumerate}
\end{definition}

Definition \ref{def:gen_white_noise_matrix} simplifies for the case of $S_{ij}=1$ for all $(i,j)\in[n]\times[p]$. In this case, the entries are all i.i.d.~with variance $1/n$, the third condition in the definition is trivially satisfied, and the second condition requires moments of order 3 and higher to decay faster than $1/p$. The rescaled i.i.d.~design matrix $\tXiid$ in \eqref{eq:rescaled_QGT_iid_matrix} is a generalized white noise matrix, but the rescaled spatially coupled matrix $\tXsc$ in \eqref{eq:rescaled_QGT_SC_matrix} is not. Indeed, $\tXsc$ satisfies the first two requirements in Definition \ref{def:gen_white_noise_matrix} and from \eqref{eq:omega_Lambda_var}, its variance profile satisfies the first and last conditions in \eqref{eq:var_conditions}, but not the second: for $i \in [n]$, we have  $\frac{1}{p}\sum_{j=1}^pS_{ij} = \frac{\sfR}{\sfC}$, which is close to, but not equal to $1$ for large $\Lambda/\omega$.

We prove Theorem \ref{thm:SC_AMP_QGT} via a more general result, for  a \emph{generalized linear model}  with a spatially coupled design, where the observations $y_i \in \reals$ are generated as:
\begin{align}
    \ty_i
    =q\left(\big(\tX_{i,:}^{\text{sc}}\big)^\top\beta,\,\tPsi_i\right)
    =q\left(\Theta_i,\tPsi_i\right), \quad 
    \text{ for $i\in[n]$}.
    \label{eq:GLM}
\end{align}
Here $\beta \in \reals^p$ is the signal to be estimated, $\tPsi \in \reals^n$ is a noise vector, and $q: \reals^2 \to \reals$ is a known output function. We also allow  $\tXsc$ to be more general than the one in Definition \ref{def:omega_Lambda_base_matrix}. The generalized spatially coupled matrix $\tXsc \in \reals^{n \times p}$ consists of independent zero-mean entries whose variances are specified by a generic base matrix $\tW\in\mb{R}^{\sfR\times\sfC}$, which satisfies the following conditions:
\begin{align}
    \sum_{\sfr=1}^\sfR\tW_{\sfr\sfc}=1
    \quad
    \text{for } \sfc\in[\sfC],
    \quad \text{ and } \quad 
    \kappa_1\leq\sum_{\sfc=1}^\sfC\tW_{\sfr\sfc}\leq\kappa_2,    \quad
    \text{for } \sfr \in [\sfR],
    \label{eq:gen_base_matrix}
\end{align}
for some $\kappa_1,\kappa_2>0$. Given a base matrix $\tW$
satisfying \eqref{eq:gen_base_matrix}, we construct the spatially coupled matrix $\tXsc$ with independent entries drawn from a distribution with zero-mean and variance 
$\E[ |{\tXsc}_{ij}|^2] =
\frac{\tW_{\sfr(i) \sfc(j)}}{n/R}$, 
for $i \in [n], j \in [p]$. 
We also assume that the higher moments $\E[  |\tXsc_{ij}|^m]$ for $m \ge 3$, satisfy \eqref{eq:mom_condition}.

The first condition in \eqref{eq:gen_base_matrix} ensures that the expected squared norm of each column of $\tXsc$ is $1$, and the second condition in \eqref{eq:gen_base_matrix} bounds the variance of each entry of $\tXsc\beta$ from above and below.

\paragraph{High-level sketch of proof of Theorem \ref{thm:SC_AMP_QGT}.}  The proof  consists of three reductions.
\begin{enumerate}
    \item In Section \ref{sec:reduction_to_SC-GAMP}, we introduce the spatially coupled generalized approximate message passing  algorithm (SC-GAMP) for the generalized linear model in \eqref{eq:GLM}, and characterize its performance via state evolution (Theorem \ref{thm:SC_GAMP}). We then reduce the SC-AMP algorithm in \eqref{eq:SC_AMP} to SC-GAMP, and use the state evolution result of the latter to prove Theorem 
    \ref{thm:SC_AMP_QGT}. 
    \item To prove the state evolution result for SC-GAMP (Theorem \ref{thm:SC_GAMP}), we show that the algorithm can be written as an instance of an abstract matrix-AMP iteration defined via a generalized white noise matrix. This reduction, shown  in Appendix \ref{sec:reduction_to_abs_matrix-AMP},   is similar to the one used in \cite[Appendix A]{Cob23} for reducing the SC-GAMP algorithm for a Gaussian design to an abstract matrix-AMP iteration. 
    
    \item To prove the state evolution result for the abstract matrix-AMP  (Theorem \ref{thm:abs_matrix_AMP}), we show that it is a special case of an AMP iteration  for generalized white noise matrices,
    for which a rigorous state evolution result was established in \cite{Wan22}.
    %along with its state evolution result.  
    We refer to the latter iteration as U-AMP, where the `U' stands for universal.  The technique for reducing the abstract matrix-AMP to U-AMP is similar to the one presented in \cite{Tan23d}. This is shown  in Appendix \ref{sec:reduction_to_U-AMP}.
\end{enumerate}

As before, to simplify notation,  for vectors $a\in\mb{R}^p$ and $b\in\mb{R}^n$, we will use $a_\sfc:=a_{\mathcal{J}_\sfc}$ and $b_\sfr:=b_{\mathcal{I}_\sfr}$, where  $\mathcal{J}_\sfc$ and $\mathcal{I}_\sfr$ are defined in \eqref{eq:J_c_I_r_def}. There will be no notation simplifications for matrices. 
%\rv{check this}

\subsection{The SC-GAMP Algorithm and its State Evolution} \label{sec:reduction_to_SC-GAMP}

The SC-GAMP algorithm aims to estimate the signal $\beta\in\mb{R}^p$ from observations $\ty\in\mb{R}^n$ generated according to the generalized linear model \eqref{eq:GLM}. For iteration $k\geq0$, the algorithm computes:
    \begin{align}
    \begin{split}
        &\Theta^k=\tX^{\text{sc}}\hbeta^k-b^k\odot\hR^{k-1},\quad
        \hR^k=g_k(\Theta^k,\ty,\mathcal{R}), \\
        &\beta^{k+1}=(\tX^{\text{sc}})^\top\hR^k-c^k\odot\hbeta^k,\quad
        \hbeta^{k+1}=f_{k+1}(\beta^{k+1},\mathcal{C}),
    \end{split}
    \label{eq:SC_GAMP}
    \end{align}
    where $\odot$ denotes element-wise product. The algorithm is initialized with some $\hbeta^0\in\mb{R}^p$ and 
    $\Theta^0=\tX^{\text{sc}}\hbeta^0$.
    The functions $g_k: \reals^{2} \times [\sfR]$ and $f_{k+1}: \reals \times [\sfC]$ act row-wise on their input, and 
    \begin{align}
    \begin{split}
        \mathcal{C}
        &=(\underbrace{1,\dots,1}_{\text{$p/\sfC$ entries}},\underbrace{2,\dots,2}_{{\text{$p/\sfC$ entries}}},\dots,\underbrace{\sfC,\dots,\sfC}_{{\text{$p/\sfC$ entries}}})^\top
        \in\mb{R}^p, \\
        \mathcal{R}
        &=(\underbrace{1,\dots,1}_{\text{$n/\sfR$ entries}},\underbrace{2,\dots,2}_{{\text{$n/\sfR$ entries}}},\dots,\underbrace{\sfR,\dots,\sfR}_{{\text{$n/\sfR$ entries}}})^\top
        \in\mb{R}^n.
    \end{split}
    \label{eq:S_c_and_S_r_def}
    \end{align}
    The entries of $c^k \in  \reals^p$ and $b^k \in \reals^n$ are defined as follows, for $j \in [p]$, $i \in [n]$:
    \begin{align*}
        c_j^k
        &=\sum_{\sfr=1}^\sfR\frac{\tW_{\sfr\sfc}}{n/\sfR}\sum_{i\in\mathcal{I}_r}\partial_1g_k(\Theta_i^k,\ty_i,\sfr), \quad
        b_i^{k}
        =\sum_{\sfc=1}^\sfC \frac{\tW_{\sfr\sfc}}{n/\sfR}\sum_{j\in\mathcal{J}_c}\partial_1f_{k}(\beta_j^{k},\sfc),
    \end{align*}
   where $\partial_1$ denotes the derivative with respect to the first argument.
% \end{itemize}

\paragraph{State evolution.} The `memory' terms $-b^k\odot\hR^{k-1}$ and $-c^k\odot\hbeta^k$ in \eqref{eq:SC_GAMP} debias the iterates $\Theta^k$ and $\beta^{k+1}$, ensuring that their empirical distributions are accurately captured by state evolution in the high-dimensional limit. Theorem \ref{thm:SC_GAMP} below shows that for each $k\geq1$ and $\sfc \in [\sfC]$, the empirical distribution of $\beta_\sfc^k$ converges to the distribution of $\mu_{\beta,\sfc}^k\bar{\beta}+G_{\beta,\sfc}^k$ where $\bar{\beta}$ is the random variable representing the limiting distribution of the entries of the signal $\beta_\sfc$, and $G_{\beta,\sfc}^k\sim\normal\big(0,(\sigma_{\beta,\sfc})^2\big)$ is independent of $\bar{\beta}$. The deterministic parameters $\mu_{\beta,\sfc}^k\in\mb{R}$ and $\sigma_{\beta,\sfc}^k\in\mb{R}$ are defined below. The result implies that the empirical distribution of the estimate $\hbeta_\sfc^k$ converges to the distribution of $f_k\big(\mu_{\beta.\sfc}^k\bar{\beta}+G_{\beta,\sfc}^k\big)$. Thus, $f_k$ can be viewed as a denoising function that can be tailored to take advantage of the prior in $\bar{\beta}$. Theorem \ref{thm:SC_GAMP} also shows that the joint empirical distribution of  
the rows of $( \Theta_\sfr, \Theta_\sfr^k)$ converges to $\normal(0,\Sigma^{k,\sfr})$, where $\Sigma^{k,\sfr}\in\mb{R}^{2\times2}$ is defined below. 
 
We now describe the state evolution recursion defining  $\mu_{\beta,\sfc}^k,\sigma_{\beta,\sfc}^k\in\mb{R}$ and $\Sigma^{k,\sfr}\in\mb{R}^{2\times2}$. Define $\bar{g}_k:\mb{R}^3\times[\sfR]\rightarrow\mb{R}$ such that
\begin{align}
    g_k(\Theta_i^k,\ty_i,\sfr)
    &=\bar{g}_k(\Theta_i,\Theta_i^k,\tPsi_i,\sfr)
    \text{ for $i\in\mathcal{I}_\sfr$},
    \label{eq:g_k_eq_g_k_bar}
\end{align}
since $\ty_i=q(\Theta_i,\tPsi_i)$. Starting with an initializer $\Sigma^{0,\sfr}\in\mb{R}^{2\times 2}$ for $\sfr\in[\sfR]$ (defined later in \eqref{eq:cov_init}), the state evolution parameters are iteratively computed as follows for $k \ge 0$, and $\sfr \in [\sfR], \sfc \in [\sfC]$:
\begin{align}
    & \mu^{k+1}_{\beta,\sfc}
    =\sum_{\sfr=1}^\sfR\tW_{\sfr\sfc}\E[\partial_1\bar{g}_k(Z_\sfr,Z_\sfr^k,\bar{\Psi},\sfr)],
    \quad
    (\sigma^{k+1}_{\beta,\sfc})^2
    =\sum_{\sfr=1}^\sfR\tW_{\sfr\sfc}\E[\bar{g}_k(Z_\sfr,Z_\sfr^k,\bar{\Psi},\sfr)^2], \nonumber
    \\
    & \Sigma^{k+1,\sfr}
    =\begin{bmatrix}
        \Sigma_{11}^{k+1,\sfr} & \Sigma_{12}^{k+1,\sfr} \\
        \Sigma_{21}^{k+1,\sfr} & \Sigma_{22}^{k+1,\sfr}
    \end{bmatrix},
    \label{eq:muk1_beta_sigk1_beta}
\end{align}
where  $(Z_\sfr,Z_\sfr^k)\sim\normal(0,\Sigma^{k,\sfr})$ are independent of $\bar{\Psi}$, and 
\begin{align}
\begin{split}
    \Sigma_{11}^{k+1,\sfr}
    &=\E[(Z_\sfr)^2]
    =\frac{\E[\bar{\beta}^2]}{\deltain}\sum_{\sfc=1}^\sfC\tW_{\sfr\sfc}, \\
    \Sigma_{12}^{k+1,\sfr}
    &=\Sigma_{21}^{k+1,\sfr}
    =\frac{1}{\deltain}\sum_{\sfc=1}^\sfC\tW_{\sfr\sfc}\E[\bar{\beta}f_{k+1}(\mu^{k+1}_{\beta,\sfc}\bar{\beta}+G^{k+1}_{\beta,\sfc},\sfc)], \\
    \Sigma_{22}^{k+1,\sfr}
    &=\frac{1}{\deltain}\sum_{\sfc=1}^\sfC\tW_{\sfr\sfc}\E[f_{k+1}(\mu^{k+1}_{\beta,\sfc}\bar{\beta}+G^{k+1}_{\beta,\sfc},\sfc)^2].
\end{split}
\label{eq:Sigma_SC_GAMP}
\end{align}
Here $G_{\beta,\sfc}^{k+1}\sim\normal(0,(\sigma_{\beta,\sfc}^{k+1})^2)$ is independent of $\bar{\beta}$.

The SC-GAMP algorithm and its state evolution equations are similar to those introduced in \cite{Cob23}, the only difference being that the SC design matrix $\tXsc\in \mb{R}^{n\times p}$ is now a generalized spatially coupled matrix instead of the spatially coupled \emph{Gaussian} one used in \cite{Cob23}. We note that  $\tXsc$ is not a generalized white noise matrix since it has variance profile $S_{ij}^{\text{sc}}:=\sfR\tW_{\sfr(i),\sfc(j)}$, which is not guaranteed to satisfy the condition $\max_{i\in[n]}\big|\frac{1}{p}\sum_{j=1}^pS_{ij}^{\text{sc}}-1\big|\rightarrow0$ in Definition \ref{def:gen_white_noise_matrix}. Nevertheless, $\tXsc$ can be related to a generalized white noise matrix $\tX\in\mb{R}^{n\times p}$ defined as follows.  For $i \in [n], j \in [p]$, let:
\begin{align}
    \tX_{ij}
    := \begin{cases}
     \frac{\tXsc_{ij}}{\sqrt{\sfR\cdot\tW_{\sfr(i),\sfc(j)}}}  &  \text{ if } \tW_{\sfr(i),\sfc(j)}\neq 0, \\ 
      \stackrel{\text{\iid}}{\sim}
      \normal\big(0,\frac{1}{n}\big) & \text{ otherwise}.
    \end{cases}
    \label{eq:SC_LM_matrix}
\end{align}
(In the second line of the definition, we could use any sub-Gaussian distribution with mean zero and variance $1/n$  instead of $\normal(0,1/n)$.)
From the construction of $\tXsc$ (see below \eqref{eq:gen_base_matrix}), it follows that  $\tX$  is a generalized white noise matrix with variance profile $S_{ij}=1$ for all $(i,j)\in[n]\times[p]$.

The state evolution result for SC-GAMP requires the following assumptions on the model and the algorithm:
\begin{itemize}
    \item[\textbf{(A1)}] As $n, p \to \infty$, we have $\frac{n}{p} \to \delta$. The signal $\beta$, initializer $\hbeta^0$, and the noise vector $\tPsi$ are independent of $\Xsc$, and their empirical distributions have well-defined limits. 
    There exist random variables $\bar{\beta}\sim P_{\bar{\beta}}$  and $\bar{\Psi}\sim P_{\bar{\Psi}}$  with $\beta \stackrel{W}{\rightarrow}\bar{\beta}$ and $\tPsi\stackrel{W}{\rightarrow}\bar{\Psi}$, respectively.
    
    \item[\textbf{(A2)}] As $p\rightarrow\infty$, $(\beta_\sfc,\hbeta_\sfc^0)\stackrel{W}{\rightarrow}(\bar{\beta},\bar{\beta}_\sfc^0)$ almost surely, with joint laws $(\bar{\beta},\bar{\beta}_\sfc^0)$ having finite moments of all orders, for $\sfc\in[\sfC]$. Furthermore,   multivariate polynomials are dense in the real $L^2$-spaces of functions $f:\mb{R}\rightarrow\mb{R}$ and $g:\mb{R}^{2}\rightarrow\mb{R}$ with the inner-products
         $$
         \big\langle f,\tilde{f}\big\rangle
         :=\E\big[f\big(\bar{\Psi}\big)\tilde{f}\big(\bar{\Psi}\big)\big]
         \quad
         \text{and}
         \quad
         \big\langle g,\tilde{g}\big\rangle
         :=\E\big[g\big(\bar{\beta},\bar{\beta}_\sfc^0\big)\tilde{g}\big(\bar{\beta},\bar{\beta}_\sfc^0\big)\big].
         $$   
    \item[\textbf{(A3)}] For $k \ge 0$ and $\sfr \in [\sfR], \sfc \in [\sfC]$, the functions $f_{k}(\cdot,\sfc)$ and  $\bar{g}_{k}(\cdot,\cdot,\cdot,\sfr)$ are each continuous, Lipschitz w.r.t.~the first argument, and satisfy the polynomial growth condition in \eqref{eq:poly_growth_cond} for some order $r\geq 1$.
     \item[\textbf{(A4)}] The matrix $\tX$ defined as in \eqref{eq:SC_LM_matrix} satisfies $\|\tX\|_{\text{op}}<C$ for some constant $C$, and for any fixed polynomial function $f^\dag:\mb{R}^{2\sfR+2}\rightarrow\mb{R}$, as $n,p\rightarrow\infty$,
    \begin{align*}
        \max_{i\in[n]}&\bigg|\left\langle f^\dag\left(\beta_\sfc,\beta_\sfc\sqrt{\sfR\tW_{1\sfc}},\dots,\beta_\sfc\sqrt{\sfR\tW_{\sfR\sfc}}, \, \hbeta_\sfc^0\sqrt{\sfR\tW_{1\sfc}},\dots,\hbeta_\sfc^0\sqrt{\sfR\tW_{\sfR\sfc}},\sfc\right)\odot S_{i,\mathcal{J}_\sfc} \right\rangle \\
        &\quad-\left\langle f^\dag\left(\beta_\sfc,\beta_\sfc\sqrt{\sfR\tW_{1\sfc}},\dots,\beta_\sfc\sqrt{\sfR\tW_{\sfR\sfc}}, \, \hbeta_\sfc^0\sqrt{\sfR\tW_{1\sfc}},\dots,\hbeta_\sfc^0\sqrt{\sfR\tW_{\sfR\sfc}},\sfc\right)\right\rangle\cdot\langle S_{i,\mathcal{J}_\sfc}\rangle\bigg|
        \stackrel{a.s.}{\rightarrow}0,
    \end{align*}
    for all $\sfc\in[\sfC]$, where  $S$ is the variance profile of $\tX$ (see Definition \ref{def:gen_white_noise_matrix}) and $f^\dag$ acts element-wise on $\beta_\sfc$.
    \item[\textbf{(A5)}] For any fixed polynomial function $f^\ddag:\mb{R}^{2}\rightarrow\mb{R}$, as $n,p\rightarrow\infty$,
    \begin{align*}
        \max_{j\in[p]}&\left|\left\langle f^\ddag\left(\tPsi_\sfr,\sfr\right)\odot S_{\mathcal{I}_\sfr,j}\right\rangle-\left\langle f^\ddag\left(\tPsi_\sfr,\sfr\right)\right\rangle\cdot\langle S_{\mathcal{I}_\sfr,j}\rangle\right|
        \stackrel{a.s.}{\rightarrow}0,
    \end{align*}
    for all $\sfr\in[\sfR]$, where $f^\ddag$ acts element-wise on $\tPsi_\sfr$.
\end{itemize}

    The state evolution recursion in \eqref{eq:muk1_beta_sigk1_beta}-\eqref{eq:Sigma_SC_GAMP} is initialized with
    \begin{align}
            &\Sigma^{0,\sfr}=
            \begin{bmatrix}
                \frac{1}{\deltain}\sum_{\sfc=1}^\sfC\tW_{\sfr\sfc}\E[\bar{\beta}^2]
                & \frac{1}{\deltain}\sum_{\sfc=1}^\sfC\tW_{\sfr\sfc}\E[\bar{\beta}\bar{\beta}_\sfc^0] \\
                \frac{1}{\deltain}\sum_{\sfc=1}^\sfC\tW_{\sfr\sfc}\E[\bar{\beta}_\sfc^0\bar{\beta}]
                & \frac{1}{\deltain}\sum_{\sfc=1}^\sfC\tW_{\sfr\sfc}\E[(\bar{\beta}_\sfc^0)^2]
            \end{bmatrix}, \qquad \sfr \in [\sfR].
        \label{eq:cov_init}
    \end{align}

\begin{theorem}[State evolution for SC-GAMP] \label{thm:SC_GAMP}
%\textbf{\textup{()}}
Consider the GLM in \eqref{eq:GLM} with spatially coupled design $\tXsc$ defined via a base matrix satisfying \eqref{eq:gen_base_matrix}, and signal estimation using the SC-GAMP recursion in \eqref{eq:SC_GAMP}. Let Assumptions (A1)--(A5) be  satisfied, and assume $\sigma_{\beta,\sfc}^{1}>0$ for $\sfc\in[\sfC]$. Then for each $k \ge 0$, we have
\begin{align}
\begin{split}
    \big(\beta_\sfc, \beta_\sfc^{k+1}\big)
    &\stackrel{W_2}{\rightarrow}
    \big(\bar{\beta}, \mu_{\beta,\sfc}^{k+1}\bar{\beta}+G_{\beta,\sfc}^{k+1}\big),
    \quad
    \big(\Theta_\sfr,\Theta_\sfr^k\big)
    \stackrel{W_2}{\rightarrow}
    \big(Z_\sfr,Z_{\sfr}^k\big),
\end{split}
\label{eq:SC_GAMP_asym_dist}
\end{align}
almost surely as $n,p\rightarrow\infty$ with $n/p\rightarrow\delta$.
\end{theorem}

The proof  is  given in Appendix \ref{sec:reduction_to_abs_matrix-AMP}. We now use Theorem \ref{thm:SC_GAMP} to prove Theorem \ref{thm:SC_AMP_QGT}.

\subsection{Proof of Theorem \ref{thm:SC_AMP_QGT} using Theorem \ref{thm:SC_GAMP}} 
We first verify that the rescaled QGT model 
\[
\ty_i
    =q\left(\big(\tX_{i,:}^{\text{sc}}\big)^\top\beta,\,\tPsi_i\right)
    =\big(\tX_{i,:}^{\text{sc}}\big)^\top\beta+\tPsi_i
\]  
is a special case of the GLM \eqref{eq:GLM} with a generalized spatially coupled design $\tXsc$ constructed as described below \eqref{eq:gen_base_matrix}.
The rescaled  QGT design $\tXsc$  in \eqref{eq:rescaled_QGT_SC_matrix} has independent~zero mean entries with variances 
$\frac{\tW_{\sfr(i) \sfc(j)}}{n/R}$  for $i \in [n], j \in [p]$, where from  \eqref{eq:W_tilde} we have  $\tW_{\sfr\sfc}=1/\omega$ for $\sfc \le \sfr \le \sfc + \omega -1$, and 0 otherwise. Moreover, this $\tW$ satisfies the conditions in \eqref{eq:gen_base_matrix}.

Next, we show that the SC-AMP in \eqref{eq:SC_AMP} is a special case of the SC-GAMP algorithm by choosing
\begin{align}
    f_k( \beta^k_j \, ,\sfc) & =\E\big[\bar{\beta}\,\big|\,\mu_{\beta,\sfc}^k\bar{\beta}+G_{\beta,\sfc}^k 
    = \beta^k_j\big], \quad \text{ for } j \in \mathcal{J}_{\sfc},
    \label{eq:chosen_fk} \\
       g_k(\Theta_i^k,\ty_i,\sfr)
    &=\frac{\ty_i-\Theta_i^k}{\Sigma_{11}^{k,\sfr}-\Sigma_{12}^{k,\sfr}+\sigma^2},  \quad \text{ for } i \in \mathcal{I}_\sfr.
    \label{eq:chosen_gk}
\end{align}
The choices in \eqref{eq:chosen_fk} and \eqref{eq:chosen_gk} are based on the Bayes-optimal denoisers for an i.i.d.~design and  Gaussian noise $\bar{\Psi}\sim\normal(0,\sigma^2)$ (see \cite[Section 4.2]{Fen21}), and take into account the block-wise dependence of the state evolution parameters.  With this choice of $f_k$, in \eqref{eq:Sigma_SC_GAMP} we have $\E[\bar{\beta} f_{k}]=\E\big[f_k^2\big]$ which implies that $    \Sigma_{12}^{k,\sfr}=\Sigma_{21}^{k,\sfr}=\Sigma_{22}^{k,\sfr}$, for $k \ge 1$.

With our choice of denoisers, the iterate $\hR^k \in \reals^n$ in \eqref{eq:SC_GAMP} can be written as
\begin{align}
    \hR^k=Q^k\odot(\ty-\Theta^k),
    \label{eq:hatR_k_simplified}
\end{align}
where the entries of $Q^k\in\mb{R}^n$ are   $Q_i^k=\big(\Sigma_{11}^{k,\sfr(i)}-\Sigma_{12}^{k,\sfr(i)}+\sigma^2\big)^{-1}$, for $i \in [n]$. For $j \in [p]$, we have:
\begin{align}
    c_j^k
    &=\sum_{\sfr=1}^\sfR\frac{\tW_{\sfr\sfc}}{n/\sfR}\sum_{i\in\mathcal{I}_r}\partial_1g_k(\Theta_i^k,\ty_i,\sfr)
    =\frac{\sfR}{n}\sum_{\sfr=1}^\sfR\tW_{\sfr\sfc}\sum_{i\in\mathcal{I}_\sfr}\frac{-1}{\Sigma_{11}^{k,\sfr(i)}-\Sigma_{12}^{k,\sfr(i)}+\sigma^2} \nonumber \\
    &=\frac{\sfR}{n}\sum_{\sfr=1}^\sfR\tW_{\sfr\sfc}\cdot\frac{n}{\sfR}\cdot\frac{-1}{\Sigma_{11}^{k,\sfr}-\Sigma_{12}^{k,\sfr}+\sigma^2}
    =-\sum_{\sfr=1}^\sfR\tW_{\sfr\sfc}\big(\Sigma_{11}^{k,\sfr}-\Sigma_{12}^{k,\sfr}+\sigma^2\big)^{-1}.
    \label{eq:c_j_k_explicit}
\end{align}
Next, we have
\begin{align}
    (\sigma^{k+1}_{\beta,\sfc})^2
    &=\sum_{\sfr=1}^\sfR\tW_{\sfr\sfc}\E[\bar{g}_k(Z_\sfr,Z_\sfr^k,\bar{\Psi},\sfr)^2]
    \stackrel{(a)}{=}\sum_{\sfr=1}^\sfR\tW_{\sfr\sfc}
    \E\left[\frac{(Z_\sfr+\bar{\Psi}-Z_\sfr^k)^2}{(\Sigma_{11}^{k,\sfr}-\Sigma_{12}^{k,\sfr}+\sigma^2)^2}\right] \nonumber \\
    &\stackrel{(b)}{=}\sum_{\sfr=1}^\sfR\tW_{\sfr\sfc}\frac{\sigma^2+\E[(Z_\sfr-Z_\sfr^k)^2]}{(\Sigma_{11}^{k,\sfr}-\Sigma_{12}^{k,\sfr}+\sigma^2)^2}
    =\sum_{\sfr=1}^\sfR\tW_{\sfr\sfc}\big(\Sigma_{11}^{k,\sfr}-\Sigma_{12}^{k,\sfr}+\sigma^2\big)^{-1}
    \stackrel{(c)}{=}-c_j^k,
    \label{eq:c_j_k_simplified}
\end{align}
where (a) applies \eqref{eq:chosen_gk}, (b) uses the independence between $(Z_\sfr,Z_\sfr^k)$ and $\bar{\Psi}$, and (c) uses \eqref{eq:c_j_k_explicit}. Substituting the definitions of  
$\Sigma_{11}^{k,\sfr}$ and $\Sigma_{12}^{k,\sfr}$ in \eqref{eq:Sigma_SC_GAMP} into \eqref{eq:c_j_k_simplified}, we get
\begin{align}
    (\sigma^{k+1}_{\beta,\sfc})^2
    =\sum_{\sfr=1}^\sfR\tW_{\sfr\sfc}
    \left(
    \sigma^2+\frac{1}{\deltain}\sum_{\sfc'=1}^\sfC\tW_{\sfr\sfc'}\E\Big[(\bar{\beta}-f_k(\mu_{\beta,\sfc}^k\bar{\beta}+G_{\beta,\sfc}^k, \sfc))^2\Big]
    \right)^{-1}.
    \label{eq:sigma_bc_k_simplified}
\end{align}
We also have the identity
\begin{align}
    \mu_{\beta,\sfc}^{k+1}
    &=\sum_{\sfr=1}^\sfR\tW_{\sfr\sfc}\E[\partial_1\bar{g}_k(Z_\sfr,Z_\sfr^k,\bar{\Psi},\sfr)]
    \stackrel{(a)}{=}\sum_{\sfr=1}^\sfR\tW_{\sfr\sfc}\big(\Sigma_{11}^{k,\sfr}-\Sigma_{12}^{k,\sfr}+\sigma^2\big)^{-1}
    \stackrel{(b)}{=}\big(\sigma_{\beta,\sfc}^{k+1}\big)^2,
    \label{eq:bayes_opt_prop2}
\end{align}
where (a) uses \eqref{eq:chosen_gk} and $\bar{y}_\sfr=Z_\sfr+\bar{\Psi}$, and (b) uses the last equality in \eqref{eq:c_j_k_simplified}. 

Letting $\chi_{\sfc}^k=\sigma_{\beta,\sfc}^k$, we observe that the update equations in \eqref{eq:sigma_bc_k_simplified}--\eqref{eq:bayes_opt_prop2} match the state evolution recursion of the SC-AMP algorithm in \eqref{eq:SC_SE_param1}.
Then, substituting \eqref{eq:chosen_fk}-\eqref{eq:sigma_bc_k_simplified} into SC-GAMP in \eqref{eq:SC_GAMP}, followed by a change of variables from $\Theta^k$ to  $\tTheta^k:=\ty-\Theta^k$,  gives us the SC-AMP algorithm in \eqref{eq:SC_AMP}. 
%Furthermore, substituting $\chi_{\sfc}^k=\sigma_{\beta,\sfc}^k$ gives us the state evolution recursion of the SC-AMP algorithm. 
Finally, we check that the assumptions of Theorem \ref{thm:SC_GAMP} are satisfied:
\begin{itemize}
    \item Assumptions \textbf{(A1)} and \textbf{(A2)} hold due to the model assumptions in Section \ref{sec:prelim}, the noise scaling assumption (p.~\pageref{par:noise_scaling}), and the SC-AMP initialization $\hbeta^0=\E[\bar{\beta}]1_p$. Recalling that $\bar{\beta} \sim \text{Bernoulli}(\pi)$ for the QGT model, the state evolution initialization in \eqref{eq:cov_init} becomes
    \begin{align*}
        \Sigma^{0,\sfr}=
        \frac{1}{\deltain}
        \begin{bmatrix}
            \sum_{\sfc=1}^\sfC\tW_{\sfr\sfc}\pi
            & \sum_{\sfc=1}^\sfC\tW_{\sfr\sfc}\pi^2 \\
            \sum_{\sfc=1}^\sfC\tW_{\sfr\sfc}\pi^2
            & \sum_{\sfc=1}^\sfC\tW_{\sfr\sfc}\pi^2
        \end{bmatrix}, \qquad \sfr \in [\sfR].
    \end{align*}
    Using this in \eqref{eq:sigma_bc_k_simplified}, we obtain that $\sigma_{\beta,\sfc}^1=\chi_{\sfc}^1$, where the latter is defined in \eqref{eq:SEchi_init}.
    \item \textbf{(A3)}.  With $\bar{\beta} \sim \text{Bernoulli}(\pi)$, the denoiser $f_k( \cdot, \sfc)$ in \eqref{eq:chosen_fk} can be explicitly computed (see \eqref{eq:f_k_bayes}), and the choice for $\bar{g}_k(\cdot, \cdot, \cdot,  \sfr)$ is given by \eqref{eq:chosen_gk} and \eqref{eq:g_k_eq_g_k_bar}. From these expressions, it can be verified  that both functions are continuous, Lipschitz w.r.t.~the first argument, and satisfy the polynomial growth condition with $r=2$.
    \item \textbf{(A4)} and \textbf{(A5)}. Recalling the definition of $\tX$ in \eqref{eq:SC_LM_matrix} and  of $\tXsc$ in \eqref{eq:rescaled_QGT_SC_matrix}, we note the matrix $\sqrt{n}\tX$ has independent ~sub-Gaussian entries of variance 1. Using a concentration inequality for the operator norm of sub-Gaussian matrices \cite[Theorem 4.4.5]{Ver18} together with the Borel-Cantelli lemma, we obtain that $\|\tX\|_{\text{op}}<C$ almost surely for sufficiently large $p$. Since the variance profile $S_{ij}=1$ for all $(i,j)$, the second condition in (A3) is trivially satisfied. Assumption (A5) is similarly   satisfied.
\end{itemize}
This completes the proof. \qed

\section{Proof of Theorem \ref{thm:asym_MSE}}
\label{sec:proof:thm:asym_MSE}

\subsection{Proof of \eqref{eq:asym_MSE_SC_AMP}} \label{sec:asym_thm_p1_proof}

The idea is to rewrite the SC-AMP state evolution in  \eqref{eq:SC_SE_param1} in terms of a general coupled  recursion analyzed by Yedla et al.~in \cite{Yed14}. We then apply the fixed point characterization of \cite[Theorem 1]{Yed14} to the SC-AMP state evolution to obtain \eqref{eq:asym_MSE_SC_AMP}. 

\paragraph{General coupled  recursion \cite{Yed14}.} 
Let $\mathcal{X}=[0,x_{\max}]$, $\mathcal{Y}=[0,y_{\max}]$ with $x_{\max}, y_{\max} \in(0,\infty)$. Let $f:\mathcal{Y}\rightarrow\mathcal{X}$ be a non-decreasing $C^1$ function, and let $g:\mathcal{X}\rightarrow\mathcal{Y}$ be a strictly increasing $C^2$ function with $y_{\max} = g(x_{\max})$. (We say a function $f:\mathcal{Z}\rightarrow\mb{R}$ is $C^d$ if its $d$th derivative exists and is continuous on $\mathcal{Z}$.)  Consider a matrix  $A\in\mb{R}^{\sfC\times \sfR}$ with   
$\sfR=\sfC+\omega-1$, whose entries are defined as follows, for $\sfr \in [\sfR], \sfc \in [\sfC]$:
\begin{align*}
    A_{\sfc\sfr}
    =\begin{cases}
        \frac{1}{\omega} &\text{if $ \sfc \leq\sfr\leq \sfc + \omega -1$,} \\
        0 &\text{otherwise.}
    \end{cases}
\end{align*} 
Using $A$, we define the following coupled recursion. For $\sfr \in [\sfR]$:
\begin{align}
\begin{split}
    y_\sfr^{k+1}&=g(x_\sfr^k),  \quad
    x_\sfr^{k+1}=\sum_{\sfc=1}^\sfC A_{\sfc\sfr}f
    \left(
    \sum_{\sfr'=1}^\sfR A_{\sfc\sfr'}y_{\sfr'}^{k+1}
    \right).
\end{split}
\label{eq:gen_scalar_recursion}
\end{align}
The recursion is initialized with $x_\sfr^{0}=x_{\max}$ for $\sfr\in[\sfR]$. This initialization, along with the monotonicity of $f$ and $g$, ensures that the coupled recursion converges to a fixed point \cite{Yed14}. The fixed point $\{\lim_{k\rightarrow\infty}x_\sfr^k\}_{\sfr\in[\sfR]}$ is characterized by the lemma below in terms of the following potential function:
\begin{align}
    V(x):=xg(x)-\int_0^xg(z)dz-\int_0^{g(x)}f(z)dz.
    \label{eq:gen_pot_fn}
\end{align}
\begin{lemma} \textup{\cite[Theorem 1]{Yed14}} \label{lem:Yelda_thm}
    For any $\gamma>0$, there exists $\omega_0<\infty$ such that for all $\omega>\omega_0$ and $\sfC\in[1,\infty]$, the fixed point $x_\sfr^\infty:=\lim_{k\rightarrow\infty}x_\sfr^k$, for $\sfr\in[\sfR]$, of the coupled recursion in \eqref{eq:gen_scalar_recursion} satisfies the upper bound
    \begin{align}
        \max_{\sfr\in[\sfR]}x_\sfr^\infty
        \leq\max\left\{\argmin_{x\in\mathcal{X}}V(x)\right\}+\gamma.
    \end{align}
\end{lemma}
\paragraph{Analyzing state evolution using Lemma \ref{lem:Yelda_thm}.}
Let us define the function 
$$
\mmse(s)= \E\left[ \Big( \bar{\beta} - \E[\bar{\beta} \mid \sqrt{s}\, \bar{\beta}+G]  \Big)^2 \right],
$$
where $G\sim\normal(0,1)$ is independent of $\bar{\beta}$. Then, recalling the definition of $f_k$ from \eqref{eq:optimal_fk}, the state evolution 
recursion in \eqref{eq:SC_SE_param1} is:
\begin{align}
    (\chi^{k+1}_{\sfc})^2
    &=\sum_{\sfr=1}^\sfR\tW_{\sfr\sfc}
    \Big(
    \underbrace{
    \sigma^2+\frac{1}{\deltain}\sum_{\sfc=1}^\sfC\tW_{\sfr\sfc}
    \underbrace{\mmse\big((\chi_{\sfc}^{k})^2\big)}_{=:\psi_\sfc^k}
    }_{=:\phi_\sfr^k}
    \Big)^{-1}.
    \label{eq:SE_recursion_sigma_form}
\end{align}
Using the definitions above, the state evolution recursion can be rewritten as:
\begin{align}
    \phi_\sfr^k
    &=\sigma^2+\frac{1}{\deltain}\sum_{\sfc=1}^\sfC\tW_{\sfr\sfc}\psi_\sfc^k, 
    \quad
    \psi_\sfc^{k+1}=\mmse\left(\sum_{\sfr=1}^\sfR\tW_{\sfr\sfc}\big(\phi_\sfr^k\big)^{-1}\right),
    \label{eq:SE_phi_psi}
\end{align}
which  combined into one equation gives:
\begin{align*}
    \phi_\sfr^{k+1}
    &=\sigma^2+\frac{1}{\deltain}
    \underbrace{
    \sum_{\sfc=1}^\sfC\tW_{\sfr\sfc}\mmse
    \left(\sum_{\sfr'=1}^\sfR\tW_{\sfr'\sfc}\big(\phi_{\sfr'}^k\big)^{-1}\right)
    }_{=:x_\sfr^{k+1}}.
\end{align*}
Rewriting the recursion in terms of $x_\sfr^{k+1}$ defined above, we get:
\begin{align}
    x_\sfr^{k+1}
    =\sum_{\sfc=1}^\sfC\tW_{\sfr\sfc}\mmse
    \left(
    \sum_{\sfr'=1}^\sfR\tW_{\sfr'\sfc}\Big(\sigma^2+\frac{x_\sfr^k}{\deltain}\Big)^{-1}
    \right), \quad \sfr \in [\sfR].
    \label{eq:modified_SE_recursion}
\end{align}

The modified recursion in \eqref{eq:modified_SE_recursion} is an instance of the coupled recursion in \eqref{eq:gen_scalar_recursion}, which can be seen by taking $A=\tW^\top$ and
\begin{align*}
    f(y)
    &=\mmse\Big(\frac{1}{\sigma^2}-y\Big),
    \quad
    g(x)=\frac{1}{\sigma^2}-\frac{1}{\sigma^2+x/\deltain}.
\end{align*}
It is shown in \cite[Section VI.E]{Yed14} that with these functions, which satisfy the assumptions stated at the start of this section, the potential function $V(x)$ in \eqref{eq:gen_pot_fn} equals $U(b;\deltain)$ defined in \eqref{eq:U_function} (upto an additive constant).
Invoking Lemma \ref{lem:Yelda_thm}, we have that the fixed points of \eqref{eq:modified_SE_recursion}, denoted by $(x_\sfr^\infty)_{\sfr \in [\sfR]}$ satisfy:
\begin{align}
    \max_{\sfr\in[\sfR]} \, x_\sfr^\infty
    &\leq\max\left\{\argmin_{x\in[0, \mmse(0)]}V(x)\right\}+\gamma
    =\max\left\{\argmin_{b\in[0,\Var(\bar{\beta})]}U(b;\deltain)\right\}+\gamma,
    \label{eq:xr_inf_bound}
\end{align}
where the last equality uses the fact that $\mmse(0)=\Var[\bar{\beta}]$. 

We now use the bound on $x_\sfr^\infty$ to upper bound the asymptotic MSE. Using \eqref{eq:mean_sq_error}, the asymptotic MSE (as $k \to \infty$) can be written as $\frac{1}{\sfC}\sum_{\sfc=1}^\sfC \mmse\left( (\chi^\infty_\sfc)^2\right)$ which can be further written as $\frac{1}{\sfC}\sum_{\sfc=1}^\sfC\psi_\sfc^\infty$ using \eqref{eq:SE_phi_psi}. From \eqref{eq:SE_phi_psi} and \eqref{eq:modified_SE_recursion}, we can write $x_\sfr^\infty=\sum_{\sfc=1}^\sfC\tW_{\sfr\sfc}\psi_\sfc^\infty$, which can be written more explicitly as
\begin{align}
    \begin{bmatrix}
        x_1^\infty \\
        x_2^\infty \\
        \vdots \\
        x_\omega^\infty \\
        x_{\omega+1}^\infty \\
        \vdots \\
        x_\Lambda^\infty \\
        x_{\Lambda+1}^\infty \\
        \vdots \\
        x_{\Lambda+\omega-1}^\infty
    \end{bmatrix}
    =\begin{bmatrix}
        \frac{1}{\omega} & 0 & \dots & 0 & 0 \\
        \frac{1}{\omega} & \frac{1}{\omega} & \dots & 0 & 0 \\
        \vdots & \vdots & & \vdots & \vdots \\
        \frac{1}{\omega} & \frac{1}{\omega} & \dots & 0 & 0 \\
        0 & \frac{1}{\omega} & \dots & 0 & 0 \\
        \vdots & \vdots & & \vdots & \vdots \\
        0 & 0 & \dots & \frac{1}{\omega} & \frac{1}{\omega} \\
        0 & 0 & \dots & \frac{1}{\omega} & \frac{1}{\omega} \\
        \vdots & \vdots & & \vdots & \vdots \\
        0 & 0 & \dots & 0 & \frac{1}{\omega}
    \end{bmatrix}
    \begin{bmatrix}
        \psi_1^\infty \\
        \psi_2^\infty \\
        \vdots \\
        \psi_{\Lambda-1}^\infty \\
        \psi_\Lambda^\infty
    \end{bmatrix}
    =\begin{bmatrix}
        \frac{1}{\omega}\psi_1^\infty \\
        \frac{1}{\omega}(\psi_1^\infty+\psi_2^\infty) \\
        \vdots \\
        \frac{1}{\omega}(\psi_1^\infty+\psi_2^\infty+\dots+\psi_\omega^\infty) \\
        \frac{1}{\omega}(\psi_2^\infty+\psi_3^\infty+\dots+\psi_{\omega+1}^\infty) \\
        \vdots \\
        \frac{1}{\omega}(\psi_{\Lambda-\omega+1}^\infty+\psi_{\Lambda-\omega+2}^\infty+\dots+\psi_\Lambda^\infty) \\
        \frac{1}{\omega}(\psi_{\Lambda-\omega+2}^\infty+\psi_{\Lambda-\omega+3}^\infty+\dots+\psi_\Lambda^\infty) \\
        \vdots \\
        \frac{1}{\omega}\psi_\Lambda^\infty
    \end{bmatrix}.
    \label{eq:explicit_x_formula}
\end{align}
For notational convenience, let us denote
$x^*
=\max\left\{\argmin_{b\in[0,\Var(\bar{\beta})}U(b;\deltain)\right\}+\gamma$.
From \eqref{eq:xr_inf_bound}, we have that $x_\sfr^\infty \leq x^*$ for all $\sfr\in[\sfR]$ where $\sfR=\Lambda+\omega-1$. In the rightmost vector in \eqref{eq:explicit_x_formula}, we observe that each entry  contains the sum of at most $\omega$ consecutive terms. This implies that
\begin{align}
    \sum_{\sfc'=\sfc}^{\sfc+\omega-1}\psi_{\sfc'}^\infty
    \leq x^*\omega, \quad \sfc \in [\sfC].
    \label{eq:bound_of_omega_group}
\end{align}
Recalling that $\sfC =\Lambda$ and dividing the elements of $[\psi_1^\infty,\dots,\psi_\Lambda^\infty]$ into groups of non-intersecting consecutive terms -- with index groups $[1:\omega],[\omega+1:2\omega],\dots,[\Lambda-\omega+1:\Lambda]$ -- gives us at most $\lceil\frac{\Lambda}{\omega}\rceil$ disjoint groups, with the sum of each group having an upper bound of $x^*\omega$ by \eqref{eq:bound_of_omega_group}. Hence, the asymptotic MSE can be bounded as
\begin{align*}
    \frac{1}{\sfC}\sum_{\sfc=1}^\sfC\psi_\sfc^\infty
    \leq\frac{1}{\Lambda}\left\lceil\frac{\Lambda}{\omega}\right\rceil x^*\omega
    <  \frac{1}{\Lambda}\Big(\frac{\Lambda}{\omega}+1\Big)x^*\omega
    =\frac{\Lambda+\omega}{\Lambda}x^*.
\end{align*}
This completes the proof of the first part of Theorem \ref{thm:asym_MSE}. 

\subsection{Proof of \eqref{eq:asym_MSE_AMP}}

For the i.i.d.~design, we have $\sfR= \sfC=1$ and $\tW_{11}=1$, so the state evolution reduces to 
\begin{align*}
    x^{k+1}
    &=\mmse\left(\Big(\sigma^2+\frac{x^k}{\delta}\Big)^{-1}\right),
\end{align*}
with the initialization $x^0= \mmse(0)=\Var(\bar{\beta})$. Since $\mmse(s)$ is strictly decreasing in $s \in [0, \infty)$, the sequence $(x^k)$ is monotonically decreasing in $k$, and since it is bounded below, it converges to a fixed point. Since the recursion is initialized at $x^0=\Var(\bar{\beta})$, the fixed point is given by the largest solution of $x=\mmse\left(\Big(\sigma^2+\frac{x}{\delta}\Big)^{-1}\right)$.  Finally, we observe that the same equation is obtained by setting $\partial_1U(b;\delta)=0$. This completes the proof of \eqref{eq:asym_MSE_AMP}.  \qed

\section{Discussion and Future Directions}

We have shown that for noiseless QGT and pooled data,   a spatially coupled Bernoulli test design with an AMP recovery algorithm achieves almost-exact recovery with $n=o(p)$ tests. A key open question is to determine how  $n$  scales with $p$ for almost-exact recovery with SC-AMP.  Deriving this scaling is beyond the reach of our asymptotic analysis, which requires that $n/p  \to \delta >0$, but recent non-asymptotic analyses of AMP \cite{LiFanWei_Z2, Bao23} might provide tools to address this question, and allow comparisons with the information-theoretic  bound of $n^* =  \gamma^* \frac{p}{\log p}$ (see \eqref{eq:gamma_star}).

Another open question is to determine the number of tests required for \emph{exact} recovery in the linear regime  for an efficient scheme with a random design. We recall that exact recovery requires $\mb{P}\big[\tbeta\neq\beta\big]\rightarrow0$ as $p \to \infty$, in contrast to the almost-exact recovery criterion in \eqref{eq:almost_exact_recovery}.

In this paper, the only assumption on the QGT signal vector  $\beta$  is that its empirical distribution converges to a Bernoulli distribution.  The items are not required to be independent, and in some applications there may be known correlations between the items. Although the current SC-AMP algorithm does not exploit correlations between signal entries, it can be adapted to do so, using non-separable denoising functions \cite{Ber20}. 

An interesting direction for future work is to study variants of QGT and pooled data with additional structure, e.g., constraints on the tests or side-information that captures correlations between the items. For example, in graph-constrained group testing \cite{Che12}, the items are vertices on a graph, and  items included in each test have to conform to constraints imposed by the graph. Recent work in Boolean group testing has also shown that exploiting correlations or community structure among the items can significantly improve testing efficiency \cite{Nik23,Ahn23, jain2024sparsity}. 
Designing efficient schemes for quantitative group testing in such structured settings is an open question.

\vspace{1in}

{\begin{center}  \Large \textbf{Appendix}  \end{center}}
\appendix

\section{Proof of Theorem \ref{thm:SC_GAMP}}

\subsection{Proof of Theorem \ref{thm:SC_GAMP} via Reduction to Abstract Matrix-AMP} \label{sec:reduction_to_abs_matrix-AMP}

We describe an abstract matrix-AMP iteration for which a state evolution result can be established, and then prove Theorem \ref{thm:SC_GAMP} by reducing the SC-GAMP algorithm to the abstract matrix-AMP. For $k\geq 0$, the abstract matrix-AMP produces iterates $H^{k+1}\in\mb{R}^{p\times l_H}$ and $E^{k+1}\in\mb{R}^{n
\times l_E}$ as follows:
\begin{align}
\begin{split}
    H^{k+1}
    &=\tX^\top \hR^k-\hH^k\cdot(\mathsf{D}^k)^\top,
    \quad
    \hR^k=\tg_k(E^k,\gamma,\mathcal{R}), 
    \quad
    \sfD^k=\frac{1}{\sfR}\sum_{\sfr=1}^\sfR\E[\tg_k'(\bar{E}_\sfr^k,\bar{\gamma},\sfr)], \\
    E^{k+1}&=\tX\hH^{k+1}-\hR^k\cdot(\sfB^{k+1})^\top,
    \quad
    \hH^{k+1}=\tf_{k+1}(H^{k+1},\beta,\mathcal{C}),
    \quad
    \sfB^{k+1}=\frac{1}{\delta\sfC}\sum_{\sfc=1}^\sfC\E[\tf_{k+1}'(\bH_{\sfc}^{k+1},\bar{\beta},\sfc)],
\end{split}
\label{eq:abs_matrix_AMP}
\end{align}
where $\beta\in\mb{R}^p$, $\gamma\in\mb{R}^n$, and $\mathcal{C}$ and $\mathcal{R}$ are defined in \eqref{eq:S_c_and_S_r_def}. The functions $\tf_{k+1}:\mb{R}^{l_H}\times\mb{R}\times[\sfC]\rightarrow\mb{R}^{l_E}$ and $\tg_k:\mb{R}^{l_E}\times\mb{R}\times[\sfR]\rightarrow\mb{R}^{l_H}$ act row-wise on their inputs, and $\tf'_{k+1},\tg'_k$ denote the Jacobians with respect to their first arguments. The joint laws of $(\bar{E}_\sfr^k,\bar{\gamma})$ and $(\bH_{\sfc}^{k+1},\bar{\beta})$ are described later (below \eqref{eq:abs_matrix_AMP_SE}).
The algorithm is initialized with $\hH^0\in\mb{R}^{p\times l_H}$ and $E^0=\tX \hH^0\in\mb{R}^{n\times l_E}$. 

We have the following assumptions for the abstract matrix-AMP algorithm.
\begin{itemize}
    \item[\textbf{(B1)}] As dimensions $p,n\rightarrow\infty$, the ratio $n/p\rightarrow\delta>0$. Furthermore, $l_E$, $l_H$, $\sfR$, and $\sfC$ are positive integers that do not scale with $p$ as $n,p\rightarrow\infty$.
    \item[\textbf{(B2)}] Almost surely for all $\sfc\in[\sfC]$, as $n,p\rightarrow\infty$, $(\beta_\sfc,\hH_{\mathcal{J}_\sfc,:}^0)\stackrel{W}{\rightarrow}(\bar{\beta},\bar{H}_\sfc^0)$ and $\gamma_\sfr\stackrel{W}{\rightarrow}\bar{\gamma}$, with the joint law of $(\bar{\beta}, \, \bar{H}_\sfc^0) \in \reals \times \reals^{l_E}$ having finite moments of all orders, where $\mathcal{J}_{\sfc}$ is defined in \eqref{eq:J_c_I_r_def}. 
    Multivariate polynomials are dense in the real $L^2$-spaces of functions $f:\mb{R}^{2}\rightarrow\mb{R}$ and $g:\mb{R}^{l_E+2}\rightarrow\mb{R}$ with the inner products
    \begin{align*}
        \langle f,\tilde{f}\rangle
        :=\E[f(\bar{\gamma},\sfr)\tilde{f}(\bar{\gamma},\sfr)]
        \quad
        \text{and}
        \quad
        \langle g, \tilde{g}\rangle
        :=\E[g(\bar{\beta},\bar{H}_\sfc^0,\sfc)\tilde{g}(\bar{\beta},\bar{H}_\sfc^0,\sfc)],
    \end{align*}
    for all $\sfr\in[\sfR]$ and $\sfc\in[\sfC]$.
    \item[\textbf{(B3)}] For $k\geq0$, the functions $\tf_{k+1}$ and $\tg_k$ are continuous, Lipschitz w.r.t.~their first argument, and satisfy the polynomial growth condition in \eqref{eq:poly_growth_cond} for some order $r\geq 1$.
    \item[\textbf{(B4)}] $\tX$ is a generalized white noise matrix where $\|\tX\|_{\text{op}}<C$ almost surely for sufficiently large $n,p$ for some constant $C$. For any fixed polynomial functions $f^\dag:\mb{R}^{l_E+2}\rightarrow\mb{R}$ and $f^{\ddag}:\mb{R}^{2}\rightarrow\mb{R}$, as $n,p\rightarrow\infty$,
    \begin{align*}
        \max_{i\in[n]}\left|\left\langle f^\dag(\beta_\sfc,\hH_{\mathcal{J}_\sfc,:}^0,\sfc)\odot S_{i,\mathcal{J}_\sfc}\right\rangle-\left\langle f^\dag(\beta_\sfc,\hH_{\mathcal{J}_\sfc,:}^0,\sfc)\right\rangle\cdot \left\langle S_{i,\mathcal{J}_\sfc}\right\rangle\right|
        &\stackrel{a.s.}{\rightarrow}
        0, \\
        \max_{j\in[p]}\left|\left\langle f^\ddag(\gamma_\sfr,\sfr)\odot S_{\mathcal{I}_\sfr,j}\right\rangle-\left\langle f^\ddag(\gamma_\sfr,\sfr)\right\rangle\cdot \left\langle S_{\mathcal{I}_\sfr,j}\right\rangle\right|
        &\stackrel{a.s.}{\rightarrow}
        0,
    \end{align*}
    for all $\sfc\in[\sfC]$ and $\sfr\in[\sfR]$, where $S$ is the variance profile of $\tX$ (see Definition \ref{def:gen_white_noise_matrix}).
\end{itemize}

\paragraph{State evolution.} The state evolution parameters for $k\geq0$ are
\begin{align}
\begin{split}
    \Omega^{k+1}
    &=\frac{1}{\sfR}\sum_{\sfr=1}^\sfR\widehat{\Omega}^{k+1,\sfr},\quad
    \widehat{\Omega}^{k+1,\sfr}
    =\E\big[\tg_k(\bar{E}_\sfr^k,\bar{\gamma},\sfr)\tg_k(\bar{E}_\sfr^k,\bar{\gamma},\sfr)^\top\big]\in\mb{R}^{l_H\times l_H}, \\
    \Pi^{k+1}
    &=\frac{1}{\sfC}\sum_{\sfc=1}^\sfC\widehat{\Pi}^{k+1,\sfc},\quad
    \widehat{\Pi}^{k+1,\sfc}=\frac{1}{\delta}\E\big[\tf_{k+1}(\bH_{\sfc}^{k+1},\bar{\beta},\sfc)\tf_{k+1}(\bH_{\sfc}^{k+1},\bar{\beta},\sfc)^\top\big]\in\mb{R}^{l_E\times l_E},
\end{split}
\label{eq:abs_matrix_AMP_SE}
\end{align}
with $\bar{E}_\sfr^k\sim\normal(0,\Pi^k)$ independent of $\bar{\gamma}$, and $\bH_{\sfc}^{k+1}\sim\normal(0,\Omega^{k+1})$ independent of $\bar{\beta}$. The state evolution is initialized with 
\begin{align}
    \Pi^0=\frac{1}{\sfC}\sum_{\sfc=1}^\sfC\widehat{\Pi}^{0,\sfc}, \quad \text{ where } \quad 
    \widehat{\Pi}^{0,\sfc} = \frac{1}{\delta}\E\left[ \bar{H}_\sfc^0 
    (\bar{H}_\sfc^0)^\top \right].
    \label{eq:Pi0c_def}
\end{align}

\begin{theorem}[State evolution for abstract matrix-AMP] \label{thm:abs_matrix_AMP}
Consider the abstract matrix-AMP in \eqref{eq:abs_matrix_AMP} with the assumptions (B1)--(B4) being satisfied. For  $k\geq1$, and for $\sfr\in[\sfR], \sfc\in[\sfC]$, the iterates of the abstract matrix AMP satisfy
\begin{align*}
    \big(H_{\mathcal{J}_\sfc,:}^k,\,\beta_\sfc\big)
    &\stackrel{W_2}{\rightarrow}
    \big(\bar{H}_\sfc^k,\,\bar{\beta}\big), 
    \quad
    \big(E_{\mathcal{I}_\sfr,:}^k,\,\gamma_\sfr\big)
    \stackrel{W_2}{\rightarrow}
    \big(\bar{E}_\sfr^k,\,\bar{\gamma}\big),
\end{align*}
where $\bar{H}_\sfc^k$ is independent of $\bar{\beta}$, and $\bar{E}_\sfr^k$ is independent of $\bar{\gamma}$.
\end{theorem}

Theorem \ref{thm:abs_matrix_AMP} is proved in Section \ref{sec:reduction_to_U-AMP}. 

\paragraph{Proof of Theorem \ref{thm:SC_GAMP} using  Theorem \ref{thm:abs_matrix_AMP}.} We reduce the SC-GAMP algorithm to the abstract matrix-AMP iteration. As given in \eqref{eq:SC_LM_matrix}, we can obtain $\tX$ from $\tXsc$, which is a generalized white noise matrix (see Definition \ref{def:gen_white_noise_matrix}) with variance profile $S_{ij}=1$ for all $(i,j)\in[n]\times[p]$. Next, we set $\gamma:=\tPsi$,  the same $\beta$ for both algorithms, and the functions $\tf_k:\mb{R}^{\sfC}\times\mb{R}\times[\sfC]\rightarrow\mb{R}^{2\sfR}$ and $\tg_k:\mb{R}^{2\sfR}\times\mb{R}\times[\sfR]\rightarrow\mb{R}^\sfC$ as follows:
\begin{align*}
    \tf_k(H_{j,:}^k,\beta_j,\sfc)
    &=\left[\beta_j\left(\sqrt{\sfR\tW_{1\sfc}},\dots,\sqrt{\sfR\tW_{\sfR\sfc}}\right),
    \,\,
    f_k(H_{j\sfc}^k+\mu_{\beta,\sfc}^{k}\beta_j,\sfc)\left(\sqrt{\sfR\tW_{1\sfc}},\dots,\sqrt{\sfR\tW_{\sfR\sfc}}\right)\right],
\end{align*}
for $j\in\mathcal{J}_\sfc$ and $H^k_{j,:}\in\mb{R}^\sfC$ (i.e., $l_H=\sfC$). We also set
\begin{align*}
    \tg_k(E_{i,:}^k,\gamma_i,\sfr)
    &=g_k(E_{i\sfr}^k, \, q(E_{i,\sfr+\sfR},\tPsi_i),\, \sfr)\left(\sqrt{\sfR\tW_{\sfr 1}},\dots,\sqrt{\sfR\tW_{\sfr\sfC}}\right),
\end{align*}
for $i\in\mathcal{I}_\sfr$ and $E_{i,:}^k\in\mb{R}^{2\sfR}$ (i.e., $l_E=2\sfR$). The abstract matrix-AMP iteration is initialized with
\begin{align*}
    \hH_{j,:}^0
    &=
    \left[
    \beta_j\left(\sqrt{\sfR\tW_{1\sfc}},\dots,\sqrt{\sfR\tW_{\sfR\sfc}}\right), \, 
    \hbeta_j^0\left(\sqrt{\sfR\tW_{1\sfc}},\dots,\sqrt{\sfR\tW_{\sfR\sfc}}\right)
    \right], \quad \text{ for } j \in \mathcal{J}_\sfc.
\end{align*}
The state evolution parameters $\Pi^k\in\mb{R}^{2\sfR\times 2\sfR}$ and $\Omega^k\in\mb{R}^{\sfC\times \sfC}$ are recursively computed as follows. We have $\bE_{\sfr}^k\sim\normal(0,\Pi^k)$ independent of $\bar{\Psi}$, and the entries of $\widehat{\Omega}^{k+1,\sfr}$ are
$$
\widehat{\Omega}^{k+1,\sfr}_{\sfc\sfc'}
=\sfR\cdot\E\left[g_k(\bE_{\sfr,\sfr}^k,q(\bE_{\sfr,\sfr+\sfR},\bar{\Psi}),\sfr)^2\right]\sqrt{\tW_{\sfr\sfc}\tW_{\sfr\sfc'}}, \quad
\text{for $\sfc,\sfc'\in[\sfC]$}.
$$
Next, we have $\bH_{\sfc}^k\sim\normal(0,\Omega^k)$ independent of $\bar{\beta}$ and
\begin{align*}
    \widehat{\Pi}^{k+1,\sfc}_{rs}
    &=
    \begin{cases}
        \frac{1}{\delta}\sfR\E[\bar{\beta}^2]\sqrt{\tW_{r\sfc}\tW_{s\sfc}} &r,s\in[\sfR], \\
        \frac{1}{\delta}\sfR\E[\bar{\beta}f_k(\{\bH_\sfc^k\}_\sfc+\mu_{\beta,\sfc}^{k+1}\bar{\beta},\sfc)]\sqrt{\tW_{r\sfc}\tW_{(r-\sfR)\sfc}} &r\in[\sfR],\sfR+1\leq r\leq2\sfR,\\
        \frac{1}{\delta}\sfR\E[\bar{\beta}f_k(\{\bH_\sfc^k\}_\sfc+\mu_{\beta,\sfc}^{k+1}\bar{\beta},\sfc)]\sqrt{\tW_{(r-\sfR)\sfc}\tW_{s\sfc}} &\sfR+1\leq r\leq2\sfR,s\in[\sfR], \\
        \frac{1}{\delta}\sfR\E[f_k(\{\bH_\sfc^k\}_\sfc+\mu_{\beta,\sfc}^{k+1}\bar{\beta},\sfc)^2]\sqrt{\tW_{(r-\sfR)\sfc}\tW_{(s-\sfR)\sfc}} &\sfR+1\leq r,s\leq 2\sfR.
    \end{cases}
\end{align*}
The state evolution is initialized with 
$$
\widehat{\Pi}^{0,\sfc}
=\frac{1}{\delta}\lim_{p\rightarrow\infty}
\frac{1}{p/\sfC}(\hH_{\mathcal{J}_\sfc,:}^0)^\top\hH_{\mathcal{J}_\sfc,:}^0,
$$
for $\sfc\in[\sfC]$, with $\hH_{\mathcal{J}_\sfc,:}^0\in\mb{R}^{p/\sfC\times 2\sfR}$. By assumption (A2) (see \eqref{eq:cov_init}), the entries of $\widehat{\Pi}^{0,\sfc}$ are given by
\begin{align*}
    \widehat{\Pi}_{rs}^{0,\sfc}
    &=
    \begin{cases}
        \frac{1}{\delta}\sfR\E[\bar{\beta}^2]\sqrt{\tW_{r\sfc}\tW_{s\sfc}} &r,s\in[\sfR], \\
        \frac{1}{\delta}\E[\bar{\beta}\bar{\beta}_\sfc^0]\sqrt{\tW_{r\sfc}\tW_{(s-\sfR)\sfc}} & r\in[\sfR],\sfR+1\leq s\leq2\sfR, \\
        \frac{1}{\delta}\E[\bar{\beta}\bar{\beta}_\sfc^0]\sqrt{\tW_{(r-\sfR)\sfc}\tW_{s\sfc}} & \sfR+1\leq s\leq2\sfR,s\in[\sfR],\\
        \frac{1}{\delta}\E[(\bar{\beta}_\sfc^0)^2]\sqrt{\tW_{(r-\sfR)\sfc}\tW_{(s-\sfR)\sfc}} & \sfR+1\leq s,r\leq 2\sfR.
    \end{cases}
\end{align*}
We then have $\Pi^0=\frac{1}{\sfC}\sum_{\sfc=1}^\sfC\widehat{\Pi}^{0,\sfc}$.

We can then show that
\begin{align}
    \begin{bmatrix}
        \Pi_{\sfr\sfr}^k & \Pi_{\sfr(\sfr+\sfR)}^k \\
        \Pi_{(\sfr+\sfR)\sfr}^k & \Pi_{(\sfr+\sfR)(\sfr+\sfR)}
    \end{bmatrix}
    =
    \Sigma^{k,\sfr}
    \quad
    \text{and}
    \quad
    \Omega_{\sfc\sfc}^{k+1}
    =(\sigma_{\beta,\sfc}^{k+1})^2,
    \label{eq:reduction_assum1}
\end{align}
implying that $\big(\{\bE_\sfr^k\}_\sfr,\{\bE_{\sfr}^k\}_{\sfr+\sfR}\big)\stackrel{d}{=}\big(Z_\sfr,Z_\sfr^k\big)$ and $\{\bH_\sfc^k\}_\sfc\stackrel{d}{=}G_{\beta,\sfc}^k$. We can also show that for $k\geq0$,
\begin{align}
\begin{split}
    E_{i,\sfr}^k=\Theta_i^k, \quad
    E_{i,\sfr+\sfR}^k=\Theta_i, \quad
    &\text{for $i\in\mathcal{I}_\sfr,\,\sfr\in[\sfR]$,} \\
    H_{j,\sfc}^{k+1}+\mu_{\beta,\sfc}^{k+1}\beta_j=\beta_j^{k+1}, \quad
    &\text{for $j\in\mathcal{J}_\sfc,\,\sfc\in[\sfC]$.}
\end{split}
\label{eq:reduction_assum2}
\end{align}
Both \eqref{eq:reduction_assum1} and \eqref{eq:reduction_assum2} are shown using steps identical to those in \cite[Section 5.1.2]{Cob23}, so we omit repeating the proof for brevity. Theorem \ref{thm:SC_GAMP} follows by using \eqref{eq:reduction_assum1} and \eqref{eq:reduction_assum2} in Theorem \ref{thm:abs_matrix_AMP}.

\subsection{Proof of Theorem \ref{thm:abs_matrix_AMP} via Reduction to U-AMP} \label{sec:reduction_to_U-AMP}

Theorem \ref{thm:abs_matrix_AMP} is proved by reducing the abstract matrix-AMP recursion to the U-AMP recursion which is defined as follows. Given a generalized white noise matrix $\tX$, for $t\geq 1$, the iterates of U-AMP, denoted by $h^t\in\mb{R}^{p}$ and $e^t\in\mb{R}^n$, are produced using functions $f_t^v:\mb{R}^{t+L_d+1}\rightarrow\mb{R}$, $f_{t+1}^u:\mb{R}^{t+L_c+1}\rightarrow\mb{R}$.
Given an initializer $u^1\in\mb{R}^n$, side information vectors $c^1,\dots,c^{L_c}\in\mb{R}^n$ and $d^1,\dots,d^{L_d}\in\mb{R}^p$, all independent of $\tX$, the iterates of the U-AMP recursion are computed as:
\begin{align}
\begin{split}
    h^t&=\sqrt{\delta}\tX^\top u^t-\sum_{s=1}^{t-1}b_{s}^tv^s, \qquad 
    v^t=f_t^v(h^1,\dots,h^t,d^1,\dots,d^{L_d},\mathcal{C}), \\
    e^t&=\sqrt{\delta}\tX v^t-\sum_{s=1}^ta_{s}^tu^s, \qquad
    u^{t+1}=f_{t+1}^u(e^1,\dots,e^t,c^1,\dots,c^{L_c},\mathcal{R}),
\end{split} \label{eq:modified_U-AMP}
\end{align}
where $\mathcal{C},\mathcal{R}$ were defined in \eqref{eq:S_c_and_S_r_def}, and the functions $f_t^v$ and $f_{t+1}^u$ act row-wise. The coefficients $a^t_s$ and $b^t_s$ are defined later in \eqref{eq:as_bs_UAMP} in terms of state evolution parameters.

Recalling the notation simplification for sub-blocks of vectors presented in the paragraph below \eqref{eq:Qc_def},  we have the following assumptions:
\begin{itemize}
    \item[\textbf{(C1)}] As $n,p\rightarrow\infty$, we have $n/p=\delta>0$, for fixed $L_c$ and $L_d$. Furthermore, for all $\sfc\in[\sfC]$ and $\sfr\in[\sfR]$, 
    \begin{align*}
        (u_\sfr^1,c_\sfr^1,\dots,c_\sfr^{L_c})
        \stackrel{W}{\rightarrow}
        (\bu_\sfr^1,\bc_\sfr^1,\dots,\bc_\sfr^{L_c})
        \text{ and }
        (d_\sfc^1,\dots,d_\sfc^{L_d})
        \stackrel{W}{\rightarrow}
        (\bd_\sfc^1,\dots,\bd_\sfc^{L_d}),
    \end{align*}
    for joint limit laws $(\bu_\sfr^1,\bc_\sfr^1,\dots,\bc_\sfr^{L_c})$ and $(\bd_\sfc^1,\dots,\bd_\sfc^{L_d})$ having finite moments of all orders, where $\E[(\bu^1)^2]\geq 0$. Multivariate polynomials are dense in the real $L^2$-spaces of functions $f:\mb{R}^{L_c+1}\rightarrow\mb{R}$ and $g:\mb{R}^{L_d}\rightarrow\mb{R}$ with the inner products
    \begin{align*}
        \langle f,\tf\rangle
        :=\E[f(\bu_\sfr^1,\bc_\sfr^1,\dots,\bc_\sfr^{L_c})\tf(\bu_\sfr^1,\bc_\sfr^1,\dots,\bc_\sfr^{L_c})]
        \text{ and }
        \langle g,\tg\rangle
        :=\E[g(\bd_\sfc^1,\dots,\bd_\sfc^{L_d})\tg(\bd_\sfc^1,\dots,\bd_\sfc^{L_d})].
    \end{align*}
    \item[\textbf{(C2)}] Each function $f_t^v:\mb{R}^{t+L_d+1}\rightarrow\mb{R}$ and $f_{t+1}^u:\mb{R}^{t+L_c+1}\rightarrow\mb{R}$ is continuous, is Lipschitz in its first $t$ arguments, and satisfies the polynomial growth condition in \eqref{eq:poly_growth_cond} for some order $r\geq1$.
    \item[\textbf{(C3)}] $\|\tX\|_{\text{op}}<C$, for some constant $C$ almost surely for all sufficiently large $n$ and $p$.
    \item[\textbf{(C4)}] For any fixed polynomial functions $f^\dag:\mb{R}^{L_c+1}\rightarrow\mb{R}$ and $f^\ddag:\mb{R}^{L_d}\rightarrow\mb{R}$, almost surely as $n,p\rightarrow\infty$,
    \begin{align*}
        \max_{j\in[p]}
        \left|
        \langle f^\dag(u^1,c^1,\dots,c^{L_c})\odot S_{:,j}\rangle
        -
        \langle f^\dag(u^1,c^1,\dots,c^{L_c})\rangle
        \cdot
        \langle S_{:,j}\rangle
        \right|
        &\rightarrow0, \\
        \max_{i\in[n] }\left|
        \langle f^\ddag(d^1,\dots,d^{L_d})\odot S_{i,:}\rangle
        -
        \langle f^\ddag(d^1,\dots,d^{L_d})\rangle
        \cdot
        \langle S_{i,:}\rangle
        \right|
        &\rightarrow 0.
    \end{align*}
\end{itemize}

\paragraph{State evolution.} The state evolution result below states that the joint empirical distribution of $(h_\sfc^1,\dots,h_\sfc^t)$ converges to a Gaussian law $\normal(0,\Xi^t)$, for $\sfc \in [\sfC]$. Similarly, the joint empirical distribution of $(e_\sfr^1,\dots,e_\sfr^t)$ converges to $\normal(0,\Gamma^t)$, for $\sfr \in [\sfR]$. The covariance matrices $\Xi^t,\Gamma^t\in\mb{R}^{t\times t}$ are iteratively defined as follows, starting from $\Xi^1=\frac{\delta}{\sfR}\sum_{\sfr=1}^\sfR \E[(\bu_\sfr^1)^2]$. 
%where  $\widehat{\Xi}^{1,\sfr}=\delta\E[(\bu_\sfr^1)^2]\in\mb{R}^{1\times 1}$ for $\sfr\in[\sfR]$. 
Given $\Xi^t$, for $t\geq1$, let $(\bh^1,\dots,\bh^t)\sim\normal(0,\Xi^t)$ be independent of $(\bd_\sfc^1,\dots,\bd_\sfc^{L_d})$ and define
\begin{align}
  \bv_\sfc^s=f_s^v(\bh^1,\dots,\bh^s,\bd_\sfc^1,\dots,\bd_\sfc^{L_d}, \sfc),
\quad s\in[t],\, \sfc\in[\sfC].  
\label{eq:bar_vsc_def}
\end{align}
Then we have
\begin{align}
    \Gamma^t
    &=\frac{1}{\sfC}\sum_{\sfc=1}^\sfC\widehat{\Gamma}^{t,\sfc}
    \quad
    \text{where}
    \quad
    \widehat{\Gamma}^{t,\sfc}
    =\big(\E[\bv_\sfc^r\bv_\sfc^s]\big)_{r,s=1}^t.
    \label{eq:Gamma_def_U_AMP}
\end{align}
Next, let $(\be^1,\dots,\be^t)\sim\normal(0,\Gamma^t)$ be independent of $(\bu_\sfr^1,\bc_\sfr^1,\dots,\bc_\sfr^{L_c})$ and define
$$
\bu_\sfr^{s+1}
=f_{s+1}^u(\be^1,\dots,\be^s,\bc_\sfr^1,\dots,\bc_\sfr^{L_c}, \sfr),
\quad
s\in[t],\,
\sfr\in[\sfR].
$$
Then, we have
\begin{align}
    \Xi^{t+1}
    =\frac{1}{\sfR}\sum_{\sfr=1}^\sfR\widehat{\Xi}^{t+1,\sfr}
    \quad
    \text{where}
    \quad
    \widehat{\Xi}^{t+1,\sfr}
    =\delta\big(\E[\bu_\sfr^r\bu_\sfr^s]\big)_{r,s=1}^{t+1}.
    \label{eq:Xi_def_U_AMP}
\end{align}
We define the memory coefficients $a_s^t$ and $b_s^t$ in \eqref{eq:modified_U-AMP} as:
\begin{align}
    a_s^t
    &=\frac{1}{\sfC}\sum_{\sfc=1}^\sfC\E\left[\partial_sf_t^v(\bh^1,\dots,\bh^t,\bd_\sfc^1.\dots,\bd_\sfc^{L_d},\sfc)\right],
    \quad
    b_s^t
    =\frac{\delta}{\sfR}\sum_{\sfr=1}^\sfR\E\left[\partial_sf_t^u(\be^1,\dots,\be^{t-1},\bc_\sfr^1,\dots,\bc_\sfr^{L_c},\sfr)\right],
    \label{eq:as_bs_UAMP}
\end{align}
where $\partial_s$ denotes the partial derivative with respect to the $s$th argument. The following corollary gives the state evolution result for the U-AMP recursion.

\begin{corollary}[State evolution for U-AMP] \label{cor:U-AMP}
Let $\tX\in\mb{R}^{n\times p}$ be a generalized white noise matrix (as defined in Definition \ref{def:gen_white_noise_matrix}) with variance profile $S\in\mb{R}^{n\times p}$, and let $u_\sfr^1,c_\sfr^1,\dots,c_\sfr^{L_c},d_\sfc^1,\dots,d_\sfc^{L_d}$ be independent of $\tX$ and satisfy Assumptions (C1)--(C4). Further assume that each matrix $\Xi^t$ and $\Gamma^t$ is non-singular. Then for any fixed $t\geq1$, as $n,p\rightarrow\infty$, the iterates of the abstract AMP in \eqref{eq:modified_U-AMP} almost surely satisfy the following, for $\sfr\in[\sfR]$ and $\sfc\in[\sfC]$:
\begin{align*}
    & (u_\sfr^1,c_\sfr^1,\dots,c_\sfr^{L_c},e_\sfr^1,\dots,e_\sfr^t)
    \stackrel{W_2}{\rightarrow}(\bu_\sfr^1,\bc_\sfr^1,\dots,\bc_\sfr^{L_c},\be^1,\dots,\be^t), \\
    & (d_\sfc^1,\dots,d_\sfc^{L_d},h_\sfc^1,\dots,h_\sfc^t)
    \stackrel{W_2}{\rightarrow}
    (\bd_\sfc^1,\dots,\bd_\sfc^{L_d},\bh^1,\dots,\bh^t),
\end{align*}
where $(\bh^1,\dots,\bh^t)\sim\normal(0,\Xi^t)$ and $(\be^1,\dots,\be^t)\sim\normal(0,\Gamma^t)$ are independent of $(\bd_\sfc^1,\dots,\bd_\sfc^{L_d})$ and $(\bu_\sfr^1,\bc_\sfr^1.\dots,\bc_\sfr^{L_c})$ respectively.
\end{corollary}

Corollary \ref{cor:U-AMP} is obtained from the AMP universality result in  \cite[Theorem 2.17]{Wan22}.  A proof  is provided in Section \ref{sec:mod_of_U-AMP}. 

\paragraph{Proof of Theorem \ref{thm:abs_matrix_AMP} using Corollary \ref{cor:U-AMP}.}
We  reduce the abstract matrix-AMP iteration to the U-AMP. We set the initializer to be $u^1=0$, $L_c=1$, $L_d=l_E+1$ and the side information vectors to be
\begin{align}
c^1=\gamma,
\quad
d^1=\beta,
\quad
d^2=\hH_{:,1}^0,
\quad
\dots,
\quad
d^{l_E+1}=\hH_{:,l_E}^0.
\label{eq:initializer_reduction}
\end{align}
We show the reduction through induction.

\paragraph{Base case.} We consider the case $t=0$, and our goal is to reduce $\hH^0$, $E^0$, $H^1$, $\hR^0$, $\hH^1$, and $E^1$ to iterates of U-AMP defined via careful choices of the functions $f_t^v$ and $f_{t+1}^u$. We provide a summary of the reductions before giving their derivations.
\begin{itemize}
    \item For $t=1,\dots,l_E$: we have
    \begin{align}
        h^1,\dots,h^{l_E}=0, \quad
        (v^1,\dots,v^{l_E})=\frac{1}{\sqrt{\delta}}\hH^0, \quad
        (e^1,\dots,e^{l_E})=E^0, \quad
        u^1,\dots,u^{l_E}=0.
        \label{eq:abs_matrix_AMP_reduction1}
    \end{align}
    \item For $t=l_E+1,\dots,l_E+l_H$: we have
    \begin{align}
    \begin{split}
        &(h^{l_E+1},\dots,h^{l_E+l_H})=H^1, \quad
        v^{l_E+1},\dots,v^{l_E+l_H}=0, \\
        &e^{l_E+1},\dots,e^{l_E+l_H}=0, \quad
        (u^{l_E+1},\dots,u^{l_E+l_H})=\frac{1}{\sqrt{\delta}}\hR^0.
    \end{split}
    \label{eq:abs_matrix_AMP_reduction2}
    \end{align}
    \item For $t=l_E+l_H+1,\dots,2l_E+l_H$: we have
    \begin{align}
    \begin{split}
        h^{l_E+l_H+1},\dots,h^{2l_E+l_H}=0, \quad
        (v^{l_E+l_H+1},\dots,v^{2l_E+l_H})=\frac{1}{\sqrt{\delta}}\hH^1, \\
        (e^{l_E+l_H+1},\dots,e^{2l_E+l_H})=E^1, \quad
        u^{l_E+l_H+1},\dots,u^{2l_E+l_H}=0.
    \end{split}
    \label{eq:abs_matrix_AMP_reduction3}
    \end{align}
\end{itemize}

We now provide the derivations of \eqref{eq:abs_matrix_AMP_reduction1}--\eqref{eq:abs_matrix_AMP_reduction3}. For $t=1,\dots,l_E-1$, we set
\begin{align*}
    f_t^v(h^1,\dots,h^t,d^1,\dots,d^{l_E+1}, \mathcal{C})
    &=\frac{1}{\sqrt{\delta}}d^{t+1}, \quad
    f_{t+1}^u(e^1,\dots,e^t,c^1,\mathcal{R})
    =0.
\end{align*}
For $t=1$, we have our initializer $u^1=0$, and using \eqref{eq:initializer_reduction} gives
\begin{align*}
    h^1=0,
    \quad
    v^1=\frac{1}{\sqrt{\delta}}\hH_{:,1}^0,
    \quad
    e^1=E_{:,1}^0,
    \quad
    u^2=0.
\end{align*}
Following similar steps, for $t=2,\dots,l_E-1$, we have
\begin{align*}
    h^t=0,
    \quad
    v^t=\frac{1}{\sqrt{\delta}}\hH_{:,t}^0,
    \quad
    e^t=E_{:,t}^0,
    \quad
    u^{t+1}=0.
\end{align*}
For $t=l_E$, set
\begin{align}
    f_{l_E}^v(h^1,\dots,h^t,d^1,\dots,d^{l_E+1},\mathcal{C})
    &=\frac{1}{\sqrt{\delta}}d^{l_E+1},
    \quad
    f_{l_E+1}^u(\underbrace{e^1,\dots,e^{l_E}}_{=E^0},\underbrace{c^1}_{=\gamma},\mathcal{R})
    =\frac{1}{\sqrt{\delta}}\{\tg_0(E^0,\gamma,\mathcal{R})\}_{:,1},
    \label{eq:fuv_LE}
\end{align}
which gives
\begin{align*}
    h^{l_E}=0,
    \quad
    v^{l_E}=\frac{1}{\sqrt{\delta}}\hH_{:,l_E}^0,
    \quad
    e^{l_E}=E_{:,l_E}^0,
    \quad
    u^{l_E+1}=\frac{1}{\sqrt{\delta}}\hR_{:,1}^0.
\end{align*}
This completes the derivation of \eqref{eq:abs_matrix_AMP_reduction1}. 
For $t=l_E+1,\dots,l_E+l_H-1$, we set
\begin{align}
    f_t^v(h^1,\dots,h^t,d^1,\dots,d^{l_E+1},\mathcal{C})
    &=0,
    \quad
    f_{t+1}^u(e^1,\dots,e^t,c^1,\mathcal{R})
    =\frac{1}{\sqrt{\delta}}\{\tg_0(E^0,\gamma,\mathcal{R})\}_{:,t+1-l_E}.
    \label{eq:fuv_LE_next}
\end{align}
For $t=l_E+1$, we have the following identity:
\begin{align}
    \{\hH^0(\sfD^0)^\top\}_{:,1}
    &=\frac{1}{\sfR}\sum_{\sfr=1}^\sfR\left\{\hH^0\Big(\E[\tg_0'(\bar{E}_\sfr^0,\bar{\gamma},\sfr)]\Big)^\top\right\}_{:,1}
    =\frac{1}{\sfR}\sum_{\sfr=1}^\sfR\hH^0\cdot
    \begin{bmatrix}
        \E[\partial_{1}\tg_{0,1}(\bar{E}_\sfr^0,\bar{\gamma},\sfr)] \\
        \vdots \\
        \E[\partial_{l_E}\tg_{0,1}(\bar{E}_\sfr^0,\bar{\gamma},\sfr)]
    \end{bmatrix} \nonumber \\
    &=\frac{1}{\sfR}\sum_{\sfr=1}^\sfR\sum_{s=1}^{l_E}\hH_{:,s}^0\E[\partial_s\tg_{0,1}(\bar{E}_\sfr^0,\bar{\gamma},\sfr)].
    \label{eq:abs_matrix_AMP_identity1}
\end{align}
Then, we have
\begin{align*}
    h^{l_E+1}
    &\stackrel{(a)}{=}\sqrt{\delta}\tX^\top u^{l_E+1}-\sum_{s=1}^{l_E}b_s^{l_E+1}v^s
    \stackrel{(b)}{=}\tX^\top\hR_{:,1}^0-\frac{1}{\sfR}\sum_{\sfr=1}^\sfR\sum_{s=1}^{l_E}\E[\partial_s\tg_{0,1}(\bar{E}_\sfr^0,\bar{\gamma},\sfr)]\hH_{:,s}^0 \\
    &\stackrel{(c)}{=}\tX^\top\hR_{:,1}^0-\{\hH^0(\sfD^0)^\top\}_{:,1}
    =H_{:,1}^1,
\end{align*}
where we use \eqref{eq:modified_U-AMP} in (a), substitute the definitions of $u^{l_E+1}$, $b_s^{l_E+1}$, and $v^s$ in (b), and apply \eqref{eq:abs_matrix_AMP_identity1} in (c). Next, we have
\begin{align*}
    v^{l_E+1}=0,
    \quad
    e^{l_E+1}=0,
    \quad
    u^{l_E+2}=\frac{1}{\sqrt{\delta}}\hR_{:,2}^0.
\end{align*}
Similarly, for $t=l_E+2,\dots,l_E+l_H-1$, we have
\begin{align*}
    h^t=H_{:,t-l_E}^1, 
    \quad
    v^t=0,
    \quad
    e^t=0,
    \quad
    u^{t+1}=\frac{1}{\sqrt{\delta}}\hR_{:,t+1-l_E}^0.
\end{align*}
For $t=l_E+l_H$, we set $f_{t}^v=0$ and $f_{t+1}^u=0$, so that
\begin{align*}
    h^{l_E+l_H}=H_{:,l_H}^1, \quad
    v^{l_E+l_H}=0, \quad
    e^{l_E+l_H}=0, \quad
    u^{l_E+l_H+1}=0.
\end{align*}
This completes the derivation for \eqref{eq:abs_matrix_AMP_reduction2}.

For $t=l_E+l_H+1,\dots,2l_E+l_H-1$, we set
\begin{align}
    & f_t^v(h^1,\dots,h^{l_E},\underbrace{h^{l_E+1},\dots,h^{l_E+l_H}}_{=H^1},h^{l_E+l_H+1},\dots,h^t,\underbrace{d^1}_{=\beta},d^2,\dots,d^{l_E+1}, \mathcal{C}) \label{eq:fv_lElH1}
    =\frac{1}{\sqrt{\delta}}\{\tf_1(H^1,\beta,\mathcal{C})\}_{:,1}, \\
    & f_{t+1}^u(e^1,\dots,e^t,c^1,\mathcal{R})
    =0. \nonumber
\end{align}
For $t=l_E+l_H+1$, we have
\begin{align*}
    h^{l_E+l_H+1}=0, \quad
    v^{l_E+l_H+1}=\frac{1}{\sqrt{\delta}}\hH_{:,1}^1.
\end{align*}
We pause to show an identity:
\begin{align}
    \big\{\hR^0(\sfB^1)^\top\big\}_{:,1}
    &=\frac{1}{\delta\sfC}\sum_{\sfc=1}^\sfC\left\{\hR^0\E[\tf_{1}'(\bH_{\sfc}^{1},\bar{\beta},\sfc)]^\top\right\}_{:,1}
    =\frac{1}{\delta\sfC}\sum_{\sfc=1}^\sfC\hR^0
    \begin{bmatrix}
        \E[\partial_1\tf_{1,1}(\bH_{\sfc}^{1},\bar{\beta},\sfc)] \\
        \vdots \\
        \E[\partial_L\tf_{1,1}(\bH_{\sfc}^{1},\bar{\beta},\sfc)]
    \end{bmatrix} \nonumber \\
    &=\frac{1}{\delta\sfC}\sum_{\sfc=1}^\sfC\sum_{s=1}^{l_H}\hR_{:,s}^0\E[\partial_s\tf_{1,1}(\bH_{\sfc}^{1},\bar{\beta},\sfc)].
    \label{eq:abs_matrix_AMP_identity2}
\end{align}
Then, from \eqref{eq:modified_U-AMP}, we have
\begin{align*}
    e^{l_E+l_H+1}
    &=\sqrt{\delta}\tX v^{l_E+l_H+1}-\sum_{s=1}^{l_E+l_H+1}a_s^{l_E+l_H+1}u^s
    \stackrel{(a)}{=}\tX\hH_{:,1}^1-\frac{1}{\delta \sfC}\sum_{\sfc=1}^\sfC\sum_{s=1}^{l_H}\hR_{:,s}^0\E\big[\partial_s\tf_{1,1}(\hH_{\sfc,:}^1,\bar{\beta},\sfc)\big] \\
    &\stackrel{(b)}{=}\tX\hH_{:,1}^1-\left\{\hR^0(B^1)^\top\right\}_{:,1}
    =E_{:,1}^1,
\end{align*}
where (a) uses the definitions of $v^{l_E+l_H+1}$, $a_s^{l_E+l_H+1}$ and $u^s$, and (b) uses \eqref{eq:abs_matrix_AMP_identity2}. Next, we have $u^{l_E+l_H+1}=0$. Similarly, for $t=l_E+l_H+2,\dots,2l_E+l_H-1$, we have
\begin{align*}
    h^t=0, \quad
    v^t=\frac{1}{\sqrt{\delta}}\hH_{:,t-l_E-l_H}^1, \quad
    e^t=E_{:,t-l_E-l_H}^1, \quad
    u^{t+1}=0.
\end{align*}
For $t=2l_E+l_H$, we set
\begin{align}
    & f_{2l_E+l_H}^v(h^1,\dots,h^t,d^1,\dots,d^{l_E+1}, \mathcal{C})
    =\frac{1}{\sqrt{\delta}}\{\tf_1(H^1,\beta,\mathcal{C})\}_{:,l_E} ,
     \label{eq:fv_2lElH} \\
   &  f_{2l_E+l_H+1}^u(e^1,\dots,e^{l_E+l_H},\underbrace{e^{l_E+l_H+1},\dots,e^{2l_E+l_H}}_{=E^1},\underbrace{c^1}_{=\gamma},\mathcal{R})
    =\frac{1}{\sqrt{\delta}}\big\{\tg_1(E^1,\gamma,\mathcal{R})\big\}_{:,1}.
\nonumber 
\end{align}
This gives
\begin{align*}
    h^{2l_E+l_H}=0, \quad
    v^{2l_E+l_H}=\frac{1}{\sqrt{\delta}}\hH_{:,l_E}^1, \quad
    e^{2l_E+l_H}=E_{:,l_E}^1, \quad
    u^{2l_E+l_H+1}=\frac{1}{\sqrt{\delta}}\hR_{:,1}^1,
\end{align*}
completing the derivation of \eqref{eq:abs_matrix_AMP_reduction3}. This concludes the reduction of the abstract matrix-AMP iterates to the U-AMP iterates for $t=0$.

We now show the convergence statements in Theorem \ref{thm:abs_matrix_AMP} for $E^0, H^1$, and $E^1$ by reducing the abstract matrix-AMP SE parameters to the corresponding U-AMP SE parameters.

\paragraph{Convergence of $(E^0, \gamma)$.}  
From Assumption (C1) and Corollary \ref{cor:U-AMP}, we have  $(d_\sfc^1,\dots,d_\sfc^{L_d}, \,  h_\sfc^1,\dots,$ $h_\sfc^{l_E})\stackrel{W}{\rightarrow}(\bar{d}_\sfc^1,\dots,\bar{d}_\sfc^{L_d}, \bh^1,\dots,\bh^{l_E})$, for $\sfc \in [\sfC]$. Recalling that 
$$v_\sfc^s=f_s^v(h^1,\dots,h^s,d_\sfc^1,\dots,d_\sfc^{L_d}, \sfc), \quad \bv_\sfc^s=f_s^v(\bh^1,\dots,\bh^s,\bar{d}_\sfc^1,\dots,\bar{d}_\sfc^{L_d}, \sfc), \quad  \text{ for } s \ge 1,$$  the convergence above implies that $(v^1_\sfc,\dots,v^{l_E}_\sfc) \stackrel{W}{\rightarrow} (\bv^1_\sfc,\dots,\bv^{l_E}_\sfc)$. Since we have shown in \eqref{eq:abs_matrix_AMP_reduction1} that $(v^1,\dots,v^{l_E})=\frac{1}{\sqrt{\delta}}\hH^0$,  we must have $(\bv_\sfc^1,\dots,\bv_\sfc^{l_E})=\frac{1}{\sqrt{\delta}}\bH_{\sfc}^0$, where the latter is given by Assumption (B2). 

For $\sfr \in [\sfR]$, Corollary \ref{cor:U-AMP} implies that $(e_{\sfr}^1,\dots,e_{\sfr}^{l_E})\stackrel{W_2}{\rightarrow}(\be^1,\dots,\be^{l_E})\sim\normal(0,\Gamma^{l_E})$, where
\begin{equation}
    \Gamma^{l_E}
=\frac{1}{\sfC}\sum_{\sfc=1}^\sfC\widehat{\Gamma}^{l_E,\sfc},
\quad
\widehat{\Gamma}^{l_E,\sfc}
=\big(\E[\bv_\sfc^r\bv_\sfc^s]\big)_{r,s=1}^{l_E}.
\end{equation}
Using $(\bv_\sfc^1,\dots,\bv_\sfc^{l_E})=\frac{1}{\sqrt{\delta}}\bH_{\sfc}^0$ in  \eqref{eq:Pi0c_def}, we have that $\widehat{\Pi}^{0,\sfc} = \widehat{\Gamma}^{l_E,\sfc}$ and $\Pi^{0}=\Gamma^{l_E}$. Since $(e_\sfr^1,\dots,e_\sfr^{l_E})=E_{\mathcal{I}_\sfr,:}^0$ from \eqref{eq:abs_matrix_AMP_reduction1}, and Corollary \ref{cor:U-AMP} guarantees that $(e_\sfr^1,\dots,e_\sfr^{l_E})\stackrel{W_2}{\rightarrow}(\be^1,\dots,\be^{l_E}) \sim\normal(0,\Gamma^{l_E})$, we have $E_{\mathcal{I}_\sfr,:}^0
\stackrel{W_2}{\rightarrow}\bE_{\sfr}^0\sim\normal(0,\Pi^0)$. Since $c^1=\gamma$, and $\gamma$ is independent of $E^0$, we can  use Assumption (B2) and apply Corollary \ref{cor:U-AMP} to $ (E_{\mathcal{I}_\sfr,:}^0,\gamma_\sfr)$ to obtain
\begin{align}
    (E_{\mathcal{I}_\sfr,:}^0,\gamma_\sfr) \stackrel{W_2}{\rightarrow}(\bE_{\sfr}^0,\bar{\gamma}).
    \label{eq:E^0_convergence}
\end{align}

\paragraph{Convergence of $(H^1, \beta)$.}
For $\sfr \in [\sfR]$, recall that  $u^{s+1}_\sfr=f_{s+1}^u(e_\sfr^1,\dots,e_\sfr^s,c_\sfr^1,\dots,c_\sfr^{L_c},\sfr)$ and 
$\bu^{s+1}_\sfr=f_{s+1}^u(\be^1,\dots,\be^s,\bc_\sfr^1,\dots,\bc_\sfr^{L_c},\sfr)$.
Corollary \ref{cor:U-AMP} implies that $(u_\sfr^{1},\dots,u_\sfr ^{l_E+l_H})
\stackrel{W_2}{\rightarrow} (\bu_\sfr^{1},\dots,\bu_\sfr^{l_E+l_H})$.
Moreover, we have shown in \eqref{eq:abs_matrix_AMP_reduction2} that $(u^{l_E+1},\dots,u^{l_E+l_H})=\frac{1}{\sqrt{\delta}}\hR^0=\frac{1}{\sqrt{\delta}}\tg_0(E^0,\gamma,\mathcal{R})$. Hence, using \eqref{eq:E^0_convergence} and noting that $\tg_0$ satisfies the polynomial growth condition in \eqref{eq:poly_growth_cond}, we  have $(\bu_\sfr^{l_E+1},\dots,\bu_\sfr^{l_E+l_H})=\frac{1}{\sqrt{\delta}}\tg_0(\bE_{\sfr}^0,\bar{\gamma},\sfr)$, for $\sfr \in [\sfR]$. 

Corollary \ref{cor:U-AMP} states that $(h_{\sfc}^1,\dots,h_{\sfc}^{2l_E+l_H})\stackrel{W_2}{\rightarrow}(\bh^1,\dots,\bh^{2l_E+l_H})\sim\normal(0,\Xi^{2l_E+l_H})$, where
\begin{align}
\Xi^{2l_E+l_H}
=\frac{1}{\sfR}\sum_{\sfr=1}^\sfR\widehat{\Xi}^{2l_E+l_H,\sfr},
\quad
\widehat{\Xi}^{2l_E+l_H}
=\delta\big(\E[\bu_\sfr^r\bu_\sfr^s]\big)_{r,s=1}^{2l_E+l_H}.
\label{eq:Xi_2l_E+l_H}
\end{align}
Then recalling the definition of  $\Omega^1$ from \eqref{eq:abs_matrix_AMP_SE}, and the functions $f^u_{l_E+1},\ldots, f^u_{l_E+l_H}$ from \eqref{eq:fuv_LE}-\eqref{eq:fuv_LE_next}, we have
\begin{align}
    \widehat{\Xi}_{[l_E+1:l_E+l_H],[l_E+1:l_E+l_H]}^{2l_E+l_H,\sfr}
    =\widehat{\Omega}^{1,\sfr}
    \implies
    \Xi_{[l_E+1:l_E+l_H],[l_E+1:l_E+l_H]}^{2l_E+l_H}
    =\Omega^1.
    \label{eq:Xi_eq_Omega}
\end{align}
We have shown that $(h^{l_E+1},\dots,h^{l_E+l_H})=H^1$ (see \eqref{eq:abs_matrix_AMP_reduction2}), and Corollary \ref{cor:U-AMP} states that
$$
(d_\sfc^1, h_\sfc^{l_E+1},\dots,h_\sfc^{l_E+l_H})\stackrel{W_2}{\rightarrow}(\bd^1_\sfc, \bh^{l_E+1},\dots,\bh^{l_E+l_H}),
$$
where $\bd_\sfc^1$ 
and $(\bh^{l_E+1},\dots,\bh^{l_E+l_H}) \sim \normal\big(0,\Xi^{l_E+l_H}_{[l_E+1:l_E+l_H],[l_E+1:l_E+l_H]}\big)$ are independent.
Thus, recalling that $d_\sfc^1= \beta_\sfc$, by the equivalence of  the covariance matrices in \eqref{eq:Xi_eq_Omega}, we have
\begin{align}
(H_{\mathcal{J}_\sfc,:}^1,\beta_\sfc)\stackrel{W_2}{\rightarrow}(\bH_{\sfc}^1,\bar{\beta}), \quad \text{where }   \bar{\beta} \text{ is independent of } \bH_{\sfc}^1 \sim  \normal(0,\Omega^1).
\label{eq:H^1_convergence}
\end{align}

\paragraph{Convergence of $(E^1, \gamma)$.} Recall that $v_\sfc^s=f^v_s(h_\sfc^1,\dots,h_\sfc^t,d_\sfc^1,\dots,d_\sfc^{L_d},\sfc)$ and $\bv_\sfc^s = f^v_s(\bh^1,\dots,\bh^t,\bd_\sfc^1,$ $\dots,\bd_\sfc^{L_d},\sfc)$, for $\sfc \in [\sfC]$. Corollary \ref{cor:U-AMP} implies that $(v_\sfc^1, \ldots, v_\sfc^{2l_E+l_H}) \stackrel{W_2}{\rightarrow} (\bv_\sfc^1, \ldots, \bv_\sfc^{2l_E+l_H})$. Moreover, we have shown in \eqref{eq:abs_matrix_AMP_reduction3} that $(v^{l_E+l_H+1},\dots,v^{2l_E+l_H})=\frac{1}{\sqrt{\delta}}\hH^1=\frac{1}{\sqrt{\delta}}\tf_{1}(H^1,\beta,\mathcal{C})$. Thus, using \eqref{eq:H^1_convergence} and noting that $\tf_1$ satisfying the polynomial growth condition in \eqref{eq:poly_growth_cond}, we have $(\bv_\sfc^{l_E+l_H+1},\dots,\bv_\sfc^{2l_E+l_H})$ $=\frac{1}{\sqrt{\delta}}\tf_1(\bar{H}_\sfc^1,\bar{\beta},\sfc)$, for $\sfc \in [\sfC]$.

Corollary \ref{cor:U-AMP} states that $(e_{\sfr}^1,\dots,e_{\sfr}^{2l_E+l_H})\stackrel{W_2}{\rightarrow}(\be^1,\dots,\be^{2l_E+l_H})\sim\normal(0,\Gamma^{2l_E+l_H})$, where
\begin{align}
    \Gamma^{2l_E+l_H}
    =\frac{1}{\sfC}\sum_{\sfc=1}^\sfC\widehat{\Gamma}^{2l_E+l_H,\sfc},
    \quad
    \widehat{\Gamma}^{2l_E+l_H,\sfc}
    =\big(\E[\bv_\sfc^r\bv_\sfc^s]\big)_{r,s=1}^{2l_E+l_H}.
    \label{eq:Gamma_2l_E+l_H}
\end{align}
 Then, recalling the definition of $\Pi^1$ from \eqref{eq:abs_matrix_AMP_SE}, and the functions $f^v_{l_E+l_H+1}, \ldots, f^v_{2l_E+l_H}$ from \eqref{eq:fv_lElH1} and \eqref{eq:fv_2lElH}, we have
\begin{align}
    \widehat{\Gamma}_{[l_E+l_H+1:2l_E+l_H],[l_E+l_H+1:2l_E+l_H]}^{2l_E+l_H,\sfc}
    =\widehat{\Pi}^{1,\sfc}
    \implies
    \Gamma_{[l_E+l_H+1:2l_E+l_H],[l_E+l_H+1:2l_E+l_H]}^{2l_E+l_H}
    =\Pi^1.
    \label{eq:Gamma_eqv_Pi}
\end{align}
From \eqref{eq:abs_matrix_AMP_reduction3} $(e^{l_E+l_H+1},\dots,e^{2l_E+l_H})=E^1$, while Corollary \ref{cor:U-AMP} states that
$$
(c^1_\sfr, e_\sfr^{l_E+l_H+1},\dots,e_\sfr^{2l_E+l_H})
\stackrel{W_2}{\rightarrow}
(\bc^1_\sfr, \be^{l_E+l_H+1},\dots,\be^{2l_E+l_H}),
$$
where $\bc^1_\sfr$ and $(\be^{l_E+l_H+1},\dots,\be^{2l_E+l_H}) \sim
\normal\big(0,\Gamma_{[l_E+l_H+1:2l_E+l_H],[l_E+l_H+1:2l_E+l_H]}^{2l_E+l_H}\big)$ are independent. 
Therefore, using $c^1_\sfr = \gamma_\sfr$ and \eqref{eq:Gamma_eqv_Pi}, we have $(E_{\mathcal{I}_\sfr,:}^1,\gamma_\sfr)\stackrel{W_2}{\rightarrow}(\bE_{\sfr}^1,\bar{\gamma})$, where $\bE_{\sfr}^1 \sim \normal(0, \Pi^1)$ is independent of $\bar{\gamma}$.

\paragraph{Inductive hypothesis.} For $k \ge 1$, assume that we can reduce $H^k$, $\hR^{k-1}$, $\hH^k$, and $E^k$ to U-AMP in iterations $t=l_Ek+l_H(k-1)+1,\dots,l_E(k+1)+l_Hk$. Mathematically, this means the following:
\begin{itemize}
    \item For $t=l_Ek+l_H(k-1)+1,\dots,l_Ek+kl_H$: we have
    \begin{align}
    \begin{split}
        (h^{l_Ek+l_H(k-1)+1},\dots,h^{l_Ek+kl_H})=H^k, \quad
        v^{l_Ek+l_H(k-1)+1},\dots,v^{l_Ek+kl_H}=0, \\
        e^{l_Ek+l_H(k-1)+1},\dots,e^{l_Ek+kl_H}=0, \quad
        (u^{l_Ek+l_H(k-1)+1},\dots,u^{l_Ek+kl_H})=\frac{1}{\sqrt{\delta}}\hR^{k-1}.
    \end{split}
    \label{eq:abs_matrix_AMP_reduction4}
    \end{align}
    \item For $t=l_Ek+l_Hk+1,\dots,l_E(k+1)+l_Hk$: we have
    \begin{align}
    \begin{split}
        h^{l_Ek+l_Hk+1},\dots,h^{l_E(k+1)+l_Hk}=0, \quad
        (v^{l_Ek+l_Hk+1},\dots,v^{l_E(k+1)+l_Hk})=\frac{1}{\sqrt{\delta}}\hH^k, \\
        (e^{l_Ek+l_Hk+1},\dots,e^{l_E(k+1)+l_Hk})=E^k, \quad
        u^{l_Ek+l_Hk+1},\dots,u^{l_E(k+1)+l_Hk}=0.
    \end{split}
    \label{eq:abs_matrix_AMP_reduction5}
    \end{align}
\end{itemize}
Defining the index sets
$$
\mathcal{I}_k=[l_Ek+l_Hk+1:l_E(k+1)+l_Hk],
\quad
\mathcal{J}_k=[l_Ek+l_H(k-1)+1:l_Ek+l_Hk],
$$
we also assume that the following convergence statements hold, for $1 \le s \le k$:
\begin{align}
\begin{split}
      & (H_{\mathcal{J}_\sfc,:}^s,\beta_\sfc)\stackrel{W_2}{\rightarrow}(\bH_{\sfc}^s,\bar{\beta}), \quad  \bar{\beta} \text{ independent of } \bH_{\sfc}^s\sim\normal(0,\Omega^s), \\
  & (E_{\mathcal{I}_\sfr,:}^s,\gamma_\sfr)\stackrel{W_2}{\rightarrow}(\bE_{\sfr}^s,\bar{\gamma}),  \quad \bar{\gamma} \text{ independent of } \bE_{\sfr}^k\sim\normal(0,\Pi^s).
\end{split}
\end{align}

\paragraph{Inductive step.} We  need to show that we can reduce $H^{k+1}$, $\hR^{k}$, $\hH^{k+1}$, and $E^{k+1}$ to U-AMP in iterations $t=l_E(k+1)+l_Hk+1,\dots,l_E(k+2)+l_H(k+1)$, and that the corresponding convergence statements  hold. The choices for the functions $(f^v_t, f^u_{t+1})$ are analogous to those in  \eqref{eq:fuv_LE_next}, \eqref{eq:fv_lElH1},  and the steps for the reduction are very similar to the base case for $t\in\{l_E+1,\dots,2L_E+l_H\}$, and are omitted for brevity. We provide the summary of the reductions below:
\begin{itemize}
    \item For $t=l_E(k+1)+l_Hk+1,\dots,l_E(k+1)+l_H(k+1)$: we have
    \begin{align}
    \begin{split}
        \big(h^{l_E(k+1)+l_Hk+1},\dots,h^{l_E(k+1)+l_H(k+1)}\big)=H^{k+1}, \quad
        v^{l_E(k+1)+l_Hk+1},\dots,v^{l_E(k+1)+l_H(k+1)}=0, \\
        e^{l_E(k+1)+l_Hk+1},\dots,e^{l_E(k+1)+l_H(k+1)}=0, \quad
        \big(u^{l_E(k+1)+l_Hk+1},\dots,u^{l_E(k+1)+l_H(k+1)}\big)=\frac{1}{\sqrt{\delta}}\hR^{k}.
    \end{split}
    \label{eq:abs_matrix_AMP_reduction6}
    \end{align}
    \item For $t=l_E(k+1)+l_H(k+1)+1,\dots,l_E(k+2)+l_H(k+1)$: we have
    \begin{align}
    \begin{split}
        h^{l_E(k+1)+l_H(k+1)+1},\dots,h^{l_E(k+2)+l_H(k+1)}=0,\\
        \big(v^{l_E(k+1)+l_H(k+1)+1},\dots,v^{l_E(k+2)+l_H(k+1)}\big)=\frac{1}{\sqrt{\delta}}\hH^{k+1}, \\
        \big(e^{l_E(k+1)+l_H(k+1)+1},\dots,e^{l_E(k+2)+l_H(k+1)}\big)=E^{k+1},\\
        u^{l_E(k+1)+l_H(k+1)+1},\dots,u^{l_E(k+2)+l_H(k+1)}=0.
    \end{split}
    \label{eq:abs_matrix_AMP_reduction7}
    \end{align}
\end{itemize}

We now show the convergence statements in Theorem \ref{thm:abs_matrix_AMP} for $H^{k+1}$ and $E^{k+1}$ by reducing the abstract matrix-AMP SE parameters to the corresponding U-AMP SE parameters. Define the index sets
\begin{align*}
\mathcal{I}_{k+1}=[l_E(k+1)+l_H(k+1)+1:l_E(k+2)+l_H(k+1)],\\
\mathcal{J}_{k+1}=[l_E(k+1)+l_Hk+1:l_E(k+1)+l_H(k+1)].
\end{align*}

\paragraph{Convergence of $(H^{k+1}, \beta)$. }
From \eqref{eq:abs_matrix_AMP_reduction6} we have $(u_\sfr^l)_{l\in\mathcal{J}_{k+1}}=\frac{1}{\sqrt{\delta}}\hR_{\mathcal{I}_\sfr,:}^k=\frac{1}{\sqrt{\delta}}\tg_k(E_{\mathcal{I}_\sfr,:}^k,\gamma_\sfr,\sfr)$, and 
by the inductive hypothesis,  $(E_{\mathcal{I}_\sfr,:}^k,\gamma_\sfr)\stackrel{W_2}{\rightarrow}(\bE_{\sfr}^k,\bar{\gamma})$. Since $\tg_{k}$ satisfies the polynomial growth condition in \eqref{eq:poly_growth_cond},  Corollary \ref{cor:U-AMP} implies that 
$(u_\sfr^l)_{l\in\mathcal{J}_{k+1}} \stackrel{W_2}{\rightarrow} (\bu_\sfr^l)_{l\in\mathcal{J}_{k+1}} = \frac{1}{\sqrt{\delta}}\tg_{k}(\bE_{\sfr}^{k},\bar{\gamma},\sfr)$.

Corollary \ref{cor:U-AMP} states that $(h_{\sfc}^l)_{l\in\mathcal{J}_{k+1}}\stackrel{W_2}{\rightarrow}(\bh^l)_{l\in\mathcal{J}_{k+1}}\sim\normal(0,\Xi_{\mathcal{J}_{k+1},\mathcal{J}_{k+1}}^{l_E(k+2)+l_H(k+1)})$, where
\begin{align*}
\Xi_{\mathcal{J}_{k+1},\mathcal{J}_{k+1}}^{l_E(k+2)+l_H(k+1)}
=\frac{1}{\sfR}\sum_{\sfr=1}^\sfR\widehat{\Xi}_{\mathcal{J}_{k+1},\mathcal{J}_{k+1}}^{l_E(k+2)+l_H(k+1),\sfr},
\quad
\widehat{\Xi}^{l_E(k+2)+l_H(k+1),\sfr}
=\delta\big(\E[\bu_\sfr^r\bu_\sfr^s]\big)_{r,s\in\mathcal{J}_{k+1}}.
\end{align*}
Then, comparing the above with the definition of $\Omega^{k+1}$in \eqref{eq:abs_matrix_AMP_SE}, we get
\begin{align}
    \widehat{\Xi}_{\mathcal{J}_{k+1},\mathcal{J}_{k+1}}^{l_E(k+2)+l_H(k+1),\sfr}
    =\widehat{\Omega}^{k+1,\sfr}
    \implies
    \Xi_{\mathcal{J}_{k+1},\mathcal{J}_{k+1}}^{l_E(k+2)+l_H(k+1)}
    =\Omega^{k+1}.
    \label{eq:Xi_k+1_eq_Omega_k+1}
\end{align}
From \eqref{eq:abs_matrix_AMP_reduction6}, we have  $(h^{l})_{l\in\mathcal{J}_{k+1}}=H^{k+1}$, and Corollary \ref{cor:U-AMP} states that
$
\left(d^1_\sfc,  (h_\sfc^{l})_{l\in\mathcal{J}_{k+1}} \right) 
\stackrel{W_2}{\rightarrow}
\left( \bd^1_\sfc, \right.$ $\left.(\bh^{l})_{l\in\mathcal{J}_{k+1}} \right)
%\sim \normal(0,\Omega^{k+1}),
$, where $\bd^1_\sfc$ and $(\bh^{l})_{l\in\mathcal{J}_{k+1}}$ are independent.
Thus, by the equivalence of covariance matrices in \eqref{eq:Xi_k+1_eq_Omega_k+1} and recalling $d^1_\sfc= \beta_\sfc$, we have $(H_{\mathcal{J}_\sfc,:}^{k+1},\beta_\sfc)\stackrel{W_2}{\rightarrow}(\bH_{\sfc}^{k+1},\bar{\beta})$, where $\bar{\beta}$ is independent of $\bH_{\sfc}^{k+1} \sim \normal(0, \Omega^{k+1})$.
% $$
% H_{\mathcal{J}_\sfc,:}^{k+1}
% \stackrel{W_2}{\rightarrow}\bH_{\sfc}^{k+1}\sim\normal(0,\Omega^{k+1}).
% $$
% Since we have $d^1=\beta$, by applying Corollary \ref{cor:U-AMP} to this collection of vectors we obtain $(H_{\mathcal{J}_\sfc,:}^{k+1},\beta_\sfc)\stackrel{W_2}{\rightarrow}(\bH_{\sfc}^{k+1},\bar{\beta})$.

\paragraph{Convergence of $(E^{k+1}, \gamma)$. } From \eqref{eq:abs_matrix_AMP_reduction7}, we have  $(v_\sfc^l)_{l\in\mathcal{I}_{k+1}}=\frac{1}{\sqrt{\delta}}\hH_{\mathcal{J}_\sfc,:}^{k+1}=\frac{1}{\sqrt{\delta}}\tf_{k+1}(H_{\mathcal{J}_\sfc,:}^{k+1},\beta_\sfc,\sfc)$, and we have shown that $(H_{\mathcal{J}_\sfc,:}^{k+1},\beta_\sfc)\stackrel{W_2}{\rightarrow}(\bar{H}_\sfc^{k+1},\bar{\beta})$.  Since $\tf_{k+1}$ satisfies the polynomial growth condition, Corollary \ref{cor:U-AMP} implies that $(v_\sfc^l)_{l\in\mathcal{I}_{k+1}} \stackrel{W_2}{\rightarrow} (\bv_\sfc^l)_{l\in\mathcal{I}_{k+1}} = 
\frac{1}{\sqrt{\delta}}\tf_{k+1}(\bar{H}^{k+1}, \bar{\beta},\sfc)$.

Corollary \ref{cor:U-AMP} states that $(e_{\sfr}^l)_{l\in\mathcal{I}_{k+1}}\stackrel{W_2}{\rightarrow}(\be^l)_{l\in\mathcal{I}_{k+1}}\sim\normal(0,\Gamma_{\mathcal{I}_{k+1},\mathcal{I}_{k+1}}^{l_E(k+2)+l_H(k+1)})$, where
$$
\Gamma_{\mathcal{I}_{k+1},\mathcal{I}_{k+1}}^{l_E(k+2)+l_H(k+1)}
=\frac{1}{\sfC}\sum_{\sfc=1}^\sfC\widehat{\Gamma}_{\mathcal{I}_{k+1},\mathcal{I}_{k+1}}^{l_E(k+2)+l_H(k+1),\sfc},
\quad
\widehat{\Gamma}_{\mathcal{I}_{k+1},\mathcal{I}_{k+1}}^{l_E(k+2)+l_H(k+1),\sfc}
=\big(\E[\bv_\sfc^r\bv_\sfc^s]\big)_{r,s\in\mathcal{I}_{k+1}}.
$$
 Then, comparing the above  with the definition of $\Pi^{k+1}$ in \eqref{eq:abs_matrix_AMP_SE}, we get
\begin{align}
    \widehat{\Gamma}_{\mathcal{I}_{k+1},\mathcal{I}_{k+1}}^{l_E(k+2)+l_H(k+1),\sfc}
    =\widehat{\Pi}^{k+1,\sfc}
    \implies
    \Gamma_{\mathcal{I}_{k+1},\mathcal{I}_{k+1}}^{l_E(k+2)+l_H(k+1)}
    =\Pi^{k+1}
    \label{eq:Gamma_k+1_eqv_Pi_k+1}
\end{align}
From \eqref{eq:abs_matrix_AMP_reduction7}, we have $(e^l)_{l\in\mathcal{I}_{k+1}}=E^{k+1}$  and Corollary \ref{cor:U-AMP} states that $\left(c^1, (e_\sfr^l)_{l\in\mathcal{I}_{k+1}} \right) \stackrel{W_2}{\rightarrow}
\left(\bc^1, \right.$ $\left.(\be^l)_{l\in\mathcal{I}_{k+1}} \right)$, where $\bc^1$ and  $(\be^l)_{l\in\mathcal{I}_{k+1}}$ are independent.
Thus, by the equivalence of covariance matrices in \eqref{eq:Gamma_k+1_eqv_Pi_k+1} and recalling $c^1_\sfr= \gamma_\sfr$, we have $(E_{\mathcal{I}_\sfr,:}^{k+1},\gamma_\sfr)\stackrel{W_2}{\rightarrow}(\bE_{\sfr}^{k+1},\bar{\gamma})$, where $\bar{\gamma}$ is independent of $\bE_{\sfr}^{k+1} \sim \normal(0, \Pi^{k+1})$.

This completes the proof of the inductive step, and hence, of Theorem \ref{thm:abs_matrix_AMP}.

\subsection{Proof of Corollary \ref{cor:U-AMP}} \label{sec:mod_of_U-AMP}

The abstract AMP recursion in \eqref{eq:modified_U-AMP}, without the block-wise dependence of the functions $f^v_t$ and $f^u_{t+1}$, was analyzed in  \cite{Wan22}. We show how the block-wise dependence can be included without loss of generality, and thereby prove Corollary \ref{cor:U-AMP} by referring to the state evolution result of \cite{Wan22}.

The abstract AMP iteration  for a generalized white noise matrix $\tX\in\mb{R}^{n\times p}$ analyzed in \cite{Wan22} is as follows.  Given an initializer $u^1\in\mb{R}^n$, side information $c^1,\dots,c^{L_c^*}\in\mb{R}^n$ and $d^1,\dots,d^{L_d^*}\in\mb{R}^p$, all independent of $\tX$, the iterates of the abstract AMP recursion are computed as:
\begin{align}
\begin{split}
    h^t&=\sqrt{\delta}\tX^\top u^t-\sum_{s=1}^{t-1}b_{s}^tv^s, \qquad 
    v^t=f_t^v(h^1,\dots,h^t,d^1,\dots,d^{L_d^*}), \\
    e^t&=\sqrt{\delta}\tX v^t-\sum_{s=1}^ta_{s}^tu^s, \qquad
    u^{t+1}=f_{t+1}^u(e^1,\dots,e^t,c^1,\dots,c^{L_c^*}),
\end{split} \label{eq:U-AMP-new}
\end{align}
where the functions $f_t^v:\mb{R}^{t+L_d^*}\rightarrow\mb{R}$, $f_{t+1}^u:\mb{R}^{t+L_c^*}\rightarrow\mb{R}$ act row-wise. The memory coefficients  $\{b_{s}^t\}_{s<t}$ and $\{a_{s}^t\}_{s\leq t}$ are defined below in \eqref{eq:abstract_AMP_onsager}. We have the following assumptions:

    \textbf{(D1)} When $n,p\rightarrow\infty$, we have $n/p=\delta>0$, for fixed $L_c^*$ and $L_d^*$. Furthermore, we have
    \begin{align*}
        (u^1,c^1,\dots,c^{L_c^*})
        \stackrel{W}{\rightarrow}
        (\bu^1,\bc^1,\dots,\bc^{L_c^*})
        \text{ and }
        (d^1,\dots,d^{L_d^*})
        \stackrel{W}{\rightarrow}
        (\bd^1,\dots,\bd^{L_d^*}),
    \end{align*}
    for joint limit laws $(\bu^1,\bc^1,\dots,\bc^{L_c^*})$ and $(\bd^1,\dots,\bd^{L_d^*})$ having finite moments of all orders, where $\E[(\bu^1)^2]\geq 0$. Multivariate polynomials are dense in the real $L^2$-spaces of functions $f:\mb{R}^{L_c^*+1}\rightarrow\mb{R}$ and $g:\mb{R}^{L_d^*}\rightarrow\mb{R}$ with the inner products
    \begin{align*}
        \langle f,\tf\rangle
        :=\E[f(\bu^1,\bc^1,\dots,\bc^{L_c^*})\tf(\bu^1,\bc^1,\dots,\bc^{L_c^*})]
        \text{ and }
        \langle g,\tg\rangle
        :=\E[g(\bd^1,\dots,\bd^{L_d^*})\tg(\bd^1,\dots,\bd^{L_d^*})].
    \end{align*}
    
    \textbf{(D2)}, \textbf{(D3)}, \textbf{(D4)} These are identical to \textbf{(C2)}, \textbf{(C3)}, \textbf{(C4)}, with $L_d^*$ replacing $L_d$.

The state evolution covariance matrices $\Xi^t, \Gamma^t \in \reals^{t \times t}$ are iteratively defined as follows, starting from $\Xi^1=\delta\E[(\bu^1)^2]\in\mb{R}^{1\times 1}$. Given $\Xi^t$, for $t \ge 1$, let $(\bh^1,\dots,\bh^t)\sim\normal(0,\Xi^t)$ independent of $(\bd^1,\dots,\bd^{L_d^*})$ and define
\begin{align*}
    \bv^s=f_s^v(\bh^1,\dots,\bh^s,\bd^1,\dots,\bd^{L_d^*}), \quad  s\in[t].
\end{align*}
Then, $\Gamma^t=(\E[\bv^r\bv^s])_{r,s=1}^t\in\mb{R}^{t\times t}$. Next, let 
$(\be^1,\dots,\be^t)\sim\normal(0,\Gamma^t)$ independent of $(\bu^1,\bc^1,\dots,$ $\bc^{L_c^*})$ and define
\begin{align*}
    \bu^{s+1}=f_{s+1}^u(\be^1,\dots,\be^s,\bc^1,\dots,\bc^{L_c^*}), \quad s \in [t].
\end{align*}
Then, $\Xi^{t+1}=(\delta\cdot\E[\bu^r\bu^s])_{r,s=1}^{t+1}\in\mb{R}^{(t+1)\times(t+1)}$.
The memory coefficients in \eqref{eq:U-AMP-new} are then defined as
\begin{align}
    a_{s}^t=\E\big[\partial_sf_t^v(\bh^1,\dots,\bh^t,\bd^1,\dots,\bd^{L_d^*})\big]
    \text{ and }
    b_{s}^t=\delta\cdot\E\big[\partial_sf_t^u(\be^1.\dots,\be^{t-1},\bc^1,\dots,\bc^{L_c^*})\big],
    \label{eq:abstract_AMP_onsager}
\end{align}
where $\partial_s$ denotes partial derivative in the $s$th argument. The following theorem gives the state evolution result for the abstract AMP recursion.
\begin{theorem}\textup{\cite[Theorem 2.21]{Wan22}}\label{thm:abs_AMP} 
Let $\tX\in\mb{R}^{n\times p}$ be a generalized white noise matrix (as defined in Definition \ref{def:gen_white_noise_matrix}) with variance profile $S\in\mb{R}^{n\times p}$, and let $u^1,c^1,\dots,c^{L_c^*},d^1,\dots,d^{L_d^*}$ be independent of $\tX$ and satisfy Assumptions (D1)--(D4). Further assume that each matrix $\Xi^t$ and $\Gamma^t$ is non-singular. Then for any fixed $t\geq1$, almost surely as $n,p\rightarrow\infty$ with $n/p=\delta\in(0,\infty)$, the iterates of the abstract AMP in \eqref{eq:U-AMP-new} satisfy
\begin{align*}
    & (u^1,c^1,\dots,c^{L_c^*},e^1,\dots,e^t)
    \stackrel{W_2}{\rightarrow}(\bu^1,\bc^1,\dots,\bc^{L_c^*},\be^1,\dots,\be^t), \\
    & (d^1,\dots,d^{L_d^*},h^1,\dots,h^t)
    \stackrel{W_2}{\rightarrow}
    (\bd^1,\dots,\bd^{L_d^*},\bh^1,\dots,\bh^t),
\end{align*}
where $(\bh^1,\dots,\bh^t)\sim\normal(0,\Xi^t)$ and $(\be^1,\dots,\be^t)\sim\normal(0,\Gamma^t)$ are independent of $(\bu^1,\bc^1.\dots,\bc^{L_c^*})$ and $(\bd^1,\dots,\bd^{L_d^*})$.
\end{theorem}

To obtain the U-AMP recursion \eqref{eq:modified_U-AMP} from the abstract AMP recursion in \eqref{eq:U-AMP-new}, we choose $L_c^*=L_c+1$ and $L_d^*=L_d+1$, and the side information vectors $d^{L_d+1}\in\mb{R}^p$ and $c^{L_c+1}\in\mb{R}^n$ are set as
\begin{align}
    d^{L_d+1}=\mathcal{C},
    \quad
    c^{L_c+1}=\mathcal{R}.
    \label{eq:dL_cL_choice}
\end{align}
The functions $f^v_t$ and $f^u_{t+1}$, as well as the initializer $u^1$, are the same as those in \eqref{eq:U-AMP-new}.

With this choice, the empirical distribution of $d^{L_d+1}$ converges to $\bd^{L_d+1}\sim\text{Uniform}([\sfC])$, and the empirical distribution of $c^{L_c+1}$ converges to $\bar{c}^{L_c+1}\sim\text{Uniform}([\sfR])$.
Moreover,  Assumption (D1) is equivalent to Assumption (C1) of Corollary \ref{cor:U-AMP}. To see this, for $\sfr \in [\sfR]$, let $(\bu_\sfr^1,\bc_\sfr^1,\dots,\bc_\sfr^{L_c})$ be random variables  whose joint law equals the \emph{conditional} law of $(\bu^1,\bc^1,\dots,\bc^{L_c})$ given $\bc^{L_c+1}=\sfr$. Similarly, for $\sfc \in [\sfC]$, let $(\bd_\sfc^1,\dots, \bd_\sfc^{L_d})$ be jointly distributed according to the conditional law of $(\bd^1,\dots, \bd^{L_d})$ given
$\bd^{L_d+1}=\sfc$.

The memory coefficients in \eqref{eq:abstract_AMP_onsager} can then be expressed as:
\begin{align*}
    a_{s}^t =  \E\big[\partial_sf_t^v(\bh^1,\dots,\bh^t,\bd^1,\dots,\bd^{L_d+1})\big]
    &=\E\left[\E\big[\partial_sf_t^v(\bh^1,\dots,\bh^t,\bd^1,\dots,\bd^{L_d+1})\mid \bd^{L_d+1}\big]\right] \\
    &=\frac{1}{\sfC}\sum_{\sfc=1}^\sfC\E\big[\partial_sf_t^v(\bh^1,\dots,\bh^t,\bd^1,\dots, \bd^{L_d}, \sfc) \mid \bd^{L_d+1}=\sfc\big] \\
    &= \frac{1}{\sfC}\sum_{\sfc=1}^\sfC\E\big[\partial_sf_t^v(\bh^1,\dots,\bh^t,\bd_\sfc^1,\dots, \bd_\sfc^{L_d}, \sfc) \big],
\end{align*}
where for the last equality we used the fact that $(\bh^1,\dots,\bh^t)$ is independent of $(\bd^1,\dots, \bd^{L_d+1})$.
Similarly, we have
\begin{align*}
  b_{s}^t = \delta    \E\big[\partial_sf_t^u(\be^1,\dots,\be^{t-1},\bc^1,\dots,\bc^{L_c+1})\big]
    &= \delta \E\left[\E\big[\partial_sf_t^u(\be^1,\dots,\be^{t-1},\bc^1,\dots,\bc^{L_c+1})\,|\,\bc^{L_c+1}\big]\right] \\
    &=\frac{\delta}{\sfR}\sum_{\sfr=1}^\sfR\E\big[\partial_sf_t^u(\be^1,\dots,\be^{t-1},\bc_\sfr^1,\dots, \bc_\sfr^{L_c}, \sfr)\big].
\end{align*}
Next,  for $r,s \in [t]$, the $(r,s)$th element of the state evolution matrix $\Gamma^t \in \reals^{t \times t}$ is 
\begin{align*}
    \left(\Gamma^t \right)_{r,s}
    & =\E\big[f_r^v(\bh^1,\dots,\bh^t,\bd^1,\dots,\bd^{L_d+1})f_s^v(\bh^1,\dots,\bh^t,\bd^1,\dots,\bd^{L_d+1})\big] \\
    &=\E\Big[\E\big[f_r^v(\bh^1,\dots,\bh^t,\bd^1,\dots,\bd^{L_d+1})f_s^v(\bh^1,\dots,\bh^t,\bd^1,\dots,\bd^{L_d+1})\,\big|\,\bd^{L_d+1}\big]\Big] \\
    &=\frac{1}{\sfC}\sum_{\sfc=1}^\sfC \E\big[f_r^v(\bh^1,\dots,\bh^t,\bd_\sfc^1,\dots,\bd_\sfc^{L_d}, \sfc)f_s^v(\bh^1,\dots,\bh^t,\bd_\sfc^1,\dots, \bd_\sfc^{L_d}, \sfc)\big] =\frac{1}{\sfC}\sum_{\sfc=1}^\sfC \E[\bv_\sfc^r\bv_\sfc^s],
\end{align*}
where $\bv_\sfc^s=f_s^v(\bh^1,\dots,\bh^t,\bd_\sfc^1,\dots, \bd_\sfc^{L_d}, \sfc)$. Similarly, we have
\begin{align*}
    \left( \Xi^t \right)_{r,s}
    &=\frac{\delta}{\sfR}\sum_{\sfr=1}^\sfR \E\big[f_r^u(\be^1,\dots,\be^{t-1},\bc^1,\dots,\bc^{L_c+1}=\sfr)f_s^u(\be^1,\dots,\be^{t-1},\bc^1,\dots,\bc^{L_c+1}=\sfr)\big] \\
    &=\frac{\delta}{\sfR}\sum_{\sfr=1}^\sfR \E[\bu_\sfr^r\bu_\sfr^s], 
\end{align*}
where for $s \ge 1$, we have $\bu_\sfr^{s+1}=f_{s+1}^u(\be^1,\dots,\be^{s},\bc_\sfr^1,\dots,\bc_\sfr^{L_c}, \sfr)$. 
We have shown that with the choice of side information in \eqref{eq:dL_cL_choice}, the AMP recursion in \eqref{eq:U-AMP-new} matches that in \eqref{eq:modified_U-AMP}, and the corresponding state evolution recursions also match. 
Applying Theorem \ref{thm:abs_AMP} and recalling the definitions of $\mathcal{R}, \mathcal{C}$ from \eqref{eq:S_c_and_S_r_def} gives us Corollary \ref{cor:U-AMP}.

\section{Proof of Lemma \ref{lem:QGT_pot_fun_vanish}}
\label{proof:lem:QGT_pot_fun_vanish}
For $\delta >0$, evaluating $U(b_0;\delta)$ in \eqref{eq:U_function} with  $b_0 :=\delta\sigma^2$ gives
\begin{align}
    U(b_0;\delta)
    &=-\frac{\delta}{2}+\delta\log2+2I\left(\bar{\beta},\sqrt{\frac{1}{2\sigma^2}}\bar{\beta}+G\right).
\end{align}
We pause to state the following auxiliary result.

\begin{lemma} \label{lem:mutual_info_bound}
\textup{\cite[Proposition 7.15]{Don13}}
For a discrete distribution $P_{\bar{\beta}}$ with finite alphabet, we have
\begin{align}
    \limsup_{s\rightarrow\infty} \, 
    \frac{I(\bar{\beta};\sqrt{s}\bar{\beta}+G)}{\frac{1}{2}\log s}=0.
\end{align}
\end{lemma}
Lemma \ref{lem:mutual_info_bound} implies that for any $\Delta>0$,  we have
$    I(\bar{\beta};\sqrt{s}\bar{\beta}+G)
    \leq\frac{\Delta}{2}\log s$ for all sufficiently large $s$. Taking $s=\frac{1}{2\sigma^2}$ further implies that for sufficiently small $\sigma$, we have
\begin{align}
    &I\left(\bar{\beta};\sqrt{\frac{1}{2\sigma^2}}\bar{\beta}+G\right)
    \leq\frac{\Delta}{2}\log\left(\frac{1}{2\sigma^2}\right)
    \nonumber \\
    \implies
    &U(b_0;\delta)
    \leq-\frac{\delta}{2}+\delta\log2+\Delta\log\left(\frac{1}{2\sigma^2}\right).
    \label{eq:pot_fn_upper_bound}
\end{align}
Hence, for any $\Delta>0$, there exists $\sigma_0(\Delta) >0$ such that for all $\sigma < \sigma_0(\Delta)$ we have the following for all
$b\in(0,\Var(\bar{\beta})]$ and  $\delta>0$: 
\begin{align}
    U(b;\delta)-U(b_0;\delta)
    &\stackrel{(a)}{\geq}
    -\delta\left(1-\frac{\delta\sigma^2}{b+\delta\sigma^2}\right)+\delta\log\left(1+\frac{b}{\delta\sigma^2}\right)+2I\left(\bar{\beta};\sqrt{\frac{\delta}{b+\delta\sigma^2}}\bar{\beta}+G\right) \nonumber \\
    &\qquad\qquad+\frac{\delta}{2}-\delta\log2-\Delta\log\left(\frac{1}{2\sigma^2}\right) \nonumber \\
    &\stackrel{(b)}{\geq}
    \delta\log\left(\frac{b}{\delta\sigma^2}\right)-\Delta\log\left(\frac{1}{2\sigma^2}\right)-\frac{\delta}{2}-\delta\log2,
    \label{eq:pot_fn_diff}
\end{align}
where (a) uses \eqref{eq:U_function} and \eqref{eq:pot_fn_upper_bound}, and (b) uses the non-negativity of mutual information. 
Now, for $\sigma <  \sigma_0 (\Delta)$ and 
\begin{equation}
    b > 2\delta e^{\frac{1}{2}}\left(\frac{1}{2}\right)^{\frac{\Delta}{\delta}}\sigma^{2-\frac{2\Delta}{\delta}},
    \label{eq:bLB}
\end{equation}
the lower bound in \eqref{eq:pot_fn_diff} is strictly positive. Therefore,  
$
U(b;\delta)-U(b_0;\delta)>0
$ for  $b$ satisfying \eqref{eq:bLB},
implying that these values of $b$ cannot be minimizers of $U(b;\delta)$. Therefore, for any  $\delta, \Delta>0$ and $\sigma < \sigma_0(\Delta)$, we have:
\begin{align*}
     \max\bigg\{\argmin_{b\in(0,\Var(\bar{\beta})]}U(b;\delta)\bigg\}
    \leq 2\delta e^{\frac{1}{2}}\left(\frac{1}{2}\right)^{\frac{\Delta}{\delta}}\sigma^{2-\frac{2\Delta}{\delta}}
    < \frac{7}{2} \delta (\sigma^{2-\frac{2\Delta}{\delta}}). 
\end{align*}
\qed

\section{Implementation Details} \label{sec:imp_details} 
\paragraph{SC-AMP denoiser and state evolution parameters for QGT.}
The  Bayes-optimal denoiser $f_{k+1}$ in  \eqref{eq:optimal_fk} can be computed using the prior $\bar{\beta} \sim \text{Bernoulli}(\pi)$. For $j \in \mathcal{J}_\sfc$ and $k \ge 1$, we have
\begin{align}
    f_{k}(s,\sfc)
    &=\E\big[\bar{\beta}\,\big|\,(\chi_{\sfc}^{k})^2\bar{\beta}+\chi_{\sfc}^{k}G=s\big] 
     = \frac{\mb{P}\left[\bar{\beta}=1\right]\cdot\mb{P}\left[(\chi_{\sfc}^{k})^2\bar{\beta}+\chi_{\sfc}^{k}G=s|\bar{\beta}=1\right]}{\sum_{\bar{\beta}\in\{0,1\}}\mb{P}[\bar{\beta}]\cdot\mb{P}[(\chi_{\sfc}^{k})^2\bar{\beta}+\chi_{\sfc}^{k}G=s|\bar{\beta}]} \nonumber \\
    & = \frac{\pi\phi\big( (s-(\chi_\sfc^{k})^2)/\chi_{\sfc}^{k}\big)}{\pi\phi\big( (s-(\chi_\sfc^{k})^2)/\chi_{\sfc}^{k}\big)+(1-\pi)\phi(s/\chi^{k}_\sfc)}, \label{eq:f_k_bayes}
\end{align}
where  $\phi(x)$ is the standard normal density. Instead of precomputing the state evolution parameters $\big( \chi^{k}_\sfc \big)$, they can  be estimated from the SC-AMP iterates as:
\begin{equation}
    \left(\widehat{\chi}_\sfc^{k}\right)^2= \sum_{\sfr=1}^{\sfR}{ \widetilde{W}_{\sfr\sfc} \left(\norm{\tTheta_\sfr^{k} }_2^2\right)^{-1}}\qquad \text{ for }\sfc \in [\sfC],
\end{equation}
where $\tTheta_\sfr^{k}=(\tTheta_i^{k})_{i \in \mathcal{I}_\sfr}$ is the restriction of $\tTheta^{k}$ to indices $\mathcal{I}_\sfr$. The derivative $\partial_1 f_{k}(\beta_j^{k},\sfc)$, required for $b^{k}$ in \eqref{eq:SC_AMP}, can be obtained by applying the Quotient rule to the last expression in \eqref{eq:f_k_bayes}.
\paragraph{SC-AMP denoiser and state evolution parameters for pooled data.}
The Bayes-optimal denoiser $f_{k}: \reals^L \times [\sfC] \to \reals^L$ in the SC-AMP algorithm in \eqref{eq:matrix_SC_AMP} is computed as follows:
\begin{align}
    f_k(s, \sfc)
    &=\E[\bar{B} \mid \bar{B}+G_\sfc^k = s]
    =\sum_{l=1}^Le_l\frac{\mb{P}[\bar{B}=e_l]\mb{P}[ \bar{B}+G_\sfc^k = s|\bar{B}=e_l]}{\mb{P}[ \bar{B}+G_\sfc^k = s]} \nonumber \\
    &
    \overset{(a)}{=}\frac{\sum_{l=1}^L\pi_l e_l \, \exp \left( -\frac{1}{2}(e_l - s)^\sT \left(\Tau_\sfc^{k}\right)^{-1} (e_l - s)\right)}{\sum_{l=1}^L\pi_l \, \exp \left( -\frac{1}{2}(e_l - s)^\sT \left(\Tau_\sfc^{k}\right)^{-1} (e_l - s)\right)},
    \label{eq:Bayes_AMP_fk}
\end{align}
where (a) uses $G_\sfc^k \sim \normal\left(0, \Tau_\sfc^{k}\right)$. The state evolution parameters $\{ \phi_\sfr^k \}_{\sfr \in [\sfR]}$  are estimated from the matrix SC-AMP iterates as follows:
\begin{equation}
    \hat{\phi}_\sfr^k = \frac{1}{n/\sfR}\sum_{i\in\mathcal{I}_{\sfr}} (\tTheta_i^k)^\top {\tTheta_i^k} \, ,\qquad \text{ for }\sfr \in [\sfR].
\end{equation}
The Jacobian $f^\prime_k\left(B^k_{j,:}, \sfc\right)$ in \eqref{eq:U_k} can be computed for all $j\in [p]$ by applying the Quotient rule to \eqref{eq:Bayes_AMP_fk}, following the method in \cite[App. D.1]{Tan23d}.

\paragraph{Potential function.}
To generate the curves in Figure \ref{fig:potential_fn}, for each $\delta$,  the potential function $U(b;\delta)$ in \eqref{eq:U_function} is evaluated at 500 data points between 0 and $\text{Var}(\bar{\beta})$ (for $\pi=0.1$, $\text{Var}(\bar{\beta})=\pi - \pi^2=0.09$). To analyze the noiseless QGT model, we set $\sigma=1\times 10^{-30}$ to avoid computational instability. The mutual information term in \eqref{eq:U_function} is computed via numerical integration (instead of Monte Carlo methods) to ensure that the curves are smooth.

{\small{
\bibliographystyle{IEEEtran}
\bibliography{NT_References}
}}

\end{document}